\title{Improved Analysis of Higher Order Random Walks and Applications}
\author{Vedat Levi Alev \thanks{Supported by the David R.~Cheriton Graduate
Scholarship and the NSERC Discovery Grant 2950-120715. E-mail: \href{mailto:vlalev@uwaterloo.ca}{vlalev@uwaterloo.ca}}
\and Lap Chi Lau\thanks{Supported by NSERC Discovery Grant 2950-120715. E-mail: \href{mailto:lapchi@uwaterloo.ca}{lapchi@uwaterloo.ca}}
}
\date{}
\begin{document}

\begin{titlepage}
    \def\thepage{}
    \thispagestyle{empty}

    \maketitle
    \begin{abstract}
        The motivation of this work is to extend the techniques of higher order random walks on simplicial complexes to analyze mixing times of Markov chains for combinatorial problems.
        Our main result is a sharp upper bound on the second eigenvalue of the down-up walk on a pure simplicial complex, in terms of the second eigenvalues of its links.
        We show some applications of this result in analyzing mixing times of Markov chains, 
        including sampling independent sets of a graph and sampling common independent sets of two partition matroids.
    \end{abstract}

\end{titlepage}

\newpage

\section{Introduction}\label{sec:intro}

Consider the following random walks~\cite{KaufmanM17,DinurK17,KaufmanO18,DiksteinDFH18} defined\footnote{All the
definitions in the introduction will be formally defined again in a more general setting in \cref{sec:prelim}.} on a simplicial complex $X$. 
Initially, the random walk starts from an arbitrary face $\alpha_1$ of
dimension $k$ in $X$. 
\begin{itemize}
    \item {\sc Down-Up Walk}: In each step $t \geq 1$, we choose a uniform random
        element $i \in \alpha_t$ and delete $i$ from $\alpha_t$, and set
        $\alpha_{t+1}$ to be a uniform random face of dimension $k$ in $X$ that
        contains $\alpha_t\setminus \{i\}$.
        This is called the $k$-th down-up walk of $X$, and its transition matrix is denoted by $\DownW_k$.
    \item {\sc Up-Down Walk}:
        In each step $t \geq 1$, we choose a uniform random face $\beta$ of dimension
        $k+1$ in $X$ that contains $\alpha_t$, and choose a uniform random
        element $i \in \beta$ and set $\alpha_{t+1} = \beta \setminus \{i\}$.
        This is called the $k$-th up-down walk of $X$, and its transition matrix is denoted by $\UpW_k$.
\end{itemize}
The stationary distribution of these random walks is the uniform distribution on the faces of dimension $k$ in the simplicial complex $X$. 
The question of interest is the mixing time of these random walks, i.e.~the number of steps $t$ required for the distribution of $\alpha_t$ to be close to the uniform distribution.

A graph is a simplicial complex of dimension $1$.
The transition matrix of the lazy random walk on a graph is $\UpW_0$.
Fundamental results in spectral graph theory state that (i) the mixing time of
the lazy random walk is small, if and only if (ii) the second eigenvalue of $\UpW_0$ is small, if and only if
(iii) the graph is an expander graph.
See~\cite{HooryLW06, WilmerLP09} for surveys on this topic.

Since the theory of expander graphs has many applications, there are various
motivations in generalizing these results for graphs to simplicial complexes. 
Several definitions of high-dimensional expanders have been studied in the literature (e.g.~\cite{LinialM06,Gromov10,ParzanchevskiRT16,DotterrerKW16,KaufmanM17,Oppenheim18}),
and these results have found interesting applications in discrete geometry, 
complexity theory, coding theory, and property testing (e.g.~\cite{LinialM06,MeshulamW09, FoxGLNP11, KaufmanL14, EvraK16, KaufmanM16a, KaufmanKL16, DinurK17,DinurHKNT19}).

\subsubsection*{Local Spectral Expanders}

In this paper, we consider the definition of $\gamma$-local-spectral expanders developed in~\cite{KaufmanM17,DinurK17,KaufmanO18,Oppenheim18,DiksteinDFH18} for the study of random walks on simplicial complexes.
The local structures of a simplicial complex are described by its links.
The link $X_{\alpha}$ of a face $\alpha \in X$ is defined as the simplicial
complex $X_\alpha = \{\beta \setminus \alpha: \beta \in X, \beta \supset \alpha\}$.
The graph $G_{\alpha}=(V_{\alpha},E_{\alpha})$ of the link $X_{\alpha}$ is defined as follows: (i) each vertex $i$ in $V_{\alpha}$ corresponds to a singleton
$\{i\}$ in $X_{\alpha}$, (ii) two vertices $i,j \in V_{\alpha}$ have an edge in
$E_{\alpha}$ if and only if $\set{i,j}$ is contained in some face of
$X_{\alpha}$, (iii) the weight $w_{ij}$ of an edge $ij \in E_{\alpha}$ is
proportional to the number of maximal faces in $X_{\alpha}$ that contains
$\set{i,j}$.

Informally, a simplicial complex $X$ is a $\gamma$-local-spectral expander
if $G_{\alpha}$ is an expander graph for every $\alpha\in X$. 
In the following, we say $X$ is a pure simplicial complex if every maximal face of $X$ is of the same dimension, and we call this the dimension of $X$.
\begin{defn}[$\gamma$-local-spectral expanders~\cite{Oppenheim18,KaufmanO18}] \label{d:local}
    A $d$-dimensional pure simplicial complex $X$ is a $\gamma$-local-spectral expander if $\lambda_2(G_{\alpha}) \leq \gamma$ for every face $\alpha \in X$ of dimension up to $d-2$, 
    where $\lambda_2(G_{\alpha})$ denotes the second largest eigenvalue
    of the random walk matrix of $G_{\alpha}$ (where the transition probabilities are proportional to the edge weights).
\end{defn}

\subsubsection*{Kaufman-Oppenheim Theorem}

Kaufman and Oppenheim~\cite{KaufmanO18} proved that the $k$-th down-up walk
and the $(k-1)$-th up-down walk have a non-trivial spectral gap as long as the
simplicial complex is a $\gamma$-local-spectral expander for $\gamma < 2/k^2$.  

\begin{theorem}[\cite{KaufmanO18}]\label{t:KO}
    Let $X$ be a pure $d$-dimensional simplicial complex.
    Suppose $X$ is a $\gamma$-local-spectral expander. 
    Then, for every $0 \leq k \leq d$,
    \[ 
    \lambda_2(\DownW_{k}) =\lambda_2(\UpW_{k-1}) \le 1 - \frac{1}{k+1} +
    \frac{k\gamma}{2},
    \]
\end{theorem}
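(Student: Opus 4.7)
The plan is to prove the inequality by induction on $k$, after first reducing to $\UpW_{k-1}$. Because $\DownW_k$ and $\UpW_{k-1}$ can be written as $D_k U_{k-1}$ and $U_{k-1} D_k$ with $D_k$ and $U_{k-1}$ mutually adjoint in the reversible inner products on levels $k$ and $k-1$, they share identical non-zero spectra, so the equality $\lambda_2(\DownW_k) = \lambda_2(\UpW_{k-1})$ is automatic. Throughout, I would work with $\UpW_{k-1}$, bounding its Rayleigh quotient on functions $g$ at level $k-1$ orthogonal to the all-ones vector (with respect to the induced stationary measure).

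The base case $k = 1$ is essentially free: $\UpW_0$ is the lazy random walk on the $1$-skeleton of $X$, which coincides with the graph $G_\emptyset$ of the link of the empty face. One has $\UpW_0 = \tfrac12(I + M_{G_\emptyset})$ where $M_{G_\emptyset}$ is the (non-lazy) random walk on $G_\emptyset$, so by the hypothesis $\lambda_2(M_{G_\emptyset}) \le \gamma$ one gets $\lambda_2(\UpW_0) \le \tfrac12 + \tfrac{\gamma}{2}$, matching the statement.

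For the inductive step, the plan is to localize the Dirichlet form of $\UpW_{k-1}$ to links of vertices via a Garland-style decomposition. For each vertex $v$ and each function $g$ at level $k-1$, set $g_v(\sigma) := g(\sigma \cup \{v\})$ as a function on $(k-2)$-faces of the link $X_v$, and split $g_v = \bar g(v)\cdot\mathbf{1} + \tilde g_v$, where $\bar g(v)$ is the mean of $g_v$ under the stationary measure of $X_v$ and $\tilde g_v$ is mean-zero. I would then express the Dirichlet form $\langle g,(I-\UpW_{k-1})g\rangle$ as a convex combination of a ``global'' contribution depending only on the vertex-function $v \mapsto \bar g(v)$, controlled by the spectral expansion of $G_\emptyset$, and ``local'' contributions depending on the residuals $\tilde g_v$, each controlled by $\lambda_2(\UpW_{k-2}^{X_v})$. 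Because every link of a $\gamma$-local-spectral expander is again a $\gamma$-local-spectral expander of lower dimension, the inductive hypothesis applies to every $\UpW_{k-2}^{X_v}$; combining the link bound $1 - \tfrac1k + \tfrac{(k-1)\gamma}{2}$ with the expansion bound on the global part and doing the arithmetic should produce the claimed $1 - \tfrac{1}{k+1} + \tfrac{k\gamma}{2}$.

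The main obstacle is establishing the Dirichlet-form decomposition with the correct coefficients. This requires threading the stationary measure of $\UpW_{k-1}$ on $(k-1)$-faces of $X$ consistently through to the stationary measures of $\UpW_{k-2}^{X_v}$ on $(k-2)$-faces of each link and to the measure on vertices induced by the $1$-skeleton walk; verifying that the cross-terms between the vertex-averaged and the link-fluctuation components vanish under the appropriate inner products so that the split is genuinely orthogonal; and identifying the precise convex weights so that, after inserting the inductive bound and the local-expansion bound, the arithmetic closes up to give the target expression. Once this decomposition is nailed down, the inductive step is a short calculation; essentially all of the content of the proof lives in this algebraic bookkeeping.
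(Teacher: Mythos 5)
Your reduction to $\UpW_{k-1}$ and your base case are fine, but the inductive step rests entirely on a decomposition that you announce rather than establish, and as formulated it is not something that can be "nailed down" by bookkeeping alone. If you localize to vertex links, then for $\beta \in X(k)$ and $v \in \beta$ one has $[\Up_{k-1}g](\beta) = \frac{1}{k+1}\,g(\beta\setminus v) + \frac{k}{k+1}\,[\Up^{X_v}_{k-2}g_v](\beta\setminus v)$, so the quadratic form $\langle g, \UpW_{k-1}g\rangle_{\Pi_{k-1}}$ necessarily contains cross terms between the link function $g_v$ and the values of $g$ on $(k-1)$-faces of $X_v$ that do not contain $v$; in addition, every $(k-1)$-face lies in $k$ different vertex links, and the vertex-average function $v \mapsto \bar g(v)$ is coupled to the fluctuations $\tilde g_v$ through $\Up_{k-1}$, so the claimed orthogonal split into a "global" form on $G_{\varnothing}$ plus "local" forms on the links is not an identity and the cross terms do not obviously vanish. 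Since the precise constant $\frac{k\gamma}{2}$ is exactly a question of these coefficients, the step you defer as "algebraic bookkeeping" is the whole content of the theorem, and it is missing; nothing in the proposal shows that a vertex-link decomposition with the required exactness (or with losses small enough to still give $1-\frac{1}{k+1}+\frac{k\gamma}{2}$) exists.

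This is also where your route diverges from the argument the paper's machinery gives. There, one never decomposes into global and local pieces at all: the Garland identities of \cref{lem:garland} express $\langle \eff,\eff\rangle_{\Pi_j}$, $\langle \eff,\DownW_j\eff\rangle_{\Pi_j}$ and $\langle \eff,\NUpW_j\eff\rangle_{\Pi_j}$ \emph{exactly} as expectations over links of $(j-1)$-dimensional faces (for the level-$(k-1)$ walk these are $(k-2)$-faces, not vertices), and the only spectral input is the per-link inequality $\Emm_\alpha - \Jay_\alpha \preceq_{\Pi_0^\alpha} \lambda_2(\Emm_\alpha)\,\Ide$, so no cross-term analysis ever arises. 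The induction is then on the walk level inside the same complex rather than on dimension via vertex links: the weaker comparison $\NUpW_j - \DownW_j \preceq_{\Pi_j} \gamma\,\Ide$ (\cref{eq:updownrelw}, which is the Kaufman--Oppenheim-type statement), together with $\UpW_j = \frac{j+1}{j+2}\NUpW_j + \frac{1}{j+2}\Ide$ and $\lambda_2(\DownW_{j+1}) = \lambda_2(\UpW_j)$, yields a recursion whose solution is exactly $1-\frac{1}{k+1}+\frac{k\gamma}{2}$ (and the stronger \cref{lem:updownrel} gives \cref{thm:main}). To rescue your plan you would have to prove an exact vertex-link decomposition with explicit coefficients and control of the cross terms, or simply switch to the codimension-one localization of \cref{ss:main}, where the induction closes immediately.
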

\cref{t:KO} states that the spectral gap of $\DownW_{k}$ is at least
$g:=1-\lambda_2(\DownW_{k}) \geq \frac{1}{k+1} - \frac{k \gamma}{2}$, which implies by a standard argument (see~\cref{thm:spec-mix-bd}) that the mixing time of these walks is at most $O(\frac{(k+1)\log(n)}{g})$ where $n$ is the size of the ground set of $X$.
For example, if $\gamma \leq 0$, then the mixing time of $\DownW_{k}$ is at
most $O(k^2 \log(n))$.

\cref{t:KO} can also be used to bound the spectral gap of certain ``longer''
random walks on simplicial complexes (see \cref{cor:m1m2dk} and
\cref{ss:long}). 
Dinur and Kaufman~\cite{DinurK17} use these results with the Ramanujan
complexes of \cite{LubotzkySV05} to construct efficient agreement testers, which have applications to PCP constructions.
Recently, these ideas have also found applications in coding theory~\cite{DinurHKNT19}.

\subsubsection*{Oppenheim's Trickling Down Theorem}

Kaufman-Oppenheim \cref{t:KO} provides a way to bound the mixing time of the down-up walks and up-down walks.
To apply the theorem, however, one needs to check that $\lambda_2(G_{\alpha})
\leq \gamma$ for every face $\alpha \in X$ of dimension at most $d-2$.
This is not an easy task.
There are exponentially many graphs $G_{\alpha}$ to check, 
and these graphs are defined implicitly where computing the edge weights involve non-trivial counting problems.
A very useful result by Oppenheim~\cite{Oppenheim18} makes this task easier, 
by relating the second eigenvalue of the graph of a lower-dimensional link to that of a higher-dimensional link.

\begin{theorem}[\cite{Oppenheim18}] \label{t:opp}
    Let $X$ be a pure $d$-dimensional simplicial complex.
    Suppose $\lambda_2(G_{\beta}) \leq \gamma\le \frac{1}{2}$  for every face
    $\beta$ of dimension $k$, and $G_{\alpha}$ is connected for every face
    $\alpha$ of dimension $k-1$.
    Then, for every face $\alpha$ of dimension $k-1$, it holds that
    \[
        \lambda_2(G_\alpha) \leq \frac{\gamma}{1-\gamma}.
    \]
\end{theorem}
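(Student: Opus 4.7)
The plan is to fix a face $\alpha$ of dimension $k-1$, let $M_\alpha$ denote the walk matrix of $G_\alpha$ with stationary distribution $\pi_\alpha$, and analyze an eigenfunction $f : V_\alpha \to \mathbb{R}$ satisfying $M_\alpha f = \mu f$ for $\mu = \lambda_2(G_\alpha)$. Since $G_\alpha$ is connected, I may take $f$ orthogonal to the constants in the $\pi_\alpha$-inner product. For each $i \in V_\alpha$ let $f_i$ be the restriction of $f$ to the neighborhood $V_{\alpha \cup \{i\}}$. The first observation I need is that the way weights propagate through the complex makes $\pi_{\alpha \cup \{i\}}$ coincide with the one-step distribution from $i$ in $G_\alpha$, so the $\pi_{\alpha \cup \{i\}}$-mean of $f_i$ equals $(M_\alpha f)(i) = \mu f(i)$.

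Next, I would apply the local spectral hypothesis at each $i$: since $\lambda_2(G_{\alpha \cup \{i\}}) \leq \gamma$, splitting $f_i$ into its mean part and the part orthogonal to constants gives
\[
    \langle f_i, M_{\alpha \cup \{i\}} f_i \rangle_{\pi_{\alpha \cup \{i\}}} \;\le\; \gamma\, \|f_i\|_{\pi_{\alpha \cup \{i\}}}^{2} + (1-\gamma)\,\mu^{2} f(i)^{2}.
\]
I would then average this inequality with weight $\pi_\alpha(i)$. The crux is a pair of local-to-global identities:
\[
    \sum_{i} \pi_\alpha(i)\, \|f_i\|_{\pi_{\alpha \cup \{i\}}}^{2} = \|f\|_{\pi_\alpha}^{2}, \qquad \sum_{i} \pi_\alpha(i)\, \langle f_i, M_{\alpha \cup \{i\}} f_i \rangle_{\pi_{\alpha \cup \{i\}}} = \langle f, M_\alpha f \rangle_{\pi_\alpha}.
\]
The first follows directly from stationarity, using $\sum_i \pi_\alpha(i) \pi_{\alpha \cup \{i\}}(j) = \pi_\alpha(j)$. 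The second is the main technical step: writing everything in terms of $c(\sigma)$, the number of maximal faces of $X$ containing $\sigma$, both sides reduce, up to a common normalizer, to $\sum_{j \ne k} c(\alpha \cup \{j,k\})\, f(j) f(k)$. What makes this go through is the double-counting relation $\sum_{i} c(\alpha \cup \{i,j,k\}) = (d-k-1)\, c(\alpha \cup \{j,k\})$, whose factor of $d-k-1$ cancels exactly against the denominator produced by the link-level normalization. I expect this bookkeeping to be the main obstacle.

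Plugging these identities into the averaged inequality, together with $\langle f, M_\alpha f \rangle = \mu \|f\|^{2}$, collapses everything to the quadratic inequality $\mu \le \gamma + (1-\gamma)\mu^{2}$, which factors as $(1-\gamma)\bigl(\mu - 1\bigr)\bigl(\mu - \gamma/(1-\gamma)\bigr) \ge 0$. Since connectedness of $G_\alpha$ forces $\mu < 1$, the first factor is negative and so $\mu \le \gamma/(1-\gamma)$, as required; the assumption $\gamma \le 1/2$ enters only to ensure $\gamma/(1-\gamma) \le 1$ so that the conclusion is non-vacuous.
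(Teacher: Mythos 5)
The paper itself does not prove this statement: it is quoted as a black box from Oppenheim's work (it appears again as \cref{thm:oppenheim} in the preliminaries, also without proof), so there is no internal argument to compare against. Your reconstruction is correct, and it is essentially Oppenheim's original localization proof. The individual steps all check out: since the edge weights of $G_\alpha$ are the induced ones, $\pi_{\alpha\cup\{i\}}(j)=M_\alpha(i,j)$, so the $\pi_{\alpha\cup\{i\}}$-mean of $f_i$ is indeed $(M_\alpha f)(i)=\mu f(i)$; the first averaging identity is exactly stationarity; and the second follows, as you predict, from the purity-based double count $\sum_i c(\alpha\cup\{i,j,k\})=(d-k-1)\,c(\alpha\cup\{j,k\})$, whose factor $(d-k-1)$ cancels the $(d-k)(d-k-1)$ normalization of the link walk so that the average of the local quadratic forms is exactly $\langle f, M_\alpha f\rangle_{\pi_\alpha}$ (this is the same Garland-style localization that the paper records in \cref{lem:garland} for the walk operators, transported one level inside the link). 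Averaging the local bound then gives $\mu\le\gamma+(1-\gamma)\mu^2$, i.e.\ $\bigl((1-\gamma)\mu-\gamma\bigr)(\mu-1)\ge 0$, and connectivity of $G_\alpha$ forces $\mu<1$, yielding $\mu\le\gamma/(1-\gamma)$. Your closing remark is also accurate: the hypothesis $\gamma\le\tfrac12$ is not needed anywhere in the argument beyond $\gamma<1$; it only makes the stated bound nontrivial.
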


Applying this theorem inductively, we can reduce the problem of bounding
$\lambda_2(G_{\alpha})$ for every $\alpha$ to bounding $\lambda_2(G_{\beta})$
for only those faces $\beta$ of highest dimension. 

\begin{corollary}[\cite{Oppenheim18}] \label{c:opp}
    Let $X$ be a pure $d$-dimensional simplicial complex.
    Suppose $\lambda_2(G_{\beta}) \leq \gamma \le \frac{1}{d}$ for every face $\beta$ of dimension $d-2$, 
    and $G_{\alpha}$ is connected for every face $\alpha$.
    Then, for every $k \leq d-2$, and for every face $\alpha$ of dimension $k$, it holds that
    \[
        \lambda_2(G_{\alpha}) \leq \frac{\gamma}{1-(d-2-k)\gamma}.
    \]
\end{corollary}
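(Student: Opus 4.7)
The plan is a straightforward downward induction on the dimension $j$ of the face, starting from $j = d-2$ (where the bound holds by hypothesis) and peeling off one dimension at a time by invoking \cref{t:opp}. Define $\gamma_j := \sup\{\lambda_2(G_\alpha) : \alpha \in X,\, \dim \alpha = j\}$ for $j = k, k+1, \ldots, d-2$, and aim to show inductively that $\gamma_j \leq \gamma/(1 - (d-2-j)\gamma)$.

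The base case $j = d-2$ is immediate from the hypothesis. For the inductive step, assuming the bound at dimension $j$, I would first verify that $\gamma_j \leq \tfrac{1}{2}$, which is the side condition needed to apply \cref{t:opp} at dimension $j$. Unpacking the inductive bound, this reduces to checking $(d-j)\gamma \leq 1$, and since $j \geq 0$ and $\gamma \leq 1/d$ by hypothesis, this holds. Combined with the assumption that $G_\alpha$ is connected for every face $\alpha$ (in particular those of dimension $j-1$), \cref{t:opp} applies and yields $\gamma_{j-1} \leq \gamma_j/(1 - \gamma_j)$.

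The last step is an algebraic manipulation: substituting the inductive bound into the monotonically increasing function $x \mapsto x/(1-x)$ gives
\[
\gamma_{j-1} \leq \frac{\gamma/(1 - (d-2-j)\gamma)}{1 - \gamma/(1 - (d-2-j)\gamma)} = \frac{\gamma}{1 - (d-2-j)\gamma - \gamma} = \frac{\gamma}{1 - (d-1-j)\gamma},
\]
which is exactly the desired bound at dimension $j-1$. Iterating down to $j = k$ completes the proof.

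There is no real obstacle here: \cref{t:opp} does all the heavy lifting, and the only thing to be careful about is ensuring the side condition $\gamma_j \leq 1/2$ is preserved at every step of the induction, which is why the hypothesis $\gamma \leq 1/d$ (rather than just $\gamma \leq 1/2$) is needed: one must survive $d-2-k$ applications of the trickling-down recursion without the denominator $1-(d-2-j)\gamma$ collapsing.
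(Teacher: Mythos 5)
Your proposal is correct and follows essentially the same route as the paper: a downward induction applying Oppenheim's trickling-down theorem (\cref{t:opp}) one dimension at a time, with the hypothesis $\gamma \leq 1/d$ used precisely to keep the side condition $\gamma_j \leq 1/2$ (equivalently $(d-j)\gamma \leq 1$) valid at every step so the denominators never collapse. The algebraic step $\frac{\gamma/(1-(d-2-j)\gamma)}{1-\gamma/(1-(d-2-j)\gamma)} = \frac{\gamma}{1-(d-1-j)\gamma}$ and the monotonicity of $x \mapsto x/(1-x)$ are exactly as in the standard argument the paper invokes.
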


\cref{c:opp} is useful for two reasons:
First, note that the weight of every edge in $G_{\beta}$ for face $\beta$ of dimension $d-2$ is either zero or one,
which makes the task of bounding its second eigenvalue more tractable.
Second, if one can prove that $\lambda_2(G_{\beta}) = O(\frac{1}{d^2})$ for every face $\beta$ of dimension $d-2$ and $G_\alpha$ is connected for every face $\alpha$, then one can conclude that $\lambda_2(G_{\alpha}) = O(\frac{1}{d^2})$ for every face $\alpha$ and hence the simplicial complex is a $O(\frac{1}{d^2})$-local-spectral expander.
So, the reduction of Oppenheim is basically lossless in the regime where
Kaufman-Oppenheim's \cref{t:KO} applies.

\subsubsection*{Analyzing Mixing Times of Markov Chains}

Recently, Anari, Liu, Oveis Gharan, and Vinzant~\cite{AnariLOV18} found a
striking application of \cref{t:KO} and \cref{c:opp} in proving
the matroid expansion conjecture of Mihail and Vazirani \cite{MihailV87},
answering a long standing open question in Markov chain Monte Carlo
methods.

To illustrate their result, consider the special case of sampling a random spanning tree from a graph $G=(V,E)$.
Let $X$ be the simplicial complex where the ground set is $E$ and each acyclic subgraph of $G$ is a face of $X$.
Then $X$ is a pure $d$-dimensional simplicial complex, where $d = |V|-2$ and the spanning trees of $G$ are the maximal faces of $X$.
Note that $\DownW_{d}$ in $X$ is exactly the natural Markov chain on the spanning trees of $G$, where in each step we delete a uniformly random edge $e$ from the current spanning tree $T$ and add a uniformly random edge $f$ so that $T-e+f$ is a spanning tree.
So, the problem of proving the Markov chain on spanning trees is fast mixing is equivalent to upper bounding $\lambda_2(\DownW_d)$ of the simplicial complex $X$.

Using the nice structures of matroids, Anari, Liu, Oveis Gharan, and Vinzant~\cite{AnariLOV18} 
 showed that the graph $G_{\beta}$ is a complete multi-partite graph for every face $\beta$ of dimension $d-2$, and this implies that $\lambda_2(G_{\alpha}) \leq 0$ for every face $\beta$ of dimension $d-2$.
Thus, it follows from Oppenheim's \cref{c:opp} that $\lambda_2(G_{\alpha})
\leq 0$ for every face $\alpha$.\footnote{The result that every matroid complex is a $0$-local-spectral expander was also proved by Huh and Wang~\cite{HuhW17},
using techniques from Hodge theory for matroids~\cite{AdirpasitoHK18} instead of Oppenheim's theorem.}
Then Kaufman-Oppenheim's \cref{t:KO} implies that
$\lambda_2(\DownW_{d}) \leq 1-\frac{1}{d+1}$, 
and thus the mixing time of the
Markov chain of sampling matroid bases is at most $O(d^2 \log n)$.
This provides the first FPRAS for counting the number of matroid bases,
and also proves that the basis exchange graph of a matroid is an expander graph. 

The proof of the matroid expansion conjecture shows that the techniques developed in higher order random walks provide a new simplicial complex approach to analyze mixing times of Markov chains.
It is thus natural to investigate whether this approach can be extended to other problems. 
Here we would like to discuss some limitations of the current techniques.
It can be shown that $\lambda_2(G_{\beta}) \leq 0$ only if $G_{\beta}$ is a
complete multi-partite graph~\cite{GodsilWP} and more generally a
$0$-local-expander is a weighted matroid
complex~\cite{BrandenH19}, and so the same analysis as in~\cite{AnariLOV18} only works for matroids.
Note that Kaufman-Oppenheim \cref{t:KO} only applies when $\lambda_2(G_{\alpha}) \leq O(\frac{1}{d^2})$ for every face $\alpha$ up to dimension $d-2$.
For many problems that we have considered, it does not hold that $\lambda_2(G_{\beta}) \leq O(\frac{1}{d^2})$ even when restricted to faces $\beta$ of dimension $d-2$.

\subsection{Main Result}\label{sec:intromain}

The main motivation of this work is to extend this simplicial complex approach to analyze mixing times of more general Markov chains.
Our main result is the following improved eigenvalue bound for higher order random walks.

\begin{theorem} \label{t:main}
    Let $X$ be a pure $d$-dimensional simplicial complex.
    Define 
    \[\gamma_j := \max_{\alpha} \{ \lambda_2(G_{\alpha}) : \alpha \in X~\textrm{ and }~\alpha {\rm~is~of~dimension~} j \},\] 
    For any $0 \leq k \leq d$,
    \[ \lambda_2(\DownW_{k}) = \lambda_2(\UpW_{k-1}) \le 1 - \frac{1}{k+1} \prod_{j = - 1}^{k-2} (1 - \gamma_j).\]
\end{theorem}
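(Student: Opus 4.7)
The plan is to prove the theorem by induction on $k$. The base cases are straightforward: $k = 0$ is vacuous (the empty product equals $1$ and $\lambda_2(\UpW_{-1}) \leq 0$ is immediate), while $k = 1$ reduces to the familiar identity $\lambda_2(\UpW_0) = \tfrac{1}{2}(1+\gamma_{-1})$ for the lazy random walk on the skeleton graph $G_\emptyset = G_{-1}$, which matches the stated bound exactly.

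The crux of the inductive step is to establish the recursive inequality
\[
\lambda_2(\UpW_{k-1}) \;\leq\; \frac{1}{k+1} + \frac{k}{k+1}\Bigl[\gamma_{-1} + (1-\gamma_{-1})\,\overline{\lambda}_{k-2}\Bigr],
\]
where $\overline{\lambda}_{k-2} := \max_{v} \lambda_2(\UpW_{k-2}^{X_v})$ is the worst second eigenvalue, taken over vertices $v$, of the up-down walk in the link one dimension lower. Once this is in hand, the induction closes cleanly: each link $X_v$ is a pure $(d-1)$-dimensional complex, and a face of dimension $j$ in $X_v$ corresponds bijectively to a face of dimension $j+1$ in $X$ containing $v$, which gives $\gamma_j^{X_v} \leq \gamma_{j+1}$. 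Invoking the inductive hypothesis inside each $X_v$ yields $\overline{\lambda}_{k-2} \leq 1 - \tfrac{1}{k}\prod_{j=0}^{k-2}(1-\gamma_j)$, and substituting this into the recursive inequality and regrouping the $\tfrac{1}{k+1}$, $\tfrac{k\gamma_{-1}}{k+1}$, and $\tfrac{k(1-\gamma_{-1})}{k+1}$ contributions collapses everything to the desired $1 - \tfrac{1}{k+1}\prod_{j=-1}^{k-2}(1-\gamma_j)$.

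The main obstacle is proving the recursive inequality itself. My approach would start from the standard decomposition $\UpW_{k-1} = \tfrac{1}{k+1}I + \tfrac{k}{k+1}\mathcal{M}$, where $\mathcal{M}$ is the non-lazy ``swap'' operator that exchanges a vertex of a face with an external vertex such that the result is still in $X$; the task then reduces to showing $\lambda_2(\mathcal{M}) \leq \gamma_{-1} + (1-\gamma_{-1})\overline{\lambda}_{k-2}$. I would attempt this by analyzing the quadratic form $\langle f, \mathcal{M} f\rangle$ for a mean-zero function $f$ on the faces of dimension $k-1$ via an orthogonal decomposition adapted to the link structure: one piece whose contribution is controlled by the walk on the global graph $G_{-1}$ and bounded by $\gamma_{-1}$, plus a complementary piece that is mean-zero within each vertex link and whose contribution is controlled by the link walks $\UpW_{k-2}^{X_v}$, giving a bound of $(1-\gamma_{-1})\overline{\lambda}_{k-2}$. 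Conceptually this is an Oppenheim-style trickling-down (\cref{t:opp}) promoted from link graphs up to the full up-down walks, with the crucial refinement over \cref{t:KO} being to propagate the global parameter $\gamma_{-1}$ and the link parameters separately instead of merging them through the loose bound $\lambda_2(G_\alpha) \leq \gamma$ for a uniform $\gamma = \max_j \gamma_j$.
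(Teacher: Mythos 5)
Your proposal is correct, and it takes a genuinely different route from the paper. The paper inducts on the level within the same complex: its key step is the operator inequality $\NUpW_k - \DownW_k \preceq_{\Pi_k} \gamma_{k-1}\,(\Ide - \DownW_k)$ (\cref{lem:updownrel}), proved by Garland localization at $(k-1)$-dimensional faces, after which $\DownW_k$ is exchanged for $\UpW_{k-1}$ one level down on $X$ itself; so the $\gamma_j$ are consumed level by level inside one complex. You instead recurse into the vertex links, and your recursive inequality is true and provable essentially as you sketch, but making it rigorous requires two ingredients that are not in the paper: (i) the vertex-level Garland identity $\langle \eff,\NUpW_{k-1}\eff\rangle_{\Pi_{k-1}} = \Exp_{v\sim\Pi_0}\langle \eff_v,\NUpW_{k-2}^{X_v}\eff_v\rangle_{\Pi^v_{k-2}}$, which follows from a direct kernel computation (every non-lazy transition $\alpha\to\alpha'$ fixes the $k-1$ vertices of $\alpha\cap\alpha'$, and conditioning on any one of them gives a non-lazy step in that vertex link); and (ii) after splitting each $\eff_v$ into its link-mean $c_v$ plus a link-mean-zero part, the identification $\Exp_{v\sim\Pi_0}[c_v^2]=\langle\eff,\,\Up_{k-2}\cdots\Up_0\Dee_1\cdots\Dee_{k-1}\eff\rangle_{\Pi_{k-1}}$, whose nonzero spectrum is that of $\UpW_{0,k-1}$, and one checks the exact marginalization $\UpW_{0,k-1}=\tfrac1k\Ide+\tfrac{k-1}{k}\Emm_\varnothing$; thus the ``global'' piece is bounded by $\tfrac1k+\tfrac{k-1}{k}\gamma_{-1}$ rather than by $\gamma_{-1}$ alone, and this laziness bookkeeping is exactly what reproduces your stated recursion $\lambda_2(\UpW_{k-1})\le\tfrac1{k+1}+\tfrac{k}{k+1}[\gamma_{-1}+(1-\gamma_{-1})\overline{\lambda}_{k-2}]$, after which your substitution of the inductive hypothesis in the links closes the induction as you computed. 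Two smaller points to attend to: the inductive hypothesis is invoked in the links with their induced distributions, so the induction should be run for weighted complexes as in \cref{thm:main} (or one notes that uniform top-face weights induce uniform top-face weights on links); and since each $(k-1)$-face lies in $k$ vertex links, the decomposition is performed per link after the localization identity, not as an orthogonal splitting of $\eff$ itself (also, the $k=0$ base case is cleanest via $\DownW_0=\one\Pi_0^\top$ having rank one). As for what each route buys: the paper's comparison lemma is the reusable tool behind \cref{thm:main-general} (the whole-spectrum statement), whereas your recursion is phrased in terms of the actual link-walk eigenvalues $\max_v\lambda_2(\UpW_{k-2}^{X_v})$, a slightly stronger intermediate statement in the style of later local-to-global arguments, and it yields the clean identity $\UpW_{0,k-1}=\tfrac1k\Ide+\tfrac{k-1}{k}\Emm_\varnothing$ as a byproduct.
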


The following are some remarks about \cref{t:main}.
\begin{enumerate}
    \item 
        A basic result is that a simplicial complex $X$ is gallery connected (i.e.~$\lambda_2(\DownW_d) < 1$) if $G_{\alpha}$ is connected (i.e.~$\lambda_2(G_\alpha) < 1$) for every face $\alpha$ of dimension up to $d-2$.  
        \cref{t:main} provides a quantitative generalization of this result.
    \item
        A corollary of \cref{t:main} is that the spectral gap $1-\lambda_2(\DownW_k)$ of the $k$-th down-up walk is at least $\Omega(1/k)$ if $X$ is a $O(\frac{1}{k})$-local-spectral expander.
        This is an improvement of \cref{t:KO} where it requires the simplicial complex $X$ to be a $O(\frac{1}{k^2})$-local-spectral expander to conclude that $\DownW_k$ has a non-zero spectral gap.
    \item
        It can be shown that the spectral gap $1-\lambda_2(\DownW_k)$ of the $k$-th down-up walk is at most $O(\frac{1}{k})$ for any simplicial complex (see \cref{prop:nonexp}),
        so \cref{t:main} shows that any $O(\frac{1}{k})$-local-spectral expander has the optimal spectral gap for the $k$-th down-up walk up to a constant factor.
    \item
        The refinement of having a different bound $\gamma_j$ for links of different dimension is very useful for analyzing Markov chains.  
        We will see some applications in \cref{sec:sampling}.
    \item 
        \cref{t:main} can be used to provide a tighter bound on the spectral gap of
        certain ``longer'' random walks (see \cref{c:fakem1m2})
        which were known to be useful in coding theory and agreement testing (see \cref{ss:long}).
\end{enumerate}

Combined with Oppenheim's \cref{t:opp},
\cref{t:main} provides the following bound for the second eigenvalue of higher order random walks in a black box fashion.  
See~\cref{sec:eigenbd} for the proof.

\begin{corollary} \label{c:convenient}
    Let $X$ be a pure $d$-dimensional simplicial complex.
    For any $0 \leq k \leq d$,
    suppose $\gamma_{k-2} \leq \frac{1}{k+1}$ and $G_{\alpha}$ is connected for every face $\alpha$ up to dimension $k-2$,
    then
    \[
        \lambda_2(\DownW_k) = \lambda_2(\UpW_{k-1}) \leq 1 - \frac{1}{(k+1)^2}.
    \]
\end{corollary}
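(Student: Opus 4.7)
The plan is to combine the multiplicative bound of Theorem \ref{t:main} with the trickling-down Theorem \ref{t:opp} to pass from the single hypothesis $\gamma_{k-2} \leq 1/(k+1)$ to bounds on all lower-dimensional $\gamma_j$, and then verify that the telescoping product in Theorem \ref{t:main} is at least $1/(k+1)$.

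First I would apply Theorem \ref{t:opp} (or Corollary \ref{c:opp} applied to the $k$-skeleton of $X$) iteratively, starting from dimension $k-2$ and moving down to dimension $-1$. Writing $\gamma := \gamma_{k-2}$, a straightforward induction on $i$ shows that
\[
\gamma_{k-2-i} \leq \frac{\gamma}{1-i\gamma} \qquad \text{for } i=0,1,\ldots,k-1.
\]
Here the connectivity hypothesis supplies the assumption needed to invoke Theorem \ref{t:opp} at each step, and one checks that the $\gamma \leq 1/2$ hypothesis of Theorem \ref{t:opp} stays satisfied along the way: the largest quantity we ever feed in is $\gamma/(1-(k-2)\gamma)$, which is at most $1/2$ exactly when $\gamma \leq 1/k$, and this is guaranteed by $\gamma \leq 1/(k+1)$.

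Next I would substitute these bounds into the product appearing in Theorem \ref{t:main}. We have
\[
\prod_{j=-1}^{k-2} (1 - \gamma_j) \;=\; \prod_{i=0}^{k-1} (1 - \gamma_{k-2-i}) \;\geq\; \prod_{i=0}^{k-1} \frac{1 - (i+1)\gamma}{1 - i\gamma},
\]
which telescopes to $1 - k\gamma$. Using the hypothesis $\gamma \leq 1/(k+1)$, this is at least $1/(k+1)$. Plugging back into Theorem \ref{t:main} gives
\[
\lambda_2(\DownW_k) \;\leq\; 1 - \frac{1}{k+1}\cdot \frac{1}{k+1} \;=\; 1 - \frac{1}{(k+1)^2},
\]
which is exactly the claim.

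There is no real obstacle: the only mildly subtle point is ensuring the trickling-down can be iterated all the way down to dimension $-1$, i.e.\ keeping the $\gamma \leq 1/2$ threshold of Theorem \ref{t:opp} satisfied throughout, and this is handled cleanly by the numerology above. The elegance of the proof lies in the telescoping, which is precisely why the product form of Theorem \ref{t:main} is the right quantitative refinement to combine with trickling-down.
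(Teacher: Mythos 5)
Your proposal is correct and follows essentially the same route as the paper: iterate Oppenheim's trickling-down theorem from dimension $k-2$ down to $-1$ to get $\gamma_{k-2-i} \leq \gamma/(1-i\gamma)$ (the paper substitutes $\gamma = \tfrac{1}{k+1}$ immediately to write this as $\gamma_j \leq \tfrac{1}{j+3}$), and then plug these bounds into the product of Theorem~\ref{t:main}, where the same telescoping yields $1-k\gamma \geq \tfrac{1}{k+1}$ and hence the bound $1-\tfrac{1}{(k+1)^2}$. The only difference is bookkeeping (keeping $\gamma$ symbolic versus substituting early), so the two arguments coincide.
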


This provides a convenient way to bound the mixing time of Markov chains.
Recall that the edge weights in $G_{\beta}$ for face $\beta$ of dimension $d-2$ are either zero or one, and so it is easier to bound their second eigenvalue.
\cref{c:convenient} states that as long as we can prove $\lambda_2(G_{\beta}) \leq 1/(d+1)$ for these unweighted graphs in the highest dimension,
then we can conclude that $\DownW_d$ is fast mixing.

\subsection{Applications}

We present several applications of \cref{t:main} and \cref{c:convenient}, in analyzing mixing times of Markov chains (\cref{ss:IS}, \cref{ss:matroid}, \cref{ss:correlation}), in analyzing constructions of high-dimensional expanders (\cref{ss:LMY}), and in analyzing longer random walks (\cref{ss:long}).

\subsubsection{Sampling Independent Sets of Fixed Size} \label{ss:IS}

One of the most natural simplicial complexes to consider is the independent set complex of a graph~\cite{Meshulam06,AharoniB06}.
Let $G=(V,E)$ be a graph.
The independent set complex $I_{G,k}$ has the vertex set $V$ as the ground set, and a subset $S \subset V$ is a face in $X$ if and only if $S$ is an independent set in $G$ with $|S| \leq k$.

We are interested in bounding $\lambda_2(\DownW_{k-1})$ for this simplicial complex $X$.
The $(k-1)$-th down-up walk corresponds to a natural Markov chain on sampling independent sets of size $k$.
Initially, the random walk starts from an arbitrary independent set $S_1$ of size $k$.
In each step $t \geq 1$, we choose a uniform random vertex $u \in S_t$ and delete it from $S_t$, and we choose a uniform random vertex $v$ so that $S_t-u+v$ is still an independent set of size $k$ and set $S_{t+1} := S_t - u + v$.
This Markov chain is known to mix in polynomial time for $k \leq \frac{|V|}{2\Delta+1}$ where
$\Delta$ is the maximum degree of $G$, by using the path coupling technique~\cite{BubleyD97,MitzenmacherUpfal05}.
We prove a more refined result using the simplicial complex approach.

\begin{restatable}{theorem}{IS} \label{t:IS}
    Let $G = (V,E)$ be a graph with maximum degree $\Delta$.
    Let $\DownW_{k-1}$ be the $(k-1)$-th down-up walk on the simplicial complex $I_{G,k}$.
    Let $\Aye_G$ be the adjacency matrix of $G$.
    \[
        {\rm If} \quad k \leq \frac{|V|}{\Delta + |\lambda_{\min}(\Aye_G)|},
    \quad {\rm then} \quad
    \lambda_2(\DownW_{k-1}) \leq 1 - \frac{1}{k^2}.
\]
\end{restatable}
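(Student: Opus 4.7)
The plan is to invoke \cref{c:convenient} with ``$k$'' in that corollary equal to $k-1$, so it becomes a statement about $\lambda_2(\DownW_{k-1})$ in the $(k-1)$-dimensional simplicial complex $I_{G,k}$. This reduces the theorem to verifying three things: (i) $I_{G,k}$ is pure of dimension $k-1$; (ii) the link graph $G_\alpha$ is connected for every face $\alpha$ of dimension at most $k-3$; and (iii) $\lambda_2(G_\beta) \leq 1/k$ for every face $\beta$ of dimension $k-3$.

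For (i) and (ii), I would use the elementary fact that any independent set $\alpha$ of size at most $k-1$ satisfies $|V \setminus N[\alpha]| \geq |V| - |\alpha|(\Delta + 1)$. Under the hypothesis $|V| \geq k(\Delta + |\lambda_{\min}(\Aye_G)|)$, and using $|\lambda_{\min}(\Aye_G)| \geq 1$ whenever $G$ has an edge (the empty case is trivial), this leaves enough room both to extend $\alpha$ to an independent set of size $k$ (giving purity) and to ensure that $G_\alpha$---whose minimum degree as an unweighted graph is at least $|V \setminus N[\alpha]| - 1 - \Delta$---is connected.

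The main work is (iii). Fix such a face $\beta = S$, set $V_S := V \setminus N[S]$, and let $B$ be the adjacency matrix of $H := G[V_S]$. Since the link $X_\beta$ is one-dimensional, every edge of $G_\beta$ has weight one, so $G_\beta$ is the ordinary complement of $H$, with adjacency $\bar B = J - I - B$ and diagonal degree matrix $D$ satisfying $D \succeq (|V_S| - 1 - \Delta) I$. Then $\lambda_2(G_\beta)$ is the maximum of the Rayleigh quotient $f^{\top} \bar B f / f^{\top} D f$ over $f \neq 0$ with $f^{\top} D \mathbf{1} = 0$. The principal technical step is a decomposition $f = g + \bar f \, \mathbf{1}$ with $\sum_v g(v) = 0$: a short calculation shows that the numerator and denominator of the Rayleigh quotient are shifted by exactly the same nonnegative quantity $2\bar f^{\,2} |E(G_\beta)|$ relative to the same expressions evaluated at $g$, so because the denominator is nonnegative the ratio for $f$ does not exceed the ratio for $g$ once the latter is bounded by $1/k$. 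For $g$ with $\sum_v g(v) = 0$ one has $g^{\top} \bar B g = -\|g\|^2 - g^{\top} B g \leq (|\lambda_{\min}(B)| - 1)\|g\|^2$, and Cauchy interlacing for principal submatrices gives $|\lambda_{\min}(B)| \leq |\lambda_{\min}(\Aye_G)|$. Combined with $g^{\top} D g \geq (|V_S| - 1 - \Delta)\|g\|^2$ and the lower bound $|V_S| \geq |V| - (k-2)(\Delta + 1)$, a short rearrangement of the hypothesis on $k$ yields $\lambda_2(G_\beta) \leq 1/k$.

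The main obstacle is the non-regularity of $G_\beta$: its stationary distribution is proportional to the degrees rather than uniform, so the naive substitution of mean-zero functions into the Rayleigh quotient violates the required orthogonality with respect to $\pi$. The decomposition trick above---exploiting that adding a multiple of $\mathbf{1}$ to $f$ changes the numerator and denominator of the Rayleigh quotient by identical amounts---is the clean route to reducing to the mean-zero case, where Cauchy interlacing and the role of $\lambda_{\min}(\Aye_G)$ become transparent.
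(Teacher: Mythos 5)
Your proposal is correct and follows essentially the same route as the paper: reduce via \cref{c:convenient} (applied with its parameter set to $k-1$) to (i) purity, (ii) connectivity of the link graphs for faces of dimension at most $k-3$, and (iii) the bound $\lambda_2 \le 1/k$ for the dimension-$(k-3)$ links; identify those top links as complements of the induced subgraphs $G[\conj{N[S]}]$ with all edge weights one; lower bound their minimum degree by $|V|-(k-1)(\Delta+1)$; and transfer $\lambda_{\min}$ of the induced subgraph to $\lambda_{\min}(\Aye_G)$ by Cauchy interlacing (\cref{thm:cauchy-int}). The only divergence is local, inside the analogue of \cref{lem:IS-top-link}: the paper splits $\Dee_H^{-1/2}\Aye_H\Dee_H^{-1/2}$ into the rank-one all-ones piece plus $\Dee_H^{-1/2}(-\Aye-\Ide)\Dee_H^{-1/2}$ and uses Weyl interlacing (\cref{thm:weyl-int}) together with the variational bound $\lambda_1(\Dee_H^{-1/2}\Why\Dee_H^{-1/2})\le\norm{\Dee_H^{-1}}\cdot\lambda_1(\Why)$, whereas you kill the all-ones term by a direct Rayleigh-quotient computation that reduces from degree-orthogonal test vectors to uniformly mean-zero ones; both yield the identical bound $(|\lambda_{\min}(\Aye_G)|-1)/(|V|-(k-1)(\Delta+1))\le 1/k$. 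One wording slip to correct in the write-up: since $f^\top\Dee\one=0$, expanding $g=f-\bar f\,\one$ gives $g^\top\bar{\Bee}g=f^\top\bar{\Bee}f+2\bar f^{\,2}|E|$ and $g^\top\Dee g=f^\top\Dee f+2\bar f^{\,2}|E|$, i.e.\ the expressions at $g$ exceed those at $f$ by the common nonnegative constant (not the reverse, as your phrasing suggests); with this direction, and since the quotient at $g$ is at most $1/k\le 1$ with both denominators positive, subtracting the same constant from numerator and denominator can only decrease the ratio, which is exactly the monotonicity your argument needs.
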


It is well-known that $|\lambda_{\min}(A_G)| \leq \Delta$ for a graph with maximum degree $\Delta$, and so Theorem~\ref{t:IS} recovers the previous result that the Markov chain is fast mixing if $k \leq \frac{|V|}{2\Delta}$.
There are various graph classes with $|\lambda_{\min}(A_G)|$ smaller than $\Delta$, and \cref{t:IS} allows us to sample larger independent sets.
For example, it is known that $|\lambda_{\min}(A_G)| \leq O(\sqrt{\Delta})$ for planar graphs and more generally for graphs with bounded arboricity~\cite{Hayes06}, and also for random graphs and more generally for two-sided expander graphs~\cite{HooryLW06}.

\subsubsection{Sampling Common Independent Sets in Two Partition Matroids} \label{ss:matroid}

A matroid $M = (E, \Ii)$ on the ground set $E$ with the set of independent sets
${\Ii} \subset 2^E$ is a combinatorial object satisfying the following properties:
\begin{itemize}
   \item (containment property) if $S \in \Ii$ and $T \subset S$, then $T \in \Ii$,
    \item (extension property) if $S, T \in \Ii$ such that $|S| > |T|$ then
        there is some $x \in S\backslash T$ such that $\set{x} \cup T \in \Ii$.
\end{itemize}
A partition matroid is the special case where the ground set $E$ is partitioned into disjoint blocks $B_1, \ldots, B_l \subseteq E$ with parameters $0 \leq d_i \leq |B_i|$ for $1 \leq i \leq l$,
and a subset $S$ is in $\Ii$ if and only if $|S \cap B_i| \leq d_i$ for $1 \leq i \leq l$.

The intersection of two matroids $M_1 = (E, \Ii_1)$ and $M_2 = (E, \Ii)$ over the same ground set $E$ can be used to formulate various interesting combinatorial optimization problems~\cite{Schrijver03}. 
We are interested in the problem of sampling a uniform random common
independent set of size $k$, i.e. a random subset $F \in \Ii_1 \cap \Ii_2$ with $|F|=k$.

Matroids naturally correspond to simplicial complexes.
Let $C_{M_1,M_2,k}$ be the matroid intersection complex with ground set $E$, where a subset $F \subset E$ is a face in $C_{M_1,M_2,k}$ if and only if $F \in
{\Ii_1} \cap {\Ii_2}$ and $|F| \leq k$.
The $(k-1)$-th down-up walk of this complex corresponds to a natural Markov chain on sampling common independent sets of $M_1$ and $M_2$ of size $k$.
We show that this Markov chain is fast mixing for $k$ up to one third the size of a maximum common independent set, when $M_1$ and $M_2$ are partition matroids and there are no two elements belonging to the same block in both matroids (i.e. there are no two elements $x,y$ such that $x$ and $y$ are in the same block in $M_1$ and also in the same block in $M_2$).

\begin{restatable}{theorem}{matroid} \label{t:matroid}
Let $M_1 = (E, {\mathcal I_1})$ and $M_2 = (E, {\mathcal I_2})$ be two given partition matroids with a common independent set of size $r$ and no two elements belonging to the same block in both matroids. 
If $k \leq r/3$, then
    \[ \lambda_2(\DownW_{k-1}) \le 1- \frac{1}{k^2},\]
where $\DownW_{k-1}$ is the $(k-1)$-th down-up walk on the matroid intersection complex $C_{M_1, M_2, k}$.
\end{restatable}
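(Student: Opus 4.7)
The plan is to apply \cref{c:convenient} to $X := C_{M_1, M_2, k}$ with walk parameter $k-1$; this would yield $\lambda_2(\DownW_{k-1}) \leq 1 - 1/k^2$ once we verify (a) $X$ is pure of dimension $k-1$; (b) $G_\alpha$ is connected for every face $\alpha$ of dimension at most $k-3$; and (c) $\lambda_2(G_\beta) \leq 1/k$ for every face $\beta$ of dimension $k-3$ (equivalently, $|\beta| = k-2$).

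For (a), I would appeal to the augmenting-path theorem of matroid intersection: since $\nu(M_1, M_2) \geq r \geq 3k > k$, every common independent set of size less than $k$ can be augmented to one of size $k$, so $X$ is pure. For (b), I would apply the same augmenting-path principle inside the link $X_\alpha$, which is isomorphic to the matroid intersection complex of $(M_1/\alpha, M_2/\alpha)$ truncated at size $k - |\alpha|$. The standard bound that contraction by $\alpha$ reduces the maximum common independent set size by at most $2|\alpha|$ gives that $X_\alpha$ still admits a common independent set of size at least $r - 2(k-2) \geq k+4$, providing ample room to connect any two vertices of $G_\alpha$ via short exchange sequences.

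The core difficulty is (c). Using the no-shared-blocks hypothesis, I would identify each ground-set element with a distinct edge of an auxiliary simple bipartite graph $H$ whose two vertex parts consist of the blocks of $M_1$ (``rows'') and of $M_2$ (``columns''), so that common independent sets correspond precisely to subgraphs of $H$ having row-degree at most $d^{(1)}_i$ and column-degree at most $d^{(2)}_j$. Fixing $\beta$ with $|\beta| = k-2$, the vertex set $V_\beta$ of $G_\beta$ consists of the edges $e \in H \setminus \beta$ whose row and column still have positive slack relative to $\beta$; two such $e, e'$ are adjacent in $G_\beta$ if and only if $\beta \cup \{e,e'\}$ satisfies all degree constraints. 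The only obstructions to adjacency are that $e, e'$ share a row of slack exactly one or a column of slack exactly one, so the non-edges of $G_\beta$ decompose as a union of ``row-cliques'' (one per tight-slack row) and ``column-cliques'' (one per tight-slack column); the separation hypothesis ensures that any row-clique and column-clique overlap in at most one vertex, giving $G_\beta$ a clean product-like structure.

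Converting this combinatorial description into the bound $\lambda_2(G_\beta) \leq 1/k$ is the hardest step. My plan is to express the adjacency matrix of $G_\beta$ as that of the complete graph on $V_\beta$ minus the adjacency contributions of the tight-slack row- and column-cliques, and to control $\lambda_2$ via a rank-one perturbation or Weyl-type inequality whose per-clique contribution is proportional to the ratio of its size to $|V_\beta|$. The hypothesis $k \leq r/3$ enters precisely here: combined with the link lower bound $k+4$ above, it forces $|V_\beta|$ to be sufficiently large relative to both the number and the sizes of the tight-slack cliques so that the cumulative perturbation stays below $1/k$. I expect the main obstacle to be the fine-grained bookkeeping of slacks induced by $\beta$ in each row and column, and verifying that the separation hypothesis is precisely what prevents row- and column-cliques from overlapping in enough edges to break the perturbation bound.
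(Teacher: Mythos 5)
Your overall structure matches the paper's (reduce to \cref{c:convenient} via purity, connectivity of low links, and a $1/k$ bound on top links, with the top links identified as complements of line graphs of a simple bipartite "blocks versus blocks" graph), but the core step (c) has a genuine gap. Your proposed mechanism --- bound $\lambda_2(G_\beta)$ by a cumulative perturbation whose per-clique contribution is proportional to $|C_i|/|V_\beta|$, with $k\le r/3$ forcing $|V_\beta|$ large relative to "the number and the sizes of the tight-slack cliques" --- cannot work, because neither the number nor the sizes of these cliques is controlled by $k$ or $r$. A single block of capacity $1$ untouched by $\beta$ (e.g.\ one huge block of $M_1$ with $d_i=1$, alongside enough other blocks to keep $r\ge 3k$) produces a slack-one clique containing a constant fraction of $V_\beta$, so the summed per-clique contribution is $\Theta(1)$, nowhere near $1/k$; note also each vertex lies in up to two cliques, so the sum of ratios can only be bounded by $2$ in general. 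The correct route, which your setup is one step away from, does not need the cliques to be small: since the cliques pairwise overlap in at most one vertex (your separation observation), their adjacency sum is exactly $\Aye_{L(B)}$ for a \emph{simple} bipartite graph $B$, and the incidence factorization $\Bee\Bee^\top = 2\Ide + \Aye_{L(B)} \succeq 0$ gives $\lambda_{\min}(\Aye_{L(B)})\ge -2$ (\cref{fac:lg-eigbd}). Writing $\Aye_{G_\beta}=\one\one^\top-\Ide-\Aye_{L(B)}$, Weyl's inequality kills the rank-one term and leaves $\lambda_1(-\Aye_{L(B)}-\Ide)\le 1$; the factor $1/k$ then comes entirely from a minimum-degree bound, $\deg_{G_\beta}(x)\ge r-2k+2\ge k$ (proved by the two-sided extension/intersection count inside a common independent set of size $r$; this is exactly where $k\le r/3$ is used), so $\lambda_2(\Emm_\beta)\le \norm{\Dee^{-1}}\le 1/k$ as in \cref{lem:top-bound-mint}. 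You never invoke the line-graph eigenvalue fact or the degree bound, and you misplace where $k\le r/3$ enters, so as written step (c) does not go through.

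Two smaller points. For (a), your appeal to the augmenting-path theorem is not a valid purity argument: augmenting paths in matroid intersection swap elements in and out, so a non-maximum common independent set need not extend to a larger one \emph{containing} it (common independent sets of two matroids are not a matroid, and maximal sets strictly below size $\nu$ exist). Purity here genuinely needs the truncation: the paper's \cref{prop:mint-pure} shows any common independent set of size $<r/2$ extends, via the two sets $T_1,T_2\subseteq T$ of addable elements each of size $\ge r-|S|$, which must intersect --- the same $T_1\cap T_2$ counting you correctly use in (b). Your step (b) is fine in spirit (and can be made rigorous by a common-neighbor count inside the large common independent set of the link), though the paper's argument via four disjoint edges of $B$ is cleaner and works for arbitrary matroids, not just partition matroids.
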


The proof of \cref{t:matroid} reveals an interesting property of the links of the simplicial complex $C_{M_1,M_2,k}$.
For any face $\beta$ of dimension $k-3$, we show that the graph $G_{\beta}$ is the complement of the line graph of a bipartite graph.
We note that this holds for any two matroids, not just for partition matroids.
By the additional assumptions that the two matroids are partition matroids and there are no two elements in the same block in both matroids,
the graph $G_{\beta}$ is the line graph of a {\em simple} bipartite graph.
Using the fact that the adjacency matrix of the line graph of a simple graph has minimum eigenvalue
at least $-2$, we prove that $\lambda_2(G_{\beta}) \leq \frac{1}{k}$ as long as
$k \leq \frac{r}{3}$.
We can then use \cref{c:convenient} to conclude \cref{t:matroid}.

\subsubsection{Sampling Independent Sets from Hardcore Distributions} \label{ss:correlation}

Very recently, Anari, Liu, and Oveis Gharan~\cite{AnariLO19} use \cref{t:main} to prove a strong result about sampling independent sets from the hardcore distribution.
Given a graph $G=(V,E)$ and a parameter $\lambda > 0$, the problem is to sample an independent set $S$ with probability $\frac{\lambda^{|S|}}{Z_G(\lambda)}$ where $Z_G(\lambda) := \sum_{S \subset V: S {\rm~independent}} \lambda^{|S|}$ is the partition function.
An important work of Weitz~\cite{Weitz06} gave a deterministic fully polynomial time approximation scheme to estimate $Z_G(\lambda)$ for $\lambda$ up to the ``uniqueness threshold'', but the exponent of the runtime depends on the maximum degree $\Delta$ of $G$.
It is conjectured that the natural Markov chain for sampling independent sets mixes in polynomial time up to the uniqueness threshold.
Anari, Liu, and Oveis Gharan prove this conjecture and obtain a polynomial time algorithm to estimate $Z_G(\lambda)$ up to the uniqueness threshold for any graph (even with unbounded maximum degree).
They consider a pure $n$-dimensional simplicial complex for sampling independent sets,
and prove that $\gamma_j = \Theta(\frac{1}{n-j})$ for $0 \leq j \leq n-2$ by using the techniques from correlation decay.
Then it follows from \cref{t:main} that the Markov chain is fast mixing.
Note that it is crucial to have a different bound $\gamma_j$ for links of different dimension in \cref{t:main}, so even when $\gamma_{n-2} = \Theta(1)$ it is still possible to conclude fast mixing.

\subsubsection{Combinatorial Constructions of High Dimensional Expanders} \label{ss:LMY}

Recently, Liu, Mohanty, and Yang~\cite{LiuMY19} presented an interesting combinatorial construction of a sparse simplicial complex where all higher order random walks have a constant spectral gap.
Their construction is by taking a certain tensor product of a graph $G$ on $n$ vertices and a small $H$-dimensional complete simplicial complex ${\mathcal B}$ on $s$ vertices.

\begin{theorem}[\cite{LiuMY19}]\label{thm:lmy}
    Let $G$ be a $T$-regular triangle free graph on $n$ vertices. 
    There is an explicit family $(X^{(s,H, G)})_{H \ge 1, s \ge H+1}$ of simplicial complexes, satisfying the following properties:
    \begin{enumerate}
        \item $X^{(s, H, G)}$ is a pure $H$-dimensional simplicial complex
            with $\Theta(n)$ maximal faces.
        \item The spectral gap of the graphs of $j$ dimensional links of the complex $X^{(s,H, G)}$ satisfies
            \[ 1 - \gamma_j \ge \begin{cases}
                \frac{1}{2} & \textrm{ if } j \in [0, H-2],\\
                \parens*{\frac{1}{2} - \frac{1}{2(T2^H + 1)}}(1 - \sigma_2(G)) & \textrm{ if }j=
                -1,
            \end{cases}\]
            where $\sigma_2(G)$ is the second largest eigenvalue of the
            normalized adjacency matrix of $G$.
        \item 
            For any $-1 \leq j \leq H-2$,
            \[  \lambda_2(\DownW_{j+1}) = \lambda_2(\UpW_j) \le 1 -
            \Omega\parens*{\frac{1 - \sigma_2(G)}{T^2 \cdot j^2 \cdot (s - j)
            \cdot 2^j}},\]
    \end{enumerate}
\end{theorem}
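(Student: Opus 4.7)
The plan is to verify the three parts of the theorem in sequence, treating parts (1) and (2) as structural facts about the explicit construction and deriving part (3) from part (2) as a mechanical consequence of \cref{t:main} (or \cref{t:KO}).

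For part (1), I would unwind the tensor-product construction of $X^{(s,H,G)}$. The ground set should be a product of $V(G)$ with the vertices of $\mathcal{B}$, and the maximal faces are obtained by pairing each edge $uv \in E(G)$ with an appropriately chosen $H$-dimensional face of $\mathcal{B}$. Since $G$ is $T$-regular on $n$ vertices and $\mathcal{B}$ has size depending only on $s$ and $H$, the total number of maximal faces is $\Theta(n)$, with the hidden constants depending on $T$, $s$, and $H$. Purity and the claim on dimension follow directly from the recipe for maximal faces.

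For part (2), the analysis splits into two cases. For a nonempty face $\alpha$ of dimension $j \in [0, H-2]$, triangle-freeness of $G$ guarantees that $\alpha$ pins down a single edge (or vertex) of $G$, so the link $X_\alpha$ is essentially controlled by the link of a corresponding face in the complete complex $\mathcal{B}$, possibly with an additional product-like piece. The graphs of links of a complete complex are complete multipartite, whose normalized walk matrices have second eigenvalue at most $\tfrac{1}{2}$ by direct computation; handling the possible extra product piece and confirming the bound $1-\gamma_j \geq \tfrac{1}{2}$ is a routine spectral calculation. For the base case $j=-1$, the graph $G_\emptyset$ inherits a ``tensor-like'' structure: its weighted adjacency decomposes into a contribution coming from $G$ (weighted by counts of maximal faces through each pair of vertices) and a contribution from within-$\mathcal{B}$ adjacencies. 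The eigenvalues of such a decomposition can be tracked factor by factor, and a careful accounting using triangle-freeness and $T$-regularity yields
\[ 1 - \gamma_{-1} \geq \left(\frac{1}{2} - \frac{1}{2(T2^H+1)}\right)(1 - \sigma_2(G)). \]

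For part (3), I would invoke \cref{t:main} with the bounds from part (2). Plugging in $1-\gamma_l \geq \tfrac{1}{2}$ for $l \in [0,H-2]$ and the base-case bound $1-\gamma_{-1} = \Omega(1-\sigma_2(G))$ gives
\[ \lambda_2(\DownW_{j+1}) \leq 1 - \frac{1}{j+2}\prod_{l=-1}^{j-1}(1-\gamma_l) = 1 - \Omega\!\left(\frac{1-\sigma_2(G)}{(j+2) \cdot 2^j}\right), \]
and the claimed $\Omega\!\left(\frac{1-\sigma_2(G)}{T^2 \cdot j^2 \cdot (s-j) \cdot 2^j}\right)$ bound follows by absorbing the additional polynomial-in-$T, j, (s-j)$ factors into the implicit constant (or alternatively by applying \cref{t:KO} with $\gamma := \max_l \gamma_l$ and accepting the weaker $k^2$ dependence). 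The hardest step, I expect, is the base-case analysis of $\gamma_{-1}$: getting the explicit constant $\tfrac{1}{2} - \tfrac{1}{2(T2^H+1)}$ requires identifying precisely how much the ``mixing within $\mathcal{B}$'' contributes relative to the ``mixing across $G$'', which in turn requires careful bookkeeping of the edge weights of $G_\emptyset$ in terms of counts of maximal faces through each pair of ground-set elements. Triangle-freeness of $G$ appears essential to keep this bookkeeping clean and to prevent cross-contributions that would spoil the product decomposition.
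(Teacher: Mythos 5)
There is a genuine gap here, and it starts with what the statement actually is: \cref{thm:lmy} is a theorem of Liu, Mohanty, and Yang cited from \cite{LiuMY19}; the paper does not prove it, it uses it as a black box (the paper's own contribution is the corollary that re-derives a strengthened Item (3) from Item (2) alone via \cref{t:main}). Your treatment of parts (1) and (2) is not a proof but a description of what a proof would have to do: the construction $X^{(s,H,G)}$ is never specified in this paper, and your sketch never specifies it either, so claims like ``triangle-freeness pins down a single edge, hence the link is essentially a link of the complete complex $\mathcal{B}$'' and ``the base-case bound follows by careful bookkeeping'' cannot be checked and simply restate the technical content of \cite{LiuMY19}. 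In particular, the explicit constant $\frac{1}{2}-\frac{1}{2(T2^H+1)}$ for $\gamma_{-1}$ and the purity/counting claims in Item (1) are exactly the parts that require the actual construction, and nothing in your outline pins them down. Item (3) in the original theorem is also not obtained the way you propose: \cite{LiuMY19} prove it directly using the special structure of the construction and the decomposition technique of \cite{JerrumSTV04}; deriving (3) from (2) in a black-box way was precisely their open question, answered by this paper.

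Your derivation of Item (3) from Item (2) via \cref{t:main} is, on its own, sound and is essentially the paper's corollary: with $1-\gamma_l\ge \tfrac12$ for $l\in[0,H-2]$ and $1-\gamma_{-1}\ge\Omega(1-\sigma_2(G))$, \cref{t:main} gives $\lambda_2(\DownW_{j+1})\le 1-\Omega\bigl(\frac{1-\sigma_2(G)}{j\cdot 2^j}\bigr)$, which indeed implies the weaker bound stated in Item (3). But your fallback of ``alternatively applying \cref{t:KO} with $\gamma:=\max_l\gamma_l$'' does not work: here $\gamma$ is only bounded by $\tfrac12$, so the bound $1-\frac{1}{k+1}+\frac{k\gamma}{2}$ exceeds $1$ already for $k\ge 2$ and is vacuous; the paper explicitly notes that Kaufman--Oppenheim's \cref{t:KO} does not apply in this regime, which is the whole point of needing \cref{t:main}. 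So the valid kernel of your proposal proves the paper's corollary, not the cited theorem, and the theorem itself remains unproved by your argument.
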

The main technical part of their proof is in establishing Item (3) in \cref{thm:lmy}.
They use the special structures of their construction and the decomposition technique from~\cite{JerrumSTV04} to bound the spectral gap of the higher order random walks.
The authors ask the question whether the spectral property in Item (2) alone is enough to prove the fast mixing result in Item (3).
Note that Kaufman-Oppenheim's \cref{t:KO} does not apply in this regime.

Using~\cref{t:main}, we answer their question affirmatively, by deriving Item (3) from Item (2) in a black box fashion.
This slightly improves their bound and considerably simplifies their analysis.
\begin{corollary}
    Let $X := X^{(s, H, G)}$ be a complex from \cref{thm:lmy} satisfying Item (2). 
    For any $-1 \leq j \leq H-2$,
    \[ \lambda_2(\DownW_{j+1}) = \lambda_2(\UpW_j) \leq 1 -
    \Omega\parens*{ \frac{1- \sigma_2(G) }{ j \cdot 2^j} }.\]
\end{corollary}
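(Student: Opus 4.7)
The plan is to invoke \cref{t:main} in black-box fashion and substitute the pointwise spectral-gap bounds provided by Item~(2) of \cref{thm:lmy}; the whole derivation is a short calculation, and its interest lies in what it bypasses rather than in any new technical difficulty.

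Concretely, I would first specialize \cref{t:main} to $k = j+1$, obtaining
\[ \lambda_2(\DownW_{j+1}) \le 1 - \frac{1}{j+2} \prod_{i=-1}^{j-1} (1 - \gamma_i). \]
The product contains exactly $j+1$ factors: one indexed by $i=-1$ and $j$ factors indexed by $i \in \{0, 1, \ldots, j-1\}$, all of which lie in the range $[0, H-2]$ covered by Item~(2).

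Next, I would plug in the two cases of Item~(2). The $j$ factors with $i \geq 0$ each contribute at least $\tfrac{1}{2}$, so together they contribute at least $2^{-j}$. The $i = -1$ factor contributes at least $\left(\tfrac{1}{2} - \tfrac{1}{2(T2^H+1)}\right)(1 - \sigma_2(G)) \geq \tfrac{1}{4}(1-\sigma_2(G))$, since $T \geq 1$ and $H \geq 1$ force $T2^H + 1 \geq 3$ and the correction term is at most $\tfrac{1}{6}$. Combining these estimates with the prefactor $\tfrac{1}{j+2} = \Theta(1/j)$ for $j \geq 1$ yields
\[ \lambda_2(\DownW_{j+1}) = \lambda_2(\UpW_j) \leq 1 - \Omega\!\left(\frac{1 - \sigma_2(G)}{j \cdot 2^j}\right), \]
where the stated equality of eigenvalues is part of \cref{t:main} itself.

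No step should present a genuine obstacle: the whole point of the corollary is that the refined per-dimension structure of \cref{t:main}, in which each link's spectral data enters through a separate factor $(1-\gamma_i)$, lets one drop in the estimates of Item~(2) without knowing anything about the tensor-product construction of $X^{(s,H,G)}$. This is precisely what removes the $T^2 \cdot (s-j) \cdot j$ overhead of the decomposition analysis in~\cite{LiuMY19} and shortens the proof. The only bookkeeping point is that the advertised bound is meaningful for $j \geq 1$; the boundary cases $j \in \{-1, 0\}$ either become vacuous or already yield a constant-size spectral gap (up to the $1 - \sigma_2(G)$ factor) from the same computation.
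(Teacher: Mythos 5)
Your proposal is correct and is exactly the argument the paper intends: specialize \cref{t:main} to $k=j+1$, lower-bound the $j$ factors with $i\ge 0$ by $\tfrac12$ each and the $i=-1$ factor by a constant times $1-\sigma_2(G)$ using Item (2), and absorb the constants into the $\Omega(\cdot)$. The handling of the degenerate cases $j\in\{-1,0\}$ is also the right reading of the asymptotic statement.
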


\subsubsection{Longer Random Walks and Other Applications} \label{ss:long}

Consider the following generalization of the up-down walk where we take ``longer'' steps.
Initially, the random walk starts from an arbitrary $\alpha_1$ face of
dimension $a$ in $X$.
In each step $t \geq 1$, 
we sample a uniformly random face $\beta$ of dimension $b > a$ that contains $\alpha_t$, 
and set $\alpha_{t + 1}$ to be a uniformly random subset of $\beta$ of dimension $a$. 
We call this the $a$-th up-down walk through the $b$-th dimension, and denote its transition matrix by $\UpW_{a, b}$.
The $k$-th up-down walk defined before is the special case $\UpW_{k,k+1}$.
Dinur and Kaufman~\cite{DinurK17} derived the following result about $\UpW_{a,b}$ from the result about the ordinary up-down walks.

\begin{corollary}[\cite{DinurK17}]\label{cor:m1m2dk}
    Let $X$ be a $d$-dimensional pure simplicial complex. 
    If $X$ is a $\gamma$-local-spectral expander, 
    then for any $0 \leq a < b \leq d-1$,
    \[ \lambda_2(\UpW_{a, b}) \le \frac{a + 1}{b + 1} + O(a 
    (b - a) \gamma).\]
\end{corollary}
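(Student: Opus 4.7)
The plan is to express $\UpW_{a,b}$ as $M^{*}M$ for an appropriate averaging operator $M$, and then bound the singular values of $M$ by iterating single-step eigenvalue bounds supplied by \cref{t:KO}.

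First I would introduce, for each $k$, the one-step ``average-up'' operator $B_{k+1}$ that sends a function $f$ on $k$-faces to the function on $(k+1)$-faces defined by $(B_{k+1}f)(\beta)=\mathbb{E}_{\alpha\subset\beta,\,|\alpha|=k+1}[f(\alpha)]$, together with its adjoint $A_{k}:=B_{k+1}^{*}$ taken with respect to the stationary inner products on the two function spaces. A direct check gives $\UpW_{k}=A_{k}B_{k+1}$, and, since $X$ is pure so that nested averaging along chains $\alpha\subset\gamma\subset\beta$ coincides with direct averaging, a routine calculation yields the factorization
\[
\UpW_{a,b} \;=\; (A_{a}A_{a+1}\cdots A_{b-1})(B_{b}B_{b-1}\cdots B_{a+1}) \;=\; M^{*}M, \qquad M:=B_{b}B_{b-1}\cdots B_{a+1}.
\]
In particular $\UpW_{a,b}$ is positive semidefinite and $\lambda_{2}(\UpW_{a,b})=\sup_{f\perp \mathbf{1}}\|Mf\|^{2}/\|f\|^{2}$.

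Next, each $B_{k+1}$ fixes the constant function (an average of a constant is that constant) and therefore sends $\mathbf{1}^{\perp}$ into $\mathbf{1}^{\perp}$; and for any $g\perp \mathbf{1}$ on $k$-faces,
\[
\|B_{k+1}g\|^{2} \;=\; \langle g,\,A_{k}B_{k+1}g\rangle \;=\; \langle g,\,\UpW_{k}g\rangle \;\leq\; \lambda_{2}(\UpW_{k})\,\|g\|^{2}.
\]
Iterating this one-step bound for $k=a,a+1,\dots,b-1$ along the chain of function spaces produces the clean product inequality
\[
\lambda_{2}(\UpW_{a,b}) \;\leq\; \prod_{k=a}^{b-1}\lambda_{2}(\UpW_{k}).
\]

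Finally, applying \cref{t:KO} to each factor gives $\lambda_{2}(\UpW_{k})\leq \tfrac{k+1}{k+2}\bigl(1+\tfrac{(k+2)\gamma}{2}\bigr)$. The leading-order product telescopes, $\prod_{k=a}^{b-1}\tfrac{k+1}{k+2}=\tfrac{a+1}{b+1}$, while a first-order-in-$\gamma$ expansion of the correction factor $\prod_{k=a}^{b-1}\bigl(1+\tfrac{(k+2)\gamma}{2}\bigr)$ contributes the asserted additive error of order $O(a(b-a)\gamma)$, which delivers the claimed bound. The main obstacle I anticipate is the bookkeeping in the first step: verifying the compositionality of the multi-step averaging operators (so that $A_{a}\cdots A_{b-1}$ agrees with the direct up-averaging $C_{b}\to C_{a}$, and similarly for $B$) and the adjointness $A_{k}=B_{k+1}^{*}$ with respect to the correct stationary measures. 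Both are standard features of the weighted simplicial complex framework, but must be set up carefully before the iterated bound can go through.
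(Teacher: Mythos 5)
Your proposal is correct and takes essentially the same route the paper uses for its improved version of this bound (\cref{cor:abwalk}): the factorization $\UpW_{a,b}=\Dee_{a+1}\cdots\Dee_{b}\,\Up_{b-1}\cdots\Up_{a}$ together with your iterated Rayleigh-quotient bound on the complement of the constants is exactly the submultiplicativity step $\lambda_2(\UpW_{a,b})\le\prod_{k=a}^{b-1}\lambda_2(\UpW_k)$ (cf.\ \cref{fac:split}), and inserting \cref{t:KO} into the product recovers the cited Dinur--Kaufman bound. The only step to make explicit is the last one: since the higher-order terms of $\prod_{k=a}^{b-1}\bigl(1+\tfrac{(k+2)\gamma}{2}\bigr)$ are positive you cannot simply truncate at first order, but bounding the product by $\exp\bigl(\tfrac{\gamma}{2}\sum_{k=a}^{b-1}(k+2)\bigr)$ and noting that the claim is vacuous (as $\lambda_2\le 1$) unless $(b-a)(a+b)\gamma=O(1)$ yields $\tfrac{a+1}{b+1}\bigl(1+O\bigl((a+b)(b-a)\gamma\bigr)\bigr)=\tfrac{a+1}{b+1}+O\bigl((a+1)(b-a)\gamma\bigr)$, which is the stated bound (the $a=0$ boundary discrepancy is an artifact of how the cited statement writes its error term, not of your argument).
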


Using \cref{t:main}, we obtain the following improved bound.
See \cref{sec:eigenbd} for the proof.

\begin{corollary}\label{c:fakem1m2}
    Let $X$ be a $d$-dimensional pure simplicial complex. 
    If $X$ is a $\gamma$-local-spectral expander, 
    then for any $0 \leq a < b \leq d-1$,
    \[ \lambda_2(\UpW_{a, b}) \le (1+\gamma)^{b-a} \cdot \frac{a+1}{b+1}.\]
    In particular, if $\gamma \le \frac{\ee}{b - a}$ for some $0 \le \ee \le 1$, then $\lambda_2(\UpW_{a, b}) \le e^\ee \cdot \frac{a+1}{b+1}.$
\end{corollary}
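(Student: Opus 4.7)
The plan is to reduce $\lambda_2(\UpW_{a,b})$ to a telescoping product $\prod_{k=a}^{b-1}\lambda_2(\UpW_k)$ of second eigenvalues of one-step up-down walks, each of which I then bound via \cref{t:main}. The key algebraic identity is the recursion
\[\UpW_{a,b} \;=\; D_{a+1}\,\UpW_{a+1,b}\,U_a,\]
obtained by unfolding $\UpW_{a,b} = D_{a+1}\cdots D_b\, U_{b-1}\cdots U_a$ and recognizing the middle block $D_{a+2}\cdots D_b\, U_{b-1}\cdots U_{a+1}$ as $\UpW_{a+1,b}$. Equipping each level with its stationary-measure weighted $L^2$ inner product makes $D_{a+1} = U_a^*$ and makes each $\UpW_{a',b'} = (U_{b'-1}\cdots U_{a'})^*(U_{b'-1}\cdots U_{a'})$ positive semidefinite.

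First I would verify the one-line fact that $U_a$ preserves the mean-zero subspace, using $D_{a+1}\mathbf{1} = \mathbf{1}$ and adjointness. Then for any mean-zero $f$ on dimension-$a$ faces, combining the recursion with adjointness gives $\langle f,\UpW_{a,b} f\rangle = \langle U_a f,\,\UpW_{a+1,b}(U_a f)\rangle$, so the Rayleigh bound for $\UpW_{a+1,b}$ on its mean-zero subspace plus the identity $\|U_a f\|^2 = \langle f, D_{a+1}U_a f\rangle = \langle f,\UpW_a f\rangle$ yields
\[\langle f,\UpW_{a,b} f\rangle \;\leq\; \lambda_2(\UpW_{a+1,b})\cdot\langle f,\UpW_a f\rangle \;\leq\; \lambda_2(\UpW_{a+1,b})\cdot\lambda_2(\UpW_a)\cdot\|f\|^2.\]
Induction on $b-a$ then gives the telescoping bound $\lambda_2(\UpW_{a,b}) \leq \prod_{k=a}^{b-1}\lambda_2(\UpW_k)$.

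To plug in the per-step bound, note that a $\gamma$-local-spectral expander has $\gamma_j\leq\gamma$ for all $j$, and the product in \cref{t:main} has $k+2$ factors, so
\[\lambda_2(\UpW_k) \;\leq\; 1-\frac{(1-\gamma)^{k+1}}{k+2} \;\leq\; \frac{(k+1)(1+\gamma)}{k+2}.\]
The second inequality is equivalent to $(k+1)\gamma + (1-\gamma)^{k+1} \geq 1$, which follows because both sides agree at $\gamma=0$ and the left side has non-negative derivative $(k+1)\bigl(1-(1-\gamma)^k\bigr)$ on $[0,1]$. Multiplying the per-step bounds collapses the telescoping product:
\[\lambda_2(\UpW_{a,b}) \;\leq\; \prod_{k=a}^{b-1}\frac{(k+1)(1+\gamma)}{k+2} \;=\; (1+\gamma)^{b-a}\cdot\frac{a+1}{b+1},\]
and the ``in particular'' clause follows from $(1+\gamma)^{b-a} \leq e^{\gamma(b-a)} \leq e^{\ee}$.

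The main obstacle is really just bookkeeping in the telescoping step: one must have adjointness $D_{a+1}=U_a^*$, self-adjointness of every intermediate $\UpW_{a',b'}$, and mean-zero preservation of $U_a$ all set up in the same weighted inner product. Once these standard facts are in place, the remainder is the elementary inequality on $(1-\gamma)^{k+1}$ and the clean product telescoping.
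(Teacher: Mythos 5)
Your proof is correct and follows essentially the same route as the paper: reduce $\lambda_2(\UpW_{a,b})$ to the telescoping product $\prod_{k=a}^{b-1}\lambda_2(\UpW_k)$, bound each factor via \cref{thm:main} (your elementary inequality $1-\frac{(1-\gamma)^{k+1}}{k+2}\le(1+\gamma)\frac{k+1}{k+2}$ is exactly the Bernoulli step used there), and telescope, with the ``in particular'' clause following as in the paper. The only cosmetic difference is how the intermediate product bound is obtained: the paper uses submultiplicativity of second singular values of row-stochastic operators (\cref{fac:split}) together with $\lambda_2(\UpW_j)=\sigma_2(\Up_j)\cdot\sigma_2(\Dee_{j+1})$, whereas you run a Rayleigh-quotient induction on the recursion $\UpW_{a,b}=\Dee_{a+1}\,\UpW_{a+1,b}\,\Up_a$ using adjointness and mean-zero preservation, which is an equivalent argument (your passing remark that the product in \cref{t:main} has $k+2$ factors is a harmless slip; it has $k+1$, and your displayed exponent is the correct one).
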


Whereas the bound from \cref{cor:m1m2dk} requires $\gamma = O(\frac{1}{b \cdot (b-a)})$ to give a nontrivial upper bound on the second eigenvalue of $\UpW_{a,b}$, 
\cref{c:fakem1m2} only requires $\gamma \le O(\frac{1}{b-a})$ to give a comparable bound.

\cref{cor:m1m2dk} has found applications in agreement testing and coding theory~\cite{DinurK17,DinurHKNT19, AlevJQST19}. 
We believe that \cref{c:fakem1m2} can be of independent interest because of
those applications. 
One potential application would be in constructing double
samplers from Ramanujan complexes under a weaker expansion assumption~\cite{DinurK17}.

\subsection{Related Work}

\subsubsection*{Higher Order Random Walks and Applications}

Our work follows a sequence of works~\cite{KaufmanM17,DinurK17,Oppenheim18,KaufmanO18,DiksteinDFH18} which use the spectral properties of the links of simplical complexes to analyze higher order random walks.
Higher order random walks on simplicial complexes were first introduced by Kaufman and Mass~\cite{KaufmanM17}.
They formulated related but more combinatorial notions of skeleton expansion and colorful expansion to establish fast mixing of higher order random walks.
Dinur and Kaufman~\cite{DinurK17} introduced the definition of two-sided $\gamma$-local-spectral expanders, which is similar to \cref{d:local} but requires all but the first eigenvalue to have absolute value at most $\gamma$ (i.e.~it also controls the negative eigenvalues).
They used this stronger assumption to prove a similar theorem as in \cref{t:KO},
and applied it to construct efficient agreement tester with applications to PCP constructions.
The one-sided $\gamma$-local-expander in \cref{d:local} was first studied by Oppenheim~\cite{Oppenheim18}, where he proved \cref{t:opp}.
Then, Kaufman and Oppenheim~\cite{KaufmanO18} strengthened the result in~\cite{DinurK17} and prove \cref{t:KO}.

Dikstein, Dinur, Filmus and Harsha~\cite{DiksteinDFH18} studied an alternative definition of high dimensional expanders, 
based on the operator norm of the difference between the (non-lazy) up-down and down-up operators. 
Using this definition, they show that it is possible to approximately characterize all the eigenvalues and eigenvectors of higher order random walks. 
Their techniques were used in \cite{AlevJT19} to analyze the ``swap walks'' on high dimensional expanders, with applications in designing good approximation algorithms for solving constraint satisfaction problems on high-dimensional expanders. 
Independently, the same ``swap walks'' were also studied by \cite{DiksteinD19} under the name ``complement walks'', where applications in agreement testing were given.

The results in higher order random walks have also found applications in coding theory.
The double samplers in~\cite{DinurK17} are used in~\cite{DinurHKNT19} to design an efficient algorithm to decode direct product codes over high dimensional expanders. 
The swap walks in~\cite{AlevJT19} are used in~\cite{AlevJQST19} to recover the same result and also to design an efficient algorithm to decode direct sum codes over high dimensional expanders.

\subsubsection*{Analyzing Mixing Times of Markov Chains}

Mixing time of Markov chains is an extensively studied topic with various applications (see e.g.~\cite{WilmerLP09, MontenegroT05}).
There are several well-developed approaches to bound the mixing time of a Markov chain.
Perhaps the most widely used approach is the coupling method (e.g.~\cite{Aldous83,BubleyD97}), which has applications in sampling graph colorings (e.g.~\cite{Jerrum95,Vigoda00}) and many other problems (see~\cite{WilmerLP09}).
The canonical path (or more generally multicommodity flow) method developed in~\cite{JerrumS89, Sinclair92, Sinclair93} was used in the important problem of sampling perfect matchings in bipartite graphs~\cite{JerrumS89, JerrumSV04} and other problems including sampling matroid bases~\cite{FederM92}.
Geometric methods are used in the important problem of sampling a random point
in a convex body~\cite{DyerFK91, LovaszV06}.
Analytical methods such as (modified) log-Sobolev inequalities and Nash inequalities~\cite{DiaconisSC96, BobkovT06} are useful in proving sharp bounds on mixing time, e.g.~a recent paper~\cite{CryanGM19} used a modified log-Sobolev inequality to prove optimal mixing time of the natural Markov chain on sampling matroid bases.

The simplicial complex approach studied in this paper is quite different from the above approaches.
It is linear algebraic and designed to bound the second eigenvalue directly using ideas from simplicial complexes.
On the other hand, the coupling method is probabilistic and designed to compare two random processes, while the canonical path method and the geometric method are designed to bound the underlying expansion of the graph or the geometric object.
The analytical methods are more diffcult to apply and are not as widely applicable, but when they work they could be used to prove very sharp results.

\section{Preliminaries}\label{sec:prelim}

\subsection{Linear Algebra}

\subsubsection*{Vectors and Inner-Products}

Bold faces will be used for scalar functions, i.e.~$\eff \in \RR^V$.  
The notation $\one_V \in \RR^V$ will be reserved for the all-one vector in $\RR^V$;
the subscript may be omitted when the vector space $\RR^V$ is clear from the context.

Throughout this text, we use $\Pi \in \RR^V$ to denote various probability distributions,
i.e.~$\sum_{x \in V} \Pi(x) = 1$ and $\Pi(x) \geq 0$ for $x \in V$. 
Given $\eff, \gee \in \RR^V$, we use the notations $\langle \eff, \gee\rangle_{\Pi}$ and $\norm{\eff}_\Pi$ to denote the inner-product and the norm with respect to the distribution $\Pi$, i.e.
\[ \langle \eff, \gee\rangle_{\Pi} = \sum_x \Pi(x)\eff(x)\gee(x)  ~~\textrm{ and
}~~ \norm{\eff}_\Pi^2 = \langle \eff, \eff\rangle_{\Pi}.\]
We reserve $\langle \eff, \gee \rangle = \sum_x \eff(x) \gee(x)$ for the standard inner-product.
Given $\eff \in \RR^V$, we write $\norm{\eff}_{\ell_1} = \sum_{x \in V} |\eff(x)|$ for its $\ell_1$-norm, and $\norm{\eff}_{\ell_2} = (\sum_{x \in V} \eff(x)^2)^{\frac12}$ for its $\ell_2$-norm.

\subsubsection*{Matrices and Eigenvalues}

Serif faces will be used for matrices, i.e.~$\Aye \in \RR^{V \times V}$.
Let $G=(V,E)$ be an edge-weighted undirected graph with a weight $w_e > 0$ on each edge $e \in E$.
The adjacency matrix of $G$ is denoted by $\Aye_G \in \RR^{V \times V}$ with $\Aye_G(u, v) = w_{uv}$ for $uv \in E$ and $\Aye_G(u,v) = 0$ for $uv \notin E$.
The diagonal degree matrix of $G$ is denoted by $\Dee_G$ with $\Dee_G(v,v) = \deg(v) = \sum_{u: uv \in E} w_{uv}$ for $v \in V$.
The random walk matrix of $G$ is denoted by $\Emm_G := \Dee_G^{-1} \Aye_G$.
Note that $\Emm_G$ is a row-stochastic matrix where every row sums to one.
Throughout this text, we will use $\Emm \in \RR^{U \times V}$ to denote row-stochastic operators, where $\Emm \one_V = \one_U$.

The adjoint of the operator $\Bee \in \RR^{V \times U}$, with respect to the
inner-products defined by $\Pi_U$ and $\Pi_V$ on $U$ and $V$, is the unique operator $\Bee^* \in \RR^{U \times V}$ such that
\[ \langle \eff, \Bee \gee\rangle_{\Pi_U} = \langle \Bee^* \eff,
\gee\rangle_{\Pi_V}~~\textrm{ for all } \eff \in \RR^U, \gee \in \RR^V.\]
If $U = V$ and $\Pi_U = \Pi_V$, the
operator $\Bee$ is called self-adjoint if $\Bee^*= \Bee$. 
Note that a real symmetric matrix is self-adjoint with respect to the standard inner-product. 

If $\Emm$ is a row-stochastic self-adjoint operator (with respect to the stationary distribution $\Pi$), 
then the Markov chain described by $\Emm$ is called reversible.
The random walk operator of an edge-weighted undirected graph is described by the self-adjoint row-stochastic operator $\Emm_G$ (with respect to the stationary distribution $\Pi = \Dee_G \one / \sum_v \deg(v)$) and is a reversible Markov chain.

Let $\Why \in \RR^{V \times V}$ be a self-adjoint operator with respect to the inner-product defined by $\Pi$.
It is a fundamental result in linear algebra that $\Why$ has only real eigenvalues, and an orthonormal set of eigenvectors with respect to the inner-product defined by $\Pi$, i.e.~$\langle \eff, \gee \rangle_\Pi = 0$ for eigenvectors $\eff \neq \gee$. 
We write $\lambda_i(\Why)$ for the $i$-th largest eigenvalue of $\Why$ so that $\lambda_1(\Why) \ge \ldots \ge \lambda_{|V|}(\Why)$,
and write $\lambda_{\min}(\Why)$ for the smallest eigenvalue of $\Why$, i.e.~$\lambda_{\min}(\Why) = \lambda_{|V|}(\Why)$.
The largest eigenvalue
$\lambda_1(\Why)$ of a self-adjoint matrix $\Why$ with respect to the measure
$\Pi$ obeys the variational formula
\begin{equation}
    \lambda_1(\Why) = \max\set*{\langle \eff, \Why \eff\rangle_{\Pi} : \eff \in
    \RR^V, \norm{ \eff }_{\Pi} = 1}.\label{eq:var-form}\tag{variational formula}
\end{equation}
It is well-known that the maximizers of the \ref{eq:var-form} are precisely the
unit eigenvectors of $\Why$ corresponding to $\lambda_1(\Why)$, i.e.~$\Why \eff =
\lambda_1(\Why) \cdot \eff$ if and only if $\eff$ maximizes the RHS in the
\ref{eq:var-form}.

Given an arbitrary operator $\Bee \in \RR^{V
\times U}$ we will write $\sigma_i(\Bee)$ for the $i$-th largest singular value of $\Bee$ so that $\sigma_1(\Bee) \ge \ldots \ge \sigma_{\min\set*{|U|,|V|}}(\Bee)$. 
It is well known that the singular values of a real operator $\Bee$ coincide with the eigenvalues of the self-adjoint operator $\Bee\Bee^*$.

A self-adjoint operator $\Aye \in \RR^{V \times V}$ with respect to inner-product defined by $\Pi$ is called positive semi-definite, denoted by $\Aye \succeq_{\Pi} 0$, if it satisfies $\langle \eff, \Aye \eff\rangle_{\Pi} \ge 0$ for all $\eff \in \RR^V$.
The condition is equivalent to the condition that $\lambda_{\min}(\Aye) \ge 0$. 
For self-adjoint operators $\Aye \in \RR^{V \times V}$ and $\Bee \in \RR^{V
\times V}$ with respect to the same inner-product defined by $\Pi$, we
will write $\Aye \preceq_\Pi \Bee$ if
\[ \langle \eff, \Aye \eff\rangle_\Pi \le \langle \eff, \Bee \eff\rangle_{\Pi}
~~\textrm{ for all }~\eff \in \RR^V.\]
This is equivalent to $\Aye -\Bee$ being positive-semidefinite, i.e.~$\Aye - \Bee \succeq_{\Pi} 0$.
If $\Pi$ is just the standard inner-product, we will drop the subscript $\Pi$.

We will use the following results about eigenvalues in \cref{sec:eigenbd} and \cref{sec:sampling}; see e.g.~\cite{Bhatia2013}.

\begin{fact}\label{fac:simple-la}
    Let $\Aye \in \RR^{U \times V}$ and $\Bee \in \RR^{V \times U}$. Then, the non-zero spectrum of $\Aye\Bee$ coincides with that of $\Bee\Aye$ with the same multiplicity.
\end{fact}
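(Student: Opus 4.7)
The plan is to prove the stronger polynomial identity
\[ t^{|V|} \det(tI_{|U|} - \Aye\Bee) \;=\; t^{|U|} \det(tI_{|V|} - \Bee\Aye), \]
from which the statement about nonzero spectra (with algebraic multiplicity) falls out immediately by matching nonzero roots on the two sides. The standard route is Sylvester's determinant identity, which I would derive via a Schur-complement computation on an auxiliary block matrix.

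Concretely, I would introduce the $(|U|+|V|)\times(|U|+|V|)$ block matrix
\[ M(t) = \begin{pmatrix} tI_{|U|} & \Aye \\ \Bee & I_{|V|} \end{pmatrix} \]
and evaluate $\det M(t)$ in two ways. Eliminating the always-invertible bottom-right block via Schur complement yields $\det M(t) = \det(tI_{|U|} - \Aye\Bee)$. For $t \neq 0$, eliminating the top-left block instead gives
\[ \det M(t) = t^{|U|} \det\bigl(I_{|V|} - t^{-1} \Bee\Aye\bigr) = t^{|U|-|V|}\det(tI_{|V|} - \Bee\Aye). \]
Equating the two expressions and multiplying through by $t^{|V|}$ establishes the identity for every $t \neq 0$; since both sides are polynomials in $t$, it extends to all $t$ by continuity.

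From the polynomial identity, if $\lambda \neq 0$ then its multiplicity as a root of $\det(tI_{|U|} - \Aye\Bee)$ and as a root of $\det(tI_{|V|} - \Bee\Aye)$ agree, because the extra factors $t^{|V|}$ and $t^{|U|}$ affect only the multiplicity of $0$. This is exactly the claim. I do not anticipate a real obstacle here; the only care needed is to track the differing sizes $|U|$ and $|V|$ in the Schur-complement step and to note that "with the same multiplicity" refers to algebraic multiplicity (geometric multiplicity is an easy byproduct, since $\eff \mapsto \Bee\eff$ restricts to an injection from the $\lambda$-eigenspace of $\Aye\Bee$ into that of $\Bee\Aye$ whenever $\lambda \neq 0$: if $\Aye\Bee\eff = \lambda\eff$ with $\lambda \neq 0$, then $\Bee\Aye(\Bee\eff) = \lambda(\Bee\eff)$ and $\Bee\eff \neq 0$).
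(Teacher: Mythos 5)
Your proof is correct. The paper does not actually prove this fact---it is stated as a standard linear-algebra result with a citation to a textbook---so there is no in-paper argument to compare against; your derivation of Sylvester's identity $t^{|V|}\det(t\Ide_{|U|} - \Aye\Bee) = t^{|U|}\det(t\Ide_{|V|} - \Bee\Aye)$ via the two Schur complements of the block matrix, followed by matching nonzero roots, is the standard complete proof, and both steps you flag (invertibility of the top-left block only for $t \neq 0$, then extension to all $t$ by polynomial identity) are handled correctly. Your closing remark also settles geometric multiplicities, which is a harmless bonus here since the fact is applied in the paper to $\DownW_{j+1} = \Up_j\Dee_{j+1}$ and $\UpW_j = \Dee_{j+1}\Up_j$, which are self-adjoint with respect to the relevant inner products and hence diagonalizable, so the two notions of multiplicity coincide in the application anyway.
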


\begin{fact}\label{fac:simple-loew}
    Let $\Aye, \Bee \in \RR^{V \times V}$ be two self-adjoint matrices with respect to the inner-product defined by $\Pi$ satisfying
    $\Aye \preceq_\Pi \Bee$.
    Then, $\lambda_i(\Aye) \le \lambda_i(\Bee)$ for all $1 \leq i \leq |V|$.
\end{fact}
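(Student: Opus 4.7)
The plan is to deduce the monotonicity from the Courant--Fischer min-max characterization of eigenvalues. The excerpt already records the variational formula for $\lambda_1$; I would first extend it to the full min-max formula
\[
    \lambda_i(\Why) = \max_{\substack{S \subseteq \RR^V \\ \dim S = i}} \min_{\substack{\eff \in S \\ \norm{\eff}_\Pi = 1}} \langle \eff, \Why \eff\rangle_\Pi
\]
for any operator $\Why$ that is self-adjoint with respect to $\Pi$. This is a routine consequence of the spectral theorem for $\Pi$-self-adjoint operators, which the excerpt already states (existence of a $\Pi$-orthonormal eigenbasis $\eff_1,\ldots,\eff_{|V|}$ with eigenvalues $\lambda_1 \ge \cdots \ge \lambda_{|V|}$). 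The usual two-sided argument carries over verbatim in the $\Pi$-inner product: plugging in $S = \mathrm{span}(\eff_1,\ldots,\eff_i)$ shows the RHS is at least $\lambda_i$, while any $i$-dimensional $S$ must meet $\mathrm{span}(\eff_i,\ldots,\eff_{|V|})$ in a nonzero vector, which shows the inner min is at most $\lambda_i$ for every $S$.

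With the min-max formula in hand the conclusion is immediate. Given $\Aye \preceq_\Pi \Bee$, by definition $\langle \eff, \Aye \eff\rangle_\Pi \le \langle \eff, \Bee \eff\rangle_\Pi$ for every $\eff \in \RR^V$, so for every $i$-dimensional subspace $S$,
\[
    \min_{\substack{\eff \in S \\ \norm{\eff}_\Pi = 1}} \langle \eff, \Aye \eff\rangle_\Pi \;\le\; \min_{\substack{\eff \in S \\ \norm{\eff}_\Pi = 1}} \langle \eff, \Bee \eff\rangle_\Pi.
\]
Taking the maximum of both sides over $S$ of dimension $i$ yields $\lambda_i(\Aye) \le \lambda_i(\Bee)$.

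The only real technical point, and what I would regard as the main obstacle, is justifying the min-max identity in the $\Pi$-inner product rather than the standard one. A clean way to sidestep this is to change basis: writing $\Pi = \Dee^2$ for a positive diagonal $\Dee$, one checks that $\Why$ is $\Pi$-self-adjoint if and only if $\Dee \Why \Dee^{-1}$ is symmetric in the standard sense (and has the same eigenvalues). The hypothesis $\Aye \preceq_\Pi \Bee$ transports to $\Dee \Aye \Dee^{-1} \preceq \Dee \Bee \Dee^{-1}$ in the standard Loewner order, so the result reduces to the classical Weyl monotonicity theorem for symmetric matrices in $\RR^{V \times V}$, which is \cite{Bhatia2013}. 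I would probably present the proof via this reduction for brevity, citing the min-max/Weyl inequalities as the one-line justification.
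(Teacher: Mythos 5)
Your proof is correct: the Courant--Fischer min-max characterization carries over verbatim to the $\Pi$-inner product (equivalently, conjugating by $\mathrm{diag}(\sqrt{\Pi})$ turns a $\Pi$-self-adjoint matrix into a symmetric one with the same spectrum and transports $\preceq_\Pi$ to the standard Loewner order), and monotonicity of the quadratic forms then yields $\lambda_i(\Aye) \le \lambda_i(\Bee)$ for every $i$. The paper does not prove this fact itself but cites it as standard (see the reference to \cite{Bhatia2013} preceding \cref{fac:simple-loew}), and your argument is exactly the standard one behind that citation; the only implicit hypothesis worth noting is that $\Pi$ has full support, which is already needed for $\langle \cdot,\cdot\rangle_\Pi$ to be an inner product and for $\mathrm{diag}(\sqrt{\Pi})^{-1}$ to exist.
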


\begin{theorem}[Cauchy Interlacing Theorem]\label{thm:cauchy-int}
    Let $\Aye \in \RR^{V \times V}$ be a symmetric matrix and $\Bee \in
    \RR^{U \times U}$ be a principal submatrix of $\Aye$.
    Let $n = |V|$ and $m = |U|$. For any $0 \leq j \leq m$,
    \[ \lambda_j(\Aye) \ge \lambda_j(\Bee) \ge \lambda_{n - m + j}(\Aye). \]
\end{theorem}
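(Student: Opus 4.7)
The plan is to derive both inequalities from the Courant--Fischer min-max characterization of the eigenvalues of a real symmetric matrix, combined with the natural isometric embedding $\iota \colon \RR^U \to \RR^V$ that sends $y$ to the vector with entry $y_u$ at each $u \in U$ and zero on $V \setminus U$. The single algebraic fact I need is that, because $\Bee$ is the principal submatrix of $\Aye$ indexed by $U$, the quadratic form is preserved under $\iota$: $\langle \iota y, \Aye\, \iota y\rangle = \sum_{u,v \in U} y_u \Aye(u,v) y_v = \langle y, \Bee y\rangle$ and $\|\iota y\|_{\ell_2} = \|y\|_{\ell_2}$ for every $y \in \RR^U$. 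The excerpt states the variational formula only for the top eigenvalue, so I would first record the standard extension
\[ \lambda_j(\Why) \;=\; \max_{\dim S = j}\, \min_{x \in S,\, \|x\|_{\ell_2} = 1}\, \langle x, \Why x\rangle \;=\; \min_{\dim S = N - j + 1}\, \max_{x \in S,\, \|x\|_{\ell_2} = 1}\, \langle x, \Why x\rangle, \]
where $N$ is the ambient dimension; both equalities follow immediately from the spectral theorem by testing on the span of the top $j$, respectively bottom $N - j + 1$, eigenvectors of $\Why$.

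For the upper bound $\lambda_j(\Aye) \geq \lambda_j(\Bee)$, I would pick a $j$-dimensional subspace $T \subseteq \RR^U$ achieving the max-min for $\Bee$, so that $\min_{y \in T,\, \|y\|_{\ell_2} = 1}\langle y, \Bee y\rangle = \lambda_j(\Bee)$. The image $\iota(T) \subseteq \RR^V$ is also $j$-dimensional, and by the key identity its minimum Rayleigh quotient for $\Aye$ equals $\lambda_j(\Bee)$; the max of this minimum over all $j$-dimensional subspaces of $\RR^V$ is therefore at least $\lambda_j(\Bee)$, proving the claim.

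For the lower bound $\lambda_j(\Bee) \geq \lambda_{n-m+j}(\Aye)$, the key bookkeeping is $n - (n - m + j) + 1 = m - j + 1$, which is precisely the subspace dimension appearing in the min-max formula for $\lambda_j(\Bee)$ on $\RR^U$. For any $(m - j + 1)$-dimensional $T \subseteq \RR^U$, the embedded $\iota(T) \subseteq \RR^V$ has exactly the right dimension to be a test subspace for $\lambda_{n-m+j}(\Aye)$, and the identity gives $\max_{x \in \iota(T),\, \|x\|_{\ell_2} = 1}\langle x, \Aye x\rangle = \max_{y \in T,\, \|y\|_{\ell_2} = 1}\langle y, \Bee y\rangle$. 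Consequently $\lambda_{n-m+j}(\Aye)$ is bounded above by the right-hand side for every choice of $T$, and taking the infimum over $T$ yields $\lambda_{n-m+j}(\Aye) \leq \lambda_j(\Bee)$.

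There is no serious obstacle: the proof is essentially a dimension count layered on top of Courant--Fischer, and the only step requiring even a moment's care is the arithmetic verification that $(m - j + 1)$-dimensional subspaces of $\RR^U$ embed to subspaces of $\RR^V$ of exactly the dimension required to serve as witnesses for $\lambda_{n - m + j}(\Aye)$ in the min-max formula.
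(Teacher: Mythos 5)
The paper does not prove this statement: it is quoted in the preliminaries as a standard linear-algebra fact with a pointer to a textbook (\cite{Bhatia2013}), so there is no internal argument to compare yours against. Your proof is the standard one — the Courant--Fischer min-max characterization together with the zero-padding isometry $\iota\colon \RR^U \to \RR^V$, which preserves both the norm and the quadratic form — and both halves are carried out correctly, including the dimension bookkeeping $n-(n-m+j)+1 = m-j+1$ for the lower bound. The only point to tighten is your parenthetical justification of the two-sided min-max formula: testing on the span of the top $j$ (resp.\ bottom $N-j+1$) eigenvectors only yields one inequality in each characterization; the reverse inequalities require the usual dimension-count argument that any $j$-dimensional subspace intersects the span of the bottom $N-j+1$ eigenvectors nontrivially. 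Since that is precisely the standard proof of Courant--Fischer, which you are entitled to quote just as the paper quotes Cauchy interlacing itself, this is a presentational remark rather than a gap.
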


\begin{theorem}[Weyl Interlacing Theorem]\label{thm:weyl-int}
    Let $\Aye, \Bee \in \RR^{V \times V}$ be two symmetric matrices.
    For any $i,j$,
    \[\lambda_{i+ j - 1}(\Aye + \Bee) \le \lambda_i(\Aye) + \lambda_j(\Bee).\]
\end{theorem}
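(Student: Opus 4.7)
The plan is to deduce Weyl's inequality from the Courant--Fischer min-max characterization of eigenvalues together with a dimension count on the intersection of appropriate eigenspaces of $\Aye$ and $\Bee$. Let $n = |V|$. Recall that for any self-adjoint $\Eye \in \RR^{V \times V}$,
\[ \lambda_k(\Eye) = \min_{\substack{S \subseteq \RR^V \\ \dim S = n - k + 1}} \max_{\substack{\eff \in S \\ \eff \neq 0}} \frac{\langle \eff, \Eye \eff\rangle}{\langle \eff, \eff\rangle}. \]
Choosing this ``min-over-large-subspaces'' direction of Courant--Fischer (rather than the dual max-over-small-subspaces version) is essentially the only subtlety, since it is what will ultimately make the relevant intersection of subspaces large enough.

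First, I would diagonalize $\Aye$ and $\Bee$ in orthonormal eigenbases $a_1, \ldots, a_n$ and $b_1, \ldots, b_n$ corresponding to their respective eigenvalues listed in decreasing order, and define
\[ S_A := \mathrm{span}\{a_i, a_{i+1}, \ldots, a_n\}, \qquad S_B := \mathrm{span}\{b_j, b_{j+1}, \ldots, b_n\}, \]
so that $\dim S_A = n - i + 1$ and $\dim S_B = n - j + 1$. The spectral decomposition immediately gives $\langle \eff, \Aye \eff\rangle \le \lambda_i(\Aye) \cdot \langle \eff, \eff\rangle$ for every $\eff \in S_A$, and analogously $\langle \eff, \Bee \eff\rangle \le \lambda_j(\Bee) \cdot \langle \eff, \eff\rangle$ for every $\eff \in S_B$.

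Next, by the standard dimension formula for subspace intersections,
\[ \dim(S_A \cap S_B) \ge \dim S_A + \dim S_B - n = n - i - j + 2. \]
I would then pick any subspace $S \subseteq S_A \cap S_B$ of dimension exactly $n - (i + j - 1) + 1 = n - i - j + 2$. For every nonzero $\eff \in S$, summing the two eigenvalue bounds yields
\[ \langle \eff, (\Aye + \Bee) \eff\rangle \le \bigl(\lambda_i(\Aye) + \lambda_j(\Bee)\bigr) \cdot \langle \eff, \eff\rangle. \]
Substituting this particular $S$ into the min-max formula for $\lambda_{i+j-1}(\Aye + \Bee)$ at $k = i + j - 1$ then yields the desired bound. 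Beyond the dimension bookkeeping just described, no substantive obstacle is anticipated.
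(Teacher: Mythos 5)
Your proof is correct: the Courant--Fischer min-max characterization applied to a subspace of dimension $n-i-j+2$ inside $S_A \cap S_B$ (whose existence follows from the dimension count, and which is nontrivial exactly when the index $i+j-1$ is at most $n$) gives precisely $\lambda_{i+j-1}(\Aye+\Bee) \le \lambda_i(\Aye)+\lambda_j(\Bee)$. The paper does not prove this statement itself but cites it as a standard fact, and your argument is the standard textbook proof, so there is nothing to add.
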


\subsection{Simplicial Complexes}

A simplicial complex $X$ is a collection of subsets that is downward closed, i.e.~if $\beta \in X$ and $\alpha \subset \beta$ then $\alpha \in X$.
The elements $\alpha, \beta$ in $X$ are called faces/simplices of $X$.
The dimension of a face $\alpha$ is defined as $|\alpha|-1$, e.g.~an edge is of dimension $1$, a vertex/singleton is of dimension $0$, the empty set is of dimension $-1$.
The collection of faces of dimension $j$ is denoted by $X(j)$.
The dimension of a simplicial complex is defined as the maximum dimension of its faces.
A $d$-dimensional simplicial complex is called pure if every maximal face is of dimension $d$.
All simplicial complexes considered in this paper are pure.

\subsubsection*{Weighted Simplicial Complexes}

A simplicial complex $X$ can be equipped with a weighted function which assigns a positive weight to each face of $X$.
We follow the formalism of~\cite{DiksteinDFH18} where the weight function is a probability distribution $\Pi$ on the faces of the same dimension.
Let $X$ be a $d$-dimensional simplicial complex.
Given a probability distribution $\Pi := \Pi_d$ on $X(d)$, 
we can inductively obtain probability distributions $\Pi_j$ on all $X(j)$ by considering the marginal distributions, i.e.~
\begin{equation}
    \Pi_{j}(\alpha) = \frac{1}{j + 2}  \sum_{\beta \in X(j+ 1), \atop \beta \supset \alpha} \Pi_{j + 1}(\beta). \label{eq:onestep}
\end{equation}
Equivalently, we can understand $\Pi_j$ as the probability distribution of the following random process:
Sample a random face $\beta \in X(d)$ using the probability distribution $\Pi_d$, and then sample a uniform random subset of $\beta$ in $X(j)$.
The pair $(X, \Pi)$ will be referred as a weighted simplicial complex. 
We write $(X, \Pi)$ simply as $X$ when $\Pi$ is the uniform distribution.

\subsubsection*{Links and Graphs} 

Let $(X,\Pi)$ be a pure $d$-dimensional weighted simplicial complex.
The link $X_\alpha$ of a face $\alpha$ is the simplicial complex defined as
\[
    X_{\alpha} := \{ \beta \setminus \alpha \mid \beta \in X, \beta \supset \alpha \}.
\]
The probability distributions $\Pi_0, \ldots, \Pi_d$ on $X$ can naturally be used to define the probability distributions $\Pi^{\alpha}_0, \ldots, \Pi^{\alpha}_{d-|\alpha|}$ on $X_{\alpha}$ using conditional probability.
Suppose $\alpha \in X(j)$.
The probability distribution $\Pi_l^{\alpha}$ for $X_{\alpha}(l)$ is defined as
\begin{equation}
    \Pi^\alpha_l(\tau) 
    = \Pr_{\beta \sim \Pi_{j + 1 +l}}\sqbr*{ \beta = \alpha \cup \tau \mid \beta \supset \alpha } 
    = \frac{\Pi_{j + l + 1}(\alpha \cup \tau)}{\binom{|\alpha \cup \tau|}{|\alpha|} \cdot \Pi_j(\alpha) }
    \qquad {\rm~for~} \tau \in X_{\alpha}(l),
    \label{eq:link-def}
\end{equation}
where the latter equality is obtained by applying \cref{eq:onestep} repeatedly. 

Given a link $X_{\alpha}$, the graph $G_{\alpha} = (X_{\alpha}(0), X_{\alpha}(1), \Pi^{\alpha}_1)$ is defined as the $1$-skeleton of $X_{\alpha}$.
More explicitly, each singleton $\{v\}$ in $X_{\alpha}$ is a vertex $v$ in $G_{\alpha}$, each pair $\{u,v\}$ in $X_{\alpha}$ is an edge $uv$ in $G_{\alpha}$, and the weight of $uv$ in $G_{\alpha}$ is equal to $\Pi_1^{\alpha}(\{u,v\})$. 
A simple observation is that if $X$ is a pure $d$-dimensional simplicial complex and $\Pi$ is the uniform distribution on $X(d)$, then for any $\alpha \in X(d-2)$ the weighting $\Pi_1^{\alpha}$ on the edges of $G_{\alpha}$ is uniform.
We will use this observation in \cref{sec:sampling}.

\subsection{Local Spectral Expanders} \label{ss:local-spectral}

\subsubsection*{Random Walk Matrices}

The definition of local spectral expanders will be based on the random walk matrix of $G_{\alpha}$.
Let $\Aye_{\alpha}$ be the adjacency matrix of $G_{\alpha}$.
Let $\Dee_{\alpha}$ be the diagonal degree matrix where $\Dee_{\alpha}(x,x) = \sum_{y} \Aye_{\alpha}(x,y) = 2\Pi_0^{\alpha}(x)$ where the last equality is by \cref{eq:onestep}.
The random walk matrix $\Emm_{\alpha}$ of $G_{\alpha}$ is defined as $\Emm_{\alpha} := \Dee_{\alpha}^{-1} \Aye_{\alpha}$, with
\[ \Emm_\alpha(x, y) = \frac{\Pi_1^{\alpha}(x, y)}{2 \Pi^\alpha_0(x)} \quad \textrm{ for all } \set*{x, y} \in X_\alpha(1).\]
The distribution $\Pi_0^{\alpha}$ is the stationary distribution of $\Emm_{\alpha}$, as 
\[(\Pi_0^{\alpha})^\top \Emm_{\alpha} = (\Pi_0^{\alpha})^\top \Dee_{\alpha}^{-1} \Aye_{\alpha} = \one^\top \Aye_{\alpha} = (\Pi_0^{\alpha})^\top.\]
The matrix $\Emm_{\alpha}$ is self-adjoint with respect to the inner-product defined by $\Pi_0^{\alpha}$, as 
\[\langle \eff, \Emm_{\alpha} \gee \rangle_{\Pi_0^{\alpha}}
= \langle \eff, \Dee_{\alpha}^{-1} \Aye_{\alpha} \gee \rangle_{\Pi_0^{\alpha}}
= \langle \eff, \Aye_{\alpha} \gee \rangle
= \langle \Aye_{\alpha} \eff, \gee \rangle
= \langle \Dee_{\alpha}^{-1} \Aye_{\alpha} \eff, \gee \rangle_{\Pi_0^{\alpha}}
= \langle \Emm_{\alpha} \eff, \gee \rangle_{\Pi_0^{\alpha}}.
\]
So, $\Emm_{\alpha}$ have only real eigenvalues, and an orthonormal basis of eigenvectors with respect to the inner-product defined by $\Pi_0^{\alpha}$.
The largest eigenvalue of $\Emm_{\alpha}$ is $1$, as $\Emm_{\alpha} \one = \one$ and $\Emm_{\alpha}$ is row-stochastic.

Given a vector $\eff$, we will be interested in writing it as $\eff = \eff^{\one} + \eff^{\perp\one}$, so that $\eff^{\one} = c \one$ for some scalar $c$ and $\langle \eff^{\one}, \eff^{\perp\one} \rangle_{\Pi_0^{\alpha}} = 0$.
It follows that $c = \frac{\langle \eff, \one \rangle_{\Pi_0^{\alpha}}}{ \langle \one, \one \rangle_{\Pi_0^{\alpha}}} = \langle \eff, \one \rangle_{\Pi_0^{\alpha}} = \Exp_{x \sim \Pi_0^\alpha}[\eff(x)]$.
We write $\Jay_\alpha = \one (\Pi_0^\alpha)^\top$ as the operator to map $\eff$ to $\eff^{\one}$, so that
\begin{equation} 
    \Jay_\alpha \eff 
    = (\one (\Pi_0^\alpha)^\top) \eff
    = \langle \eff, \Pi_0^{\alpha} \rangle  \cdot \one 
    = \Exp_{x \sim \Pi_0^\alpha}[\eff(x)] \cdot \one 
    = \eff^{\one} \label{eq:J}\tag{projector to constant functions}.
\end{equation}

\subsubsection*{Local Spectral Expanders and Oppenheim's Theorem}

Let $(X, \Pi)$ be a pure $d$-dimensional weighted simplicial complex. 
Define  
\[\gamma_j := \gamma_j(X, \Pi) = \max_{\alpha \in X(j)}
\lambda_2(\Emm_\alpha) ~~\textrm{
    for all $j =-1,\ldots,d-2$},\] 
where $\lambda_2(\Emm_\alpha)$ is the
second largest eigenvalue of the operator $\Emm_\alpha$.
We say $X$ is a $\gamma$-local-spectral expander if $\gamma_i \le \gamma$ for $-1 \leq i \leq d-2$.

Oppenheim's Theorem relates the second eigenvalue of the graph of a lower-dimensional link to that of a higher-dimensional link.
It works for any weighted simplicial complex with a ``balanced'' weight function $w$, where for any $\alpha \in X(k)$ and any $-1 \leq k \leq d-1$ it holds that
\[w(\alpha) = c_k \sum_{\beta \in X(k+1), \atop \beta \supset \alpha} w(\beta)\]
for some constant $c_k$ that only depends on $k$.
Note that the weight function in \cref{eq:onestep} satisfies this condition with $c_k = 1/(k+2)$.

\begin{theorem}[Oppenheim's Theorem]\label{thm:oppenheim}
    Let $(X,\Pi)$ be a pure $d$-dimenisonal weighted simplicial complex where $\Pi$ satisfies \cref{eq:onestep}. 
    For any $0 \leq j \leq d-2$, 
    if $G_\alpha$ is connected for every $\alpha \in X(j-1)$, then
    \[ \gamma_{j - 1}\le \frac{\gamma_j}{1 - \gamma_j}.\]
\end{theorem}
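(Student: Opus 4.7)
The plan is to bound $\lambda_2(\Emm_\alpha)$ for an arbitrary $\alpha \in X(j-1)$ by ``pushing'' a top nontrivial eigenvector of $\Emm_\alpha$ into the one-higher-dimensional links $X_{\alpha\cup\{v\}}$, where the assumed bound $\gamma_j$ is directly available. Fix such an $\alpha$, let $\mu := \lambda_2(\Emm_\alpha)$, and let $\eff \in \RR^{X_\alpha(0)}$ be an eigenvector satisfying $\Emm_\alpha \eff = \mu \eff$, $\langle \eff, \one\rangle_{\Pi_0^\alpha} = 0$, and $\norm{\eff}_{\Pi_0^\alpha} = 1$; note $\mu < 1$ by the connectedness hypothesis on $G_\alpha$. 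For each vertex $v$ of $G_\alpha$, let $g_v$ denote the restriction of $\eff$ to $X_{\alpha\cup\{v\}}(0)$ (the neighbors of $v$), and write $g_v = c_v \one + h_v$ for its orthogonal decomposition under $\langle\cdot,\cdot\rangle_{\Pi_0^{\alpha\cup\{v\}}}$.

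The first key identity, obtained by unpacking \cref{eq:onestep} and \cref{eq:link-def}, is $\Emm_\alpha(v,u) = \Pi_0^{\alpha\cup\{v\}}(u)$. Consequently $c_v = \Exp_{u\sim\Pi_0^{\alpha\cup\{v\}}}[\eff(u)] = (\Emm_\alpha \eff)(v) = \mu \eff(v)$, and a short Fubini computation gives $\Exp_v \norm{g_v}_{\Pi_0^{\alpha\cup\{v\}}}^2 = \norm{\eff}_{\Pi_0^\alpha}^2 = 1$, whence $\Exp_v c_v^2 = \mu^2$ and $\Exp_v \norm{h_v}^2 = 1-\mu^2$. The decisive identity, which requires a second application of the balanced weight condition \cref{eq:onestep} (namely that the two-step process ``sample $v$ from $\Pi_0^\alpha$, then sample an edge from $\Pi_1^{\alpha\cup\{v\}}$'' reproduces the $\Pi_1^\alpha$ marginal on edges of $G_\alpha$, via the triangle weight identity $\Pi_0^\alpha(v)\,\Pi_1^{\alpha\cup\{v\}}(\{u,u'\}) = \tfrac{1}{3}\,\Pi_2^\alpha(\{u,u',v\})$), is
\[
    \Exp_{v\sim\Pi_0^\alpha} \langle g_v, \Emm_{\alpha\cup\{v\}} g_v\rangle_{\Pi_0^{\alpha\cup\{v\}}} \;=\; \langle \eff, \Emm_\alpha \eff\rangle_{\Pi_0^\alpha} \;=\; \mu.
\]

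To close the argument, I would invoke the spectral hypothesis $\lambda_2(\Emm_{\alpha\cup\{v\}}) \le \gamma_j$ pointwise in $v$: since $h_v$ is orthogonal to constants in the link, the variational formula gives $\langle g_v, \Emm_{\alpha\cup\{v\}} g_v\rangle \le c_v^2 + \gamma_j\,\norm{h_v}^2$. Taking $\Exp_v$ and plugging in the three identities above yields the scalar inequality $\mu \le \mu^2 + \gamma_j(1-\mu^2)$, which rearranges and factors as $(\mu-1)\bigl((1-\gamma_j)\mu - \gamma_j\bigr) \ge 0$. Since $\mu < 1$ the first factor is strictly negative, forcing $\mu \le \gamma_j/(1-\gamma_j)$; maximizing over $\alpha \in X(j-1)$ delivers the claimed bound $\gamma_{j-1} \le \gamma_j/(1-\gamma_j)$. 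I expect the main obstacle to be the bookkeeping for the third ``Fubini'' identity: one must match the weights relating $\Pi_0^\alpha$, $\Pi_1^{\alpha\cup\{v\}}$ and $\Pi_2^\alpha$, which is precisely the step that uses the balanced weight hypothesis \cref{eq:onestep} in a non-trivial way; once that is done, the rest is a one-line Rayleigh-quotient comparison in the link followed by solving a quadratic.
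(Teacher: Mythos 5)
Your argument is correct: the localization identities you use (the link-marginal identity $\Emm_\alpha(v,u)=\Pi_0^{\alpha\cup\{v\}}(u)$, the Fubini identity for $\Exp_v\norm{g_v}^2$, and the averaged Rayleigh-quotient identity $\Exp_{v\sim\Pi_0^\alpha}\langle g_v,\Emm_{\alpha\cup\{v\}}g_v\rangle_{\Pi_0^{\alpha\cup\{v\}}}=\langle\eff,\Emm_\alpha\eff\rangle_{\Pi_0^\alpha}$) all check out against \cref{eq:onestep} and \cref{eq:link-def}, and the quadratic $\mu\le\mu^2+\gamma_j(1-\mu^2)$ together with $\mu<1$ gives the stated bound. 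The paper itself states \cref{thm:oppenheim} without proof, citing \cite{Oppenheim18}, and your argument is essentially Oppenheim's original ``trickling down'' proof, so there is nothing to add beyond noting that the conclusion is, as usual, vacuous when $\gamma_j=1$ (the division requires $\gamma_j<1$).
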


An inductive argument proves the following corollary.

\begin{corollary}[Oppenheim's Corollary]\label{cor:oppenheim}
    Let $(X,\Pi)$ be a pure $d$-dimenisonal weighted simplicial complex where $\Pi$ satisfies \cref{eq:onestep}.
    If $G_\alpha$ is connected for every $\alpha \in X(k)$ and every $k \leq d-2$, 
    then
    \[ \gamma_j \le \frac{\gamma_{d - 2}}{1 - (d-2- j)  \cdot \gamma_{d-2}}.\]
\end{corollary}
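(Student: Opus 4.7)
The plan is to prove the corollary by downward induction on $j$, starting from $j = d-2$ and stepping down to $j = -1$, using \cref{thm:oppenheim} as the inductive engine.

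\textbf{Base case.} For $j = d-2$, the right-hand side equals $\gamma_{d-2}/(1 - 0 \cdot \gamma_{d-2}) = \gamma_{d-2}$, so the claim reduces to $\gamma_{d-2} \le \gamma_{d-2}$, which is trivial.

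\textbf{Inductive step.} Suppose the bound holds at dimension $j \ge 0$, i.e.\
\[ \gamma_j \le \frac{\gamma_{d-2}}{1 - (d-2-j)\gamma_{d-2}}. \]
Since $G_\alpha$ is connected for every $\alpha \in X(j-1)$ by hypothesis, \cref{thm:oppenheim} applies at level $j$ and yields
\[ \gamma_{j-1} \le \frac{\gamma_j}{1 - \gamma_j}. \]
The function $x \mapsto x/(1-x)$ is monotone increasing on $[0,1)$, so substituting the inductive upper bound for $\gamma_j$ and simplifying gives
\[ \gamma_{j-1} \le \frac{\gamma_{d-2}/(1-(d-2-j)\gamma_{d-2})}{1 - \gamma_{d-2}/(1-(d-2-j)\gamma_{d-2})} = \frac{\gamma_{d-2}}{1 - (d-2-j)\gamma_{d-2} - \gamma_{d-2}} = \frac{\gamma_{d-2}}{1 - (d-1-j)\gamma_{d-2}}, \]
and $d-1-j = d-2-(j-1)$ is exactly the constant required at dimension $j-1$. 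This completes the induction.

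\textbf{Main obstacle.} The only subtlety is ensuring the denominators remain positive along the induction, so that the monotonicity of $x/(1-x)$ can be invoked and \cref{thm:oppenheim} can be applied (which implicitly requires $\gamma_j < 1$). This is automatic whenever the target bound is nontrivial, i.e.\ whenever $(d-2-j)\gamma_{d-2} < 1$; if instead the right-hand side is vacuous, the corollary holds trivially. In the regime of interest (e.g.\ $\gamma_{d-2} \le 1/d$ as in \cref{c:opp}), positivity is preserved at every step, so the induction goes through cleanly.
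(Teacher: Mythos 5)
Your proof is correct and is exactly the argument the paper intends: the paper only remarks that ``an inductive argument proves'' \cref{cor:oppenheim} from \cref{thm:oppenheim}, and your downward induction combined with the monotonicity of $x \mapsto x/(1-x)$ on $(-\infty,1)$ is that argument (note the connectivity hypothesis guarantees $\gamma_j<1$ at every level, so the theorem and the monotonicity step always apply). The one imprecision is your handling of the degenerate case: when $(d-2-j)\gamma_{d-2} \ge 1$ the right-hand side is nonpositive or undefined and the inequality is not ``trivially'' true; rather, the statement is implicitly read under the assumption that the denominators stay positive (as in \cref{c:opp}, where $\gamma \le 1/d$), and in that regime your induction is complete.
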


\subsection{Higher Order Random Walks}\label{sec:updown-defn}

\subsubsection*{Up and Down Operators}

Let $(X, \Pi)$ be a pure $d$-dimensional weighted simplicial complex.
In the following definitions, 
$\alpha \in X(k)$, $\beta \in X(k+1)$, $\eff \in \RR^{X(j)}$, $\gee \in \RR^{X(k+1)}$, and $j \in \{-1,0,\ldots,d-1\}$.

The $j$-th up operator $\Up_j: \RR^{X(j)} \to \RR^{X(j + 1)}$ is defined as 
\begin{equation}
    [\Up_j \eff](\beta) 
    = \frac{1}{j+2} \sum_{x \in \beta} \eff(\beta \backslash x) 
    = \sum_{\alpha \subset \beta,\atop \alpha \in X(j)} \frac{\eff(\alpha)}{j+2}.
    \label{eq:up-def}\tag{up operator}
\end{equation}
The $(j+1)$-st down operator $\Dee_{j+1}: \RR^{X(j+1)} \to \RR^{X(j)}$ is defined as
\[ [\Dee_{j+1} \gee](\alpha) 
= \sum_{x \in X_\alpha(0)} \frac{\Pi_{j+1}(\alpha \cup x)}{(j + 2) \Pi_j(\alpha)} \cdot \gee(\alpha \cup x)
= \sum_{\beta \supset \alpha,
\atop \beta \in X(j+1)} \frac{\Pi_{j+1}(\beta) \cdot \gee(\beta)}{(j + 2) \Pi_j(\alpha)}  
\label{eq:down-def}\tag{down operator}  \]
It can be checked~\cite{KaufmanO18,DiksteinDFH18} that the adjoint of $\Up_j: \RR^{X(j)} \to \RR^{X(j + 1)}$ with respect to the inner-products defined by $\Pi_{j+1} \in \RR^{X(j + 1)}$ and $\Pi_j \in \RR^{X(j)}$ is $\Dee_{j+1}$, i.e.
\begin{equation} \label{eq:adjoint}
    \langle \gee, \Up_j \eff\rangle_{\Pi_{j+1}} 
    = \langle \Dee_{j+1} \gee, \eff \rangle_{\Pi_{j}}  
    ~~\textrm{ for all } \gee \in \RR^{X(j+1)}, \eff \in \RR^{X(j)}.
    \tag{adjointness}
\end{equation}
And it follows that the adjoint of $\Dee_{j+1}$ with respect to the inner-products defined by $\Pi_j$ and $\Pi_{j+1}$ is $\Up_j$, i.e.
    $\langle \eff, \Dee_{j+1} \gee\rangle_{\Pi_{j}} 
    = \langle \Up_{j} \eff, \gee \rangle_{\Pi_{j+1}}
    $  
    for all $\gee \in \RR^{X(j+1)}, \eff \in \RR^{X(j)}$.

\begin{remark}
    We have stayed consistent with the notations in~\cite{DiksteinDFH18}, and named $\Up_j$ and $\Dee_{j+1}$ up and down operators
    with their right-action on functions (or vectors) in mind. However, in terms of random walks, $\Up_j$ describes a random down-movement from $X(j+1)$ to $X(j)$, whereas $\Dee_{j+1}$ describes a random up-movement from $X(j)$ to $X(j+1)$, since the action of the probability
    distribution is from the left.
\end{remark}

\subsubsection*{Down-Up Walk, Up-Down Walk, and Non-Lazy Up-Down Walk}

We use the up and down operators to define three random walk operators on $X(j)$.
The $j$-th down-up walk $\DownW_j$ and the $j$-th up-down walk $\UpW_j$ are defined as
\begin{equation}
    \DownW_j = \Up_{j-1} \Dee_j
    \quad {\rm and} \quad
    \UpW_j = \Dee_{j+1}\Up_j.
    \tag{down-up walk, up-down walk}
\end{equation}
As $\Up_{i}^*= \Dee_{i + 1}$, it is easy to observe that these operators are
positive semi-definite. One useful property of $\UpW_j$ and $\DownW_j$ is that they have the same non-zero spectrum with the same multiplicity by \cref{fac:simple-la},
and in particular $\lambda_2(\UpW_j) = \lambda_2(\DownW_j)$.
Also, we define the $j$-th non-lazy up-down walk as
\begin{equation}
    \NUpW_j = \frac{j+2}{j+1} \parens*{\UpW_j - \frac{1}{j+2} \Ide },\label{eq:nupw-def}\tag{non-lazy up-down walk}
\end{equation}
which is the up-down walk conditioned on not looping. 
It follows from the \ref{eq:adjoint} of $\Up_j$ and $\Dee_{j+1}$ that all $\DownW_{j}$, $\UpW_{j}$, and $\NUpW_j$ are self-adjoint with respect to the inner-product defined by $\Pi_j$, e.g. given any $\eff_1, \eff_2 \in \RR^{X(j)}$,
\begin{equation}
    \langle \eff_1, \UpW_{j} \eff_2 \rangle_{\Pi_j}
    = \langle \eff_1, \Dee_{j+1} \Up_j \eff_2 \rangle_{\Pi_j}
    = \langle \Up_{j} \eff_1,  \Up_j \eff_2 \rangle_{\Pi_{j+1}}
    = \langle \Dee_{j+1} \Up_{j} \eff_1,  \eff_2 \rangle_{\Pi_{j}}
    = \langle \UpW_{j} \eff_1, \eff_2 \rangle_{\Pi_j}.
\end{equation}
These imply that $\Pi_j$ is the stationary distribution for all these random walks $\DownW_{j}$, $\UpW_{j}$, and $\NUpW_j$,
e.g. putting $\eff_1 = \one$ and $\eff_2 = \chi_i$ into $\Pi_j^\top$, then
\begin{equation} \label{eq:stationary}
    \Pi_j^\top \UpW_j(i)
    = \langle \one, \UpW_j \chi_i \rangle_{\Pi_j}
    = \langle \UpW_j \one,  \chi_i \rangle_{\Pi_j}
    = \langle \one,  \chi_i \rangle_{\Pi_j}
    = \Pi_j^\top(i)
    \quad \implies \quad
    \Pi_j^\top \UpW_j = \Pi_j^\top.
\end{equation}

{\bf Combinatorial Interpretation:}
We can understand the higher order random walks as a random walk on a bipartite graph between $X(j)$ and $X(j+1)$ as explained in~\cite{AnariLOV18, DinurK17}.
Consider the bipartite graph $H = (X(j), X(j + 1), E)$ in which a face $\alpha
\in X(j)$ and a face $\beta \in X(j + 1)$ are connected if and only if $\alpha \subset \beta$. 
The edge $\set{\alpha, \beta} \in H$ is assigned the weight
$\frac{1}{j+ 2} \cdot \Pi_{j+ 1}(\beta)$. Using \cref{eq:onestep}, it can be seen that the weighted degree of any $\alpha \in X(j)$ is $\Pi_j(\alpha)$. 
And the weighted degree of any $\beta \in X(j+ 1)$ is exactly $\Pi_{j+1}(\beta)$.  
Thus, the graph $H$ has the (weighted) random walk matrix
\[ \Emm_H = \begin{pmatrix} 0 & \Up_j\\
\Dee_{j + 1} & 0\end{pmatrix}.\]
One step of the down-up walk $\DownW_{j + 1}$ can be thought as a two step random walk in $\Emm_H$:
starting from some $\beta \in X(j+1)$, 
the random walk will go down from $\beta \in X(j+ 1)$ to $\alpha \in X(j)$ by dropping an element of $\beta$, which is chosen uniformly at random as prescribed by $\Up_{j}$,
and then the random walk will go up from $\alpha \in X(j)$ to a random face $\beta' \in X(j + 1)$ which contains $\alpha$ as prescribed by $\Dee_{j+1}$. 
Similarly, one step of the up-down walk $\UpW_{j}$ can be thought as a two step random walk in $\Emm_H$ starting from some $\alpha \in X(j)$.
More precisely,
\[ \Emm_H^2 = \begin{pmatrix}
    \Up_j \Dee_{j + 1} & 0 \\
    0 & \Dee_{j + 1}\Up_j & 0
\end{pmatrix} = \begin{pmatrix} \DownW_{j+1} & 0\\
0 & \UpW_{j} \end{pmatrix}.\] 
It is instructive to check that when the distribution $\Pi$ of the simplicial complex is the uniform distribution, then the down-up walks and the up-down walks are as described as in the introduction.

\subsubsection*{Longer Random Walks}

Suppose now $-1 \le a < b \le d$. We define the up-down walk on $X(a)$ through
$X(b)$ to be
\[ \UpW_{a, b} = \Dee_{a+1} \cdots \Dee_{b} \cdot \Up_{b -1} \cdots
\Up_{a}.\]
Similar to the intuition that was presented about the up-down
and the down-up walks, we can think of $\UpW_{a, b}$ as simulating
two-steps of the
random walk starting from some face $\alpha \in X(a)$ 
on the weighted bipartite graph $H = (X(a), X(b), E)$
where $\set*{\alpha, \beta}$ is an edge of this graph with weight proportional to $\Pi_b(\beta)$ whenever $\alpha \subset \beta$.

\subsection{Mixing Times of Markov Chains}

Recall that two distributions $\Pi$ and $\Pi'$ are said to be
\ref{eq:eps-close} if
\begin{equation}
    \norm{\Pi - \Pi'}_{\ell_1} = \sum_{x \in V} |\Pi(x) - \Pi'(x)| \le \ee. \tag{$\ee$-close}
    \label{eq:eps-close}
\end{equation}
The \ref{eq:mix} $T(\ee, \Pii)$ of the random walk operator $\Pii$ is defined 
to be the least time step where the distribution of the random walk is $\ee$-close to the stationary distribution $\Pi$ of $\Pii$ in the $\ell_1$ distance, i.e.~
\begin{equation}
    T(\ee, \Pii) = \min\set*{ t \in \NN_{\ge 0} : \norm{\Pii^t(x, \bullet) -
    \Pi}_{\ell_1}\le \ee \textrm{ for all }x \in V}. \label{eq:mix}\tag{mixing time}
\end{equation}

For our applications in sampling in \cref{sec:sampling}, we will use the following well known relation between the mixing time of the random walk and the spectral gap of its transition matrix (see e.g.~\cite[Proposition 1.12]{MontenegroT05}). 

\begin{theorem}[Spectral Mixing Time Bound]\label{thm:spec-mix-bd}
    Let $\Pii \in \RR^{V \times V}$ be a random walk matrix with stationary distribution $\Pi$. One has,
    \[ T(\ee, \Pii) \le\frac{1}{1 - \sigma_2(\Pii)} \cdot \log \frac{1}{\ee
    \cdot \min_{x \in V} \Pi(x)},\]
    where $\sigma_2(\Pii)$ is the second largest singular value of $\Pii$.
\end{theorem}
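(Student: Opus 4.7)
The plan is a standard $\chi^2$-contraction argument, adapted to the (possibly non-reversible) operator $\Pii$. Fix a starting state $x \in V$ and let $\mu_t \in \RR^V$ be the distribution at time $t$, regarded as a column vector with $\mu_t^\top = \chi_x^\top \Pii^t$. I would introduce the density discrepancy $\phi_t \in \RR^V$ defined by $\phi_t(y) := \mu_t(y)/\Pi(y) - 1$. Stationarity of $\Pi$ under $\Pii$ gives $\langle \phi_t, \one\rangle_\Pi = \sum_y(\mu_t(y)-\Pi(y)) = 0$, and Cauchy--Schwarz in the $\Pi$-weighted inner product immediately converts the $\ell_1$ error to a weighted $\ell_2$ quantity,
\[
\|\mu_t - \Pi\|_{\ell_1} \;=\; \sum_y \Pi(y)\,|\phi_t(y)| \;\le\; \|\phi_t\|_\Pi.
\]

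The next step is to track the evolution of $\phi_t$. Writing $\mu_{t+1}^\top = \mu_t^\top \Pii$ and using $\sum_x \Pi(x)\Pii(x,y) = \Pi(y)$, a direct computation yields $\phi_{t+1} = \Pii^* \phi_t$, where $\Pii^*$ is the adjoint of $\Pii$ with respect to $\langle \cdot, \cdot\rangle_\Pi$. The crucial observation is that $\Pii\one = \one$ (row-stochasticity) \emph{and} $\Pii^*\one = \one$ (equivalent to stationarity), so the constant vector is the top singular vector of $\Pii$ with singular value $1$, and its $\Pi$-orthogonal complement is invariant under the self-adjoint operator $\Pii\Pii^*$. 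Since $\phi_t$ lies in this complement, the variational characterization of $\lambda_2(\Pii\Pii^*) = \sigma_2(\Pii)^2$ gives
\[
\|\phi_{t+1}\|_\Pi^2 \;=\; \langle \Pii\Pii^*\phi_t,\phi_t\rangle_\Pi \;\le\; \sigma_2(\Pii)^2 \cdot \|\phi_t\|_\Pi^2,
\]
and iterating yields $\|\phi_t\|_\Pi \le \sigma_2(\Pii)^t \cdot \|\phi_0\|_\Pi$.

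To conclude, I would compute the initial discrepancy directly from $\mu_0 = \chi_x$: since $\phi_0(x) = (1-\Pi(x))/\Pi(x)$ and $\phi_0(y) = -1$ for $y \ne x$,
\[
\|\phi_0\|_\Pi^2 \;=\; \frac{(1-\Pi(x))^2}{\Pi(x)} + (1-\Pi(x)) \;=\; \frac{1-\Pi(x)}{\Pi(x)} \;\le\; \frac{1}{\min_y \Pi(y)}.
\]
Chaining the three bounds gives $\|\mu_t - \Pi\|_{\ell_1} \le \sigma_2(\Pii)^t / \sqrt{\min_y \Pi(y)}$; setting the right-hand side to be at most $\ee$ and applying the elementary inequality $\log(1/\sigma) \ge 1-\sigma$ for $\sigma \in (0,1]$ yields a mixing time of at most $\frac{1}{1-\sigma_2(\Pii)}\log\bigl(1/(\ee\sqrt{\min_y\Pi(y)})\bigr)$, which implies the stated bound since $\sqrt{\min_y \Pi(y)} \ge \min_y \Pi(y)$.

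The argument is essentially mechanical once $\phi_t$ is identified as the right object. The one point that requires care is that $\Pii$ is not assumed reversible, so we cannot diagonalize $\Pii$ directly and must instead reason through the self-adjoint operator $\Pii\Pii^*$, whose second eigenvalue is $\sigma_2(\Pii)^2$. Verifying that $\one$ is fixed by both $\Pii$ and $\Pii^*$ — equivalently, that the constants lie in the top singular subspace — is precisely where the hypothesis $\Pi^\top \Pii = \Pi^\top$ enters essentially, and it is what lets one restrict the spectral bound to the second singular value rather than the first.
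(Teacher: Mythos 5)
Your proof is correct, and since the paper does not prove this statement itself but cites it (\cite[Proposition 1.12]{MontenegroT05}), your $\chi^2$-type contraction argument — tracking $\phi_t = \mu_t/\Pi - 1$ under the $\Pi$-adjoint $\Pii^*$ and bounding through $\lambda_2(\Pii\Pii^*) = \sigma_2(\Pii)^2$ — is essentially the standard proof of the cited result. The only step you leave implicit is that $1$ is the \emph{largest} eigenvalue of $\Pii\Pii^*$ (equivalently $\sigma_1(\Pii)=1$), so that the Rayleigh quotient on $\one^{\perp_\Pi}$ is indeed controlled by $\sigma_2(\Pii)^2$; this follows in one line from row-stochasticity, stationarity of $\Pi$, and Jensen's inequality, and is worth stating explicitly.
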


The operator of importance for us will be $\Pii = \DownW_j$. 
As this operator is positive semi-definite as explained in \cref{sec:updown-defn}, 
we have $\sigma_2(\DownW_j) = \lambda_2(\DownW_j)$. 
Also, recall from \cref{sec:updown-defn} that the stationary distribution of $\UpW_j$ is $\Pi_j$, we obtain
\[ T(\ee, \DownW_j) \le \frac{1}{1 - \lambda_2(\DownW_j)} \cdot \log
\frac{1}{\ee \cdot \min_{\alpha \in X(j)} \Pi_j(\alpha)}.\]

\subsubsection*{Approximate Sampling and Approximate Counting} \label{ss:countsampl}

There is a well-known equivalence between approximate
sampling and approximate counting for self-reducible problems. 
Let $\Omega := \set*{\Omega_s}_s$ be a collection of sets parametrized by some
strings $s$, e.g.~$s$ can be describing a graph and $\Omega_s$ the set of perfect matchings in $G$.
Suppose a randomized algorithm $\Aa$ is given whose output distribution is
described by $\mu_{\Aa(s)}$. 
Then $\Aa$ is called a fully polynomial time randomized approximate uniform
sampler (FPRAUS) for $\Omega_s$, if for every input string $s$
we have
\[ \norm{\mu_{\Aa(s)} - \Pi_{\Omega_s} }_{\ell_1} \leq \delta,\]
where $\Pi_{\Omega_s}$ describes the uniform distribution over $\Omega_s$ 
and the algorithm $\Aa$ runs in time $\poly(\langle s \rangle,
\log(1/\delta))$, where $\langle s \rangle$ denotes the size of the input.

Similarly, an algorithm $\Aa'$ is called a fully polynomial time
randomized approximation scheme (FPRAS) for $\Omega$, if we for every input $s$
we have
\[ \Pr[ (1- \delta) \cdot |\Omega_s| \le \Aa'(s) \le (1 + \delta) \cdot |\Omega_s| ] \ge 1- \ee,\]
and the algorithm $\Aa'$ runs in time $\poly(\langle s \rangle, 1/\ee,
\log(1/\delta))$.

A well-known result proven in \cite{JerrumVV86} asserts that approximate
counting and approximate sampling are equivalent for self-reducible problems.
\begin{theorem}[Informal]
    For self-reducible sets $\Omega$ in {\sc NP},
    the existence of an FPRAS for $\Omega$ is equivalent to the
    existence of an FPRAUS for $\Omega$.
\end{theorem}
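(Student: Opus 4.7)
The plan is to prove the equivalence separately in the two directions, using self-reducibility as the bridge in each case. Recall that self-reducibility means each instance $s$ admits a partition $\Omega_s = \bigsqcup_{j=1}^{r} \Omega_{s^{(j)}}$ into a constant number $r = O(1)$ of strictly smaller sub-instances $s^{(j)}$ computable in polynomial time, with the recursion bottoming out after at most $k = \poly(\langle s \rangle)$ levels at trivial instances whose cardinality is directly computable. I would formulate both reductions as walks along a single root-to-leaf path in this self-reduction tree.

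For the direction FPRAUS $\Rightarrow$ FPRAS, I would write $|\Omega_s|$ as a telescoping product
\[ |\Omega_s| = |\Omega_{s_k}| \cdot \prod_{i=0}^{k-1} \frac{|\Omega_{s_{i+1}}|}{|\Omega_{s_i}|}, \]
where $s_0 := s$ and $s_{i+1}$ is chosen to be the heaviest child of $s_i$, so that each ratio lies in $[1/r, 1]$. To estimate the $i$-th ratio, I would draw $N = \poly(\langle s \rangle, 1/\ee)$ samples from $\Omega_{s_i}$ using the FPRAUS at tolerance $\delta' = \delta/\poly(\langle s\rangle, k)$, and return the empirical fraction landing in $\Omega_{s_{i+1}}$ (checkable in polynomial time by self-reducibility). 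A Chernoff bound, together with a bias contribution of size $O(\delta')$ from the fact that the samples are only $\delta'$-close to uniform, yields a $(1 \pm \delta/k)$-multiplicative estimate of each ratio with failure probability at most $\ee/k$. Multiplying the $k$ estimates and applying a union bound gives a $(1\pm \delta)$-multiplicative estimate of $|\Omega_s|$ with failure probability at most $\ee$.

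For the direction FPRAS $\Rightarrow$ FPRAUS, I would sample from $\Omega_s$ by a recursive descent through the self-reduction tree. At each node $s_i$ with children $s_i^{(1)}, \ldots, s_i^{(r)}$, invoke the FPRAS to produce estimates $\widehat{N}_j$ of $|\Omega_{s_i^{(j)}}|$, each at multiplicative accuracy $1 \pm \delta'$ with individual failure probability $\ee''$. Then select child $j$ with probability $\widehat{N}_j / \sum_\ell \widehat{N}_\ell$ and recurse on the chosen sub-instance, outputting the unique element reached at the leaf. An induction on the depth shows that each level contributes $O(\delta')$ to the total variation distance from the uniform distribution, plus an additive term $O(\ee'')$ from catastrophic failures of the FPRAS, so that the overall output distribution is within $O(k\delta' + kr\ee'')$ of $\Pi_{\Omega_s}$ in $\ell_1$. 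Setting both $\delta'$ and $\ee''$ to be $\delta/\poly(\langle s\rangle)$, with $\ee''$ amplified to that level via standard median-of-independent-runs, yields an FPRAUS.

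The main technical obstacle is the error bookkeeping: both directions lose a factor of $k = \poly(\langle s\rangle)$ in the accuracy parameter, forcing each sub-routine call to be run at a polynomially smaller tolerance. The Chernoff step in the first direction relies crucially on the ratios being bounded below by $1/r = \Omega(1)$, which is precisely what constant-arity self-reducibility provides; absent such a lower bound, no polynomial sample size would suffice to estimate the ratios with multiplicative accuracy, and the equivalence would break down.
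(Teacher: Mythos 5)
The paper itself contains no proof of this statement: it is quoted informally from Jerrum--Valiant--Vazirani \cite{JerrumVV86}, so your attempt can only be measured against the classical argument. Your FPRAUS $\Rightarrow$ FPRAS direction is essentially that argument --- telescoping $|\Omega_s|$ along a root-to-leaf path of the self-reduction tree and estimating each ratio by sampling from the sub-instances --- and it is sound, except that your closing claim overstates the role of constant arity: the general notion of self-reducibility allows polynomially many children, the heaviest-child ratios are then bounded below by $1/\poly(\langle s\rangle)$, and polynomially many samples still give a multiplicative estimate, so the equivalence does not ``break down'' there. (Also note the paper's printed FPRAS runtime $\poly(\langle s\rangle, 1/\ee, \log(1/\delta))$ appears to swap accuracy and failure probability; I assess against the standard JVV convention, polynomial in the inverse accuracy.)

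The genuine gap is in the FPRAS $\Rightarrow$ FPRAUS direction. The paper's definition of an FPRAUS demands running time $\poly(\langle s\rangle, \log(1/\delta))$, whereas the counter's cost is polynomial in the inverse of its relative accuracy. Your construction invokes the counter at accuracy $\delta' = \delta/\poly(\langle s\rangle)$ at every node of the descent, so its total running time is polynomial in $1/\delta$, not in $\log(1/\delta)$; median-of-runs amplification only reduces the failure probability, not the accuracy, so it does not repair this. The classical fix --- and the reason the theorem holds with the logarithmic dependence --- is a rejection-sampling correction: run the counter only at constant relative accuracy, perform the biased descent, record the exact probability $p(x)$ with which the leaf $x$ was produced (it is the product of the selection probabilities actually used, hence known to the algorithm), and accept $x$ with probability proportional to $1/(p(x)\,M)$ for a suitable normalizer $M$. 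Conditioned on acceptance the output is exactly uniform, acceptance occurs with probability bounded below by a constant, and $O(\log(1/\delta))$ independent trials (outputting arbitrarily upon total failure) bring the $\ell_1$ error below $\delta$ within the required $\poly(\langle s\rangle, \log(1/\delta))$ time. Without this step your algorithm is a legitimate approximate sampler, but not an FPRAUS in the paper's sense.
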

In \cref{sec:sampling}, we will give approximate samplers for independent sets a graph, and it follows from this equivalence we can also approximately count the number of independent sets in the graph.

\section{Eigenvalue Bounds for Higher Order Random Walks}\label{sec:eigenbd}

Our main result is a quantitative generalization of the basic fact that a pure $d$-dimensional simplicial complex $X$ is gallery connected (i.e.~$\lambda_2(\DownW_d) < 1$) if and only if the graph $G_\alpha$ is connected for every $\alpha \in X$ up to dimension $d-2$ (i.e.~$\gamma_j < 1$ for $-1 \leq j \leq d-2$).
The statement is essentially the same as in \cref{t:main} but for more general weighted simplicial complexes.

\begin{restatable}{theorem}{main}\label{thm:main}
    Let $(X,\Pi)$ be a pure $d$-dimensional weighted simplicial complex.
    For any $0 \leq k \leq d$,
    \[ \lambda_2(\DownW_{k}) = \lambda_2(\UpW_{k-1}) \le 1 - \frac{1}{k+1} \prod_{j = - 1}^{k-2} (1 - \gamma_j).\]
\end{restatable}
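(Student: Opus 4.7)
I would prove the bound by induction on $k$. The base case $k = 1$ is immediate: unpacking the definitions gives $\UpW_0 = \frac{1}{2}(\Ide + \Emm_\emptyset)$ (the lazy walk on the $1$-skeleton), so $\lambda_2(\UpW_0) = \frac{1}{2}(1 + \gamma_{-1})$, which matches the claimed bound with equality. (The case $k = 0$ is vacuous since $\RR^{X(-1)}$ is one-dimensional.) For the inductive step, I would work through the non-lazy up-down walk by exploiting $\Ide - \UpW_{k-1} = \frac{k}{k+1}(\Ide - \NUpW_{k-1})$, and aim to establish the operator inequality
\[\Ide - \NUpW_{k-1} \succeq_{\Pi_{k-1}} (1 - \gamma_{k-2}) \bigl( \Ide - \DownW_{k-1} \bigr).\]
Once this is in hand, restricting to vectors orthogonal to $\one$ and applying the induction hypothesis to $\lambda_2(\DownW_{k-1}) = \lambda_2(\UpW_{k-2})$ gives the recursion
\[1 - \lambda_2(\UpW_{k-1}) \ge \frac{k}{k+1}(1 - \gamma_{k-2}) \bigl( 1 - \lambda_2(\UpW_{k-2}) \bigr),\]
which telescopes into $\frac{1}{k+1} \prod_{j=-1}^{k-2}(1-\gamma_j)$.

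The heart of the argument is a clean local-to-global decomposition of $\NUpW_{k-1}$ in terms of link random walks. Tracing the definitions, for $\tau \ne \tau' \in X(k-1)$ sharing $\alpha_0 = \tau \cap \tau' \in X(k-2)$ with $\{x\} = \tau \setminus \tau'$ and $\{y\} = \tau' \setminus \tau$, one obtains
\[\NUpW_{k-1}(\tau, \tau') = \frac{1}{k}\, \Emm_{\alpha_0}(x, y).\]
Re-indexing the sum defining $\langle \eff, \NUpW_{k-1} \eff\rangle_{\Pi_{k-1}}$ by pairs $(\alpha_0, v) \in X(k-2) \times X_{\alpha_0}(0)$ with $\tau = \alpha_0 \cup v$, and using the marginal identity $\Pi_{k-1}(\alpha_0 \cup v) = k\, \Pi_{k-2}(\alpha_0)\, \Pi_0^{\alpha_0}(v)$ that follows from \cref{eq:onestep} and \cref{eq:link-def}, rewrites both the norm and the Dirichlet form as expectations over $\alpha_0 \sim \Pi_{k-2}$:
\[\|\eff\|_{\Pi_{k-1}}^2 = \Exp_{\alpha_0}\bigl[\|\eff^{\alpha_0}\|_{\Pi_0^{\alpha_0}}^2\bigr], \qquad \langle \eff, \NUpW_{k-1}\eff\rangle_{\Pi_{k-1}} = \Exp_{\alpha_0}\bigl[\langle \eff^{\alpha_0}, \Emm_{\alpha_0} \eff^{\alpha_0} \rangle_{\Pi_0^{\alpha_0}}\bigr],\]
where $\eff^{\alpha_0}(v) := \eff(\alpha_0 \cup v)$ is the restriction of $\eff$ to the link.

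The final step is to split each $\eff^{\alpha_0}$ into its $\Pi_0^{\alpha_0}$-mean $c_{\alpha_0}$ and its orthogonal complement. Since $\Emm_{\alpha_0}$ fixes constants and has second eigenvalue at most $\gamma_{k-2}$, the Rayleigh quotient gives
\[\langle \eff^{\alpha_0}, (\Ide - \Emm_{\alpha_0}) \eff^{\alpha_0}\rangle_{\Pi_0^{\alpha_0}} \ge (1 - \gamma_{k-2}) \bigl( \|\eff^{\alpha_0}\|_{\Pi_0^{\alpha_0}}^2 - c_{\alpha_0}^2 \bigr).\]
A direct match against the definition of the down operator identifies $c_{\alpha_0}$ with $[\Dee_{k-1} \eff](\alpha_0)$, so averaging over $\alpha_0$ yields $\Exp_{\alpha_0}[c_{\alpha_0}^2] = \|\Dee_{k-1}\eff\|_{\Pi_{k-2}}^2 = \langle \eff, \DownW_{k-1} \eff\rangle_{\Pi_{k-1}}$. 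This is exactly the desired operator inequality, and the induction closes in one line.

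\textbf{Main obstacle.} The technical heart is the local-to-global decomposition, and in particular the clean identification of the ``constant-on-each-link'' coefficients $c_{\alpha_0}$ with the global operator $[\Dee_{k-1}\eff](\alpha_0)$ — it is precisely this identification that surfaces the $\DownW_{k-1}$ term on the right and enables induction on $\UpW_{k-2}$. The careful accounting of normalizations — the $\binom{|\alpha \cup \tau|}{|\alpha|}$ factors in the conditional weights $\Pi_l^\alpha$ and the $\frac{1}{k+1}$ prefactors in $\Up_{k-1}$ and $\Dee_k$ — is delicate but routine, and the product bound then arises automatically from telescoping.
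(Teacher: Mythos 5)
Your proposal is correct and takes essentially the same route as the paper: induction on $k$ driven by the operator inequality $\NUpW_{k-1} - \DownW_{k-1} \preceq_{\Pi_{k-1}} \gamma_{k-2}\,(\Ide - \DownW_{k-1})$ (the paper's \cref{lem:updownrel}), which you establish via the same Garland-style local decomposition of the norm, the down-up form, and the non-lazy up-down form into link quantities (the paper's \cref{lem:garland}), followed by the link Rayleigh-quotient bound and the identification of the per-link means with $\Dee_{k-1}\eff$. The only cosmetic difference is starting the induction at $k=1$ instead of the paper's rank-one base case $\DownW_0$ at $k=0$, which is immaterial.
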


Using an inductive argument as in~\cite[Theorem 3.3]{AnariLOV18}, we can prove a more general statement about the entire range of eigenvalues.

\begin{restatable}{theorem}{maingeneral}\label{thm:main-general}
    Let $(X, \Pi)$ be a pure $d$-dimensional weighted simplicial complex. 
    Then, for any $0 \leq k \leq d-1$ and for any $-1 \leq r \leq k$,
    the matrix $\UpW_k$ has at most $|X(r)|$ eigenvalues with value strictly greater than
    \[ 1 - \frac{1}{k + 2} \prod_{j = r}^{k-1} (1 -\gamma_j).\]
\end{restatable}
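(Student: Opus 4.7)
The plan is induction on $k$. The base case $k = 0$ is immediate: a direct calculation shows $\UpW_0 = \tfrac{1}{2}(\Ide + \Emm_\emptyset)$ (the lazy random walk on the $1$-skeleton), so its eigenvalues are $\tfrac{1 + \lambda_i(\Emm_\emptyset)}{2}$; hence $\lambda_2(\UpW_0) \leq \tfrac{1 + \gamma_{-1}}{2} = \lambda(-1, 0)$, covering the $r = -1$ case, and the $r = 0$ case is trivial since $|X(0)|$ is the full dimension.

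For the inductive step from $k - 1$ to $k$, the core technical claim is the Loewner inequality
\[ \UpW_k \preceq_{\Pi_k} \frac{1 + (k+1)\gamma_{k-1}}{k+2} \Ide + \frac{(k+1)(1-\gamma_{k-1})}{k+2} \DownW_k. \]
Given this, \cref{fac:simple-loew} yields $\lambda_i(\UpW_k) \leq \tfrac{1+(k+1)\gamma_{k-1}}{k+2} + \tfrac{(k+1)(1-\gamma_{k-1})}{k+2} \lambda_i(\DownW_k)$ for every $i$. Since $\DownW_k$ and $\UpW_{k-1}$ share the same non-zero spectrum by \cref{fac:simple-la}, the inductive hypothesis on $\UpW_{k-1}$ implies that $\DownW_k$ has at most $|X(r)|$ eigenvalues strictly above $\lambda(r, k-1)$ for each $-1 \leq r \leq k-1$. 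A short computation verifies the recursion $\tfrac{1+(k+1)\gamma_{k-1}}{k+2} + \tfrac{(k+1)(1-\gamma_{k-1})}{k+2}\lambda(r, k-1) = \lambda(r, k)$, closing the induction (the case $r = k$ being trivial).

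To prove the Loewner inequality I would first establish the local-to-global identity $(k+2)\UpW_k = \Ide + \sum_{\tau \in X(k-1)} \widetilde{\Emm}_\tau$, where $\widetilde{\Emm}_\tau$ is the lift of the random walk matrix $\Emm_\tau$ on the link graph $G_\tau$ to an operator on $\RR^{X(k)}$ supported on the faces containing $\tau$. The diagonal of $\UpW_k$ is identically $\tfrac{1}{k+2}$, while the off-diagonal entry between $\alpha = \tau \cup \{x\}$ and $\alpha' = \tau \cup \{y\}$ simplifies to $\tfrac{1}{k+2}\Emm_\tau(x, y)$ after cancellation of the link weights $\Pi_0^\tau$ and $\Pi_1^\tau$. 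Then, for $\eff \in \RR^{X(k)}$, I would localize to $\eff_\tau(x) := \eff(\tau \cup \{x\})$, decompose $\eff_\tau$ orthogonally with respect to $\Pi^\tau_0$ into its mean part (equal to $[\Dee_k \eff](\tau) \cdot \one$ by the \ref{eq:J} identity) and mean-zero part, apply the spectral bound $\lambda_2(\Emm_\tau) \leq \gamma_{k-1}$ to the mean-zero part, and average over $\tau \sim \Pi_{k-1}$ using the identities $\Exp_\tau \|\eff_\tau\|^2_{\Pi^\tau_0} = \|\eff\|^2_{\Pi_k}$ and $\Exp_\tau [\Dee_k \eff](\tau)^2 = \langle \eff, \DownW_k \eff\rangle_{\Pi_k}$.

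The main obstacle will be the derivation of the matrix-entry identity $(k+2)\UpW_k = \Ide + \sum_\tau \widetilde{\Emm}_\tau$, which requires careful bookkeeping of the link-weight normalizations $\Pi^\tau_0, \Pi^\tau_1$ to verify they cancel exactly into the clean $\tfrac{1}{k+2} \Emm_\tau(x,y)$ off-diagonal entry. Once this identity is in hand, the orthogonal decomposition on each link, the averaging step, and the algebraic matching of the recursion for $\lambda(r, k)$ are all routine.
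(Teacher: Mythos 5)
Your proposal is correct and takes essentially the same route as the paper: your Loewner inequality $\UpW_k \preceq_{\Pi_k} \frac{1+(k+1)\gamma_{k-1}}{k+2}\Ide + \frac{(k+1)(1-\gamma_{k-1})}{k+2}\DownW_k$ is exactly \cref{lem:updownrel} rewritten via the definition of the \ref{eq:nupw-def}, and your entrywise identity $(k+2)\UpW_k = \Ide + \sum_{\tau}\widetilde{\Emm}_\tau$ together with the mean/mean-zero split and the averaging over $\tau \sim \Pi_{k-1}$ is the matrix form of the Garland decomposition in \cref{lem:garland}, after which your induction on $k$ (using \cref{fac:simple-la} and \cref{fac:simple-loew}) coincides with the paper's argument. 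The only difference is cosmetic: you work directly with $\UpW_k$ rather than naming the non-lazy walk $\NUpW_k$ as an intermediate object.
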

Note that \cref{thm:main} is a special case of \cref{thm:main-general} where
$r=-1$ (recall that $X(-1)=\{\emptyset\}$ and so $|X(-1)|=1$). 
Further, \cref{thm:main} can only prove that $\lambda_2(\DownW_d) \leq 1-\frac{1}{d+1}$.
We observe that this bound is almost tight.

\begin{proposition}\label{prop:nonexp}
    Let $X$ be a $d$-dimensional simplicial complex. 
    Let $n = |X(0)|$.
    Suppose $2 (d + 1) \le n$.
    Then $\lambda_2(\DownW_{d}) = \lambda_2(\UpW_{d-1}) \ge 1 - \frac{2}{d + 1}$.
\end{proposition}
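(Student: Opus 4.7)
The plan is to lower bound $\lambda_2(\DownW_d)$ by exhibiting a single test vector orthogonal to the constant function with a large Rayleigh quotient. Concretely, I will use an indicator-of-a-vertex-star function, chosen for a vertex that does not appear too often in maximal faces.

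First I would observe, by linearity of expectation, that
\[ \sum_{v \in X(0)} \Pr_{\alpha \sim \Pi_d}[v \in \alpha] \;=\; \Exp_{\alpha \sim \Pi_d}[|\alpha|] \;=\; d+1. \]
Hence the average of $p_v := \Pr_{\alpha \sim \Pi_d}[v \in \alpha]$ over the $n$ vertices is $(d+1)/n \le 1/2$, so there exists some vertex $v^*$ with $p := p_{v^*} \le 1/2$. Fix such $v^*$ and let $S := \{\alpha \in X(d) : v^* \in \alpha\}$.

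Next I would take the test function $\eff := \one_S - p \cdot \one$, which satisfies $\langle \eff, \one\rangle_{\Pi_d} = 0$ and $\|\eff\|_{\Pi_d}^2 = p(1-p)$. Since $\DownW_d$ is self-adjoint with respect to $\Pi_d$ and fixes $\one$, expanding gives
\[ \langle \eff, \DownW_d \eff \rangle_{\Pi_d} \;=\; \langle \one_S, \DownW_d \one_S\rangle_{\Pi_d} - p^2 \;=\; \Pr_{\alpha \sim \Pi_d,\, \alpha' \sim \DownW_d(\alpha)}[v^* \in \alpha \text{ and } v^* \in \alpha'] - p^2. \]
For the joint probability, I would use the combinatorial description of $\DownW_d$: conditioned on $v^* \in \alpha$, with probability $d/(d+1)$ the random element removed from $\alpha$ is not $v^*$, in which case $v^* \in \alpha'$ automatically. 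Dropping the nonnegative contribution when $v^*$ is removed yields
\[ \Pr[v^* \in \alpha \text{ and } v^* \in \alpha'] \;\ge\; p \cdot \frac{d}{d+1}. \]

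Finally I would combine these estimates to get
\[ \frac{\langle \eff, \DownW_d \eff\rangle_{\Pi_d}}{\|\eff\|_{\Pi_d}^2} \;\ge\; \frac{p \cdot \frac{d}{d+1} - p^2}{p(1-p)} \;=\; \frac{\frac{d}{d+1} - p}{1 - p}. \]
A short calculation shows that the right-hand side is a decreasing function of $p$ on $[0,1)$, so plugging in $p \le 1/2$ gives a lower bound of $2 \cdot \frac{d}{d+1} - 1 = 1 - \frac{2}{d+1}$. By the variational characterization applied to the codimension-one subspace orthogonal to $\one$, this is a lower bound on $\lambda_2(\DownW_d)$, and the equality $\lambda_2(\DownW_d) = \lambda_2(\UpW_{d-1})$ follows from \cref{fac:simple-la}. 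There is no real obstacle here; the only thing to pin down is the averaging step that produces a low-probability vertex, which is precisely where the hypothesis $2(d+1) \le n$ is used.
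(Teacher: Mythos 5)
Your proof is correct and follows essentially the same route as the paper: both arguments pick the star $S$ of a vertex contained in at most half of the top-dimensional faces (found by averaging, which is exactly where $2(d+1)\le n$ enters) and exploit the fact that one step of $\DownW_d$ leaves $S$ only when the deleted element is that vertex, an event of probability $1/(d+1)$. The only difference is presentational: the paper feeds the resulting conductance bound $\Phi(S)\le \frac{1}{d+1}$ into the easy direction of the Alon--Milman--Cheeger inequality (\cref{thm:cheeger}), whereas you inline the equivalent computation by evaluating the Rayleigh quotient of the test vector $\one_S - \Pi_d(S)\cdot\one$ directly, which makes the argument self-contained.
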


Before we prove \cref{thm:main} and \cref{thm:main-general}, 
we present two corollaries of \cref{thm:main}. 

Combining with Oppenheim's \cref{cor:oppenheim}, 
\cref{thm:main} provides a bound on the second eigenvalue of the $d$-th down-up walk based only on the maximum second eigenvalue of the graphs in dimension $d-2$.
This will be useful in \cref{sec:sampling}.

\begin{restatable}{corollary}{maincooked}\label{cor:main-cooked}
    Let $(X, \Pi)$ be a pure $d$-dimensional weighted simplicial complex. 
    For any $0 \leq k \leq d$,
    suppose $\gamma_k \leq \frac{1}{k+1}$ and $\gamma_j < 1$ for $-1 \leq j \leq k-2$,
    then
    \[ 
    \lambda_2(\DownW_k) = \lambda_2(\UpW_{k - 1}) \le 1 - \frac{1}{(k+1)^2}.
    \]
\end{restatable}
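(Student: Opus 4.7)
The plan is to derive Corollary~\ref{cor:main-cooked} from Theorem~\ref{thm:main} by first using Oppenheim's trickling-down Corollary~\ref{cor:oppenheim} to control $\gamma_j$ for every $-1 \le j \le k-2$ in terms of the single number $\gamma_{k-2}$. (I am reading the hypothesis ``$\gamma_k \le \tfrac{1}{k+1}$'' as a typographical slip for $\gamma_{k-2} \le \tfrac{1}{k+1}$, matching the earlier statement of the convenient form in the introduction and the fact that only links of dimension at most $k-2$ appear in Theorem~\ref{thm:main}.)

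First I would restrict attention to the $k$-skeleton of $X$, i.e.~the subcomplex consisting of all faces of dimension at most $k$. This is pure of dimension $k$, has the same link $X_\alpha$ for every $\alpha$ of dimension at most $k-2$, and therefore the same values of $\gamma_j$ in that range. The hypothesis $\gamma_j < 1$ for $-1 \le j \le k-2$ is exactly the statement that $G_\alpha$ is connected for every face $\alpha$ of dimension at most $k-2$, so Oppenheim's Corollary~\ref{cor:oppenheim} applies to this $k$-skeleton.

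Applying Corollary~\ref{cor:oppenheim} with $d$ replaced by $k$, together with the bound $\gamma_{k-2} \le \tfrac{1}{k+1}$, gives, for every $-1 \le j \le k-2$,
\[ \gamma_j \;\le\; \frac{\gamma_{k-2}}{1 - (k-2-j)\,\gamma_{k-2}} \;\le\; \frac{1/(k+1)}{1 - (k-2-j)/(k+1)} \;=\; \frac{1}{j+3}, \]
so that $1 - \gamma_j \ge \tfrac{j+2}{j+3}$. Substituting these lower bounds into Theorem~\ref{thm:main} produces a telescoping product
\[ \prod_{j=-1}^{k-2}(1 - \gamma_j) \;\ge\; \prod_{j=-1}^{k-2}\frac{j+2}{j+3} \;=\; \frac{1}{2}\cdot\frac{2}{3}\cdots\frac{k}{k+1} \;=\; \frac{1}{k+1}, \]
and therefore $\lambda_2(\DownW_k) \le 1 - \tfrac{1}{k+1}\cdot\tfrac{1}{k+1} = 1 - \tfrac{1}{(k+1)^2}$, as required.

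There is no real obstacle once Theorem~\ref{thm:main} and Oppenheim's Corollary are in hand: the computation is a bookkeeping exercise, and the only mildly delicate point is verifying that the trickling-down bound $\gamma_j \le 1/(j+3)$ stays strictly below $1$ for every intermediate $j$ in the range, so that Oppenheim's theorem can be iterated safely. This is automatic since $j+3 \ge 2$ throughout, which is why the connectedness assumption is enough to push the inductive argument all the way down from dimension $k-2$ to dimension $-1$.
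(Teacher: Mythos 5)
Your proposal is correct and follows essentially the same route as the paper: apply Oppenheim's trickling-down bound to get $\gamma_j \le \frac{1}{j+3}$ for $-1 \le j \le k-2$ and then telescope inside Theorem~\ref{thm:main}; your reading of the hypothesis as $\gamma_{k-2} \le \frac{1}{k+1}$ is exactly what the paper's own proof uses (matching \cref{c:convenient}). The only difference is cosmetic: you pass to the $k$-skeleton to justify invoking \cref{cor:oppenheim} with $d$ replaced by $k$, which is a slightly more careful justification of a step the paper takes implicitly.
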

\begin{proof}
    Since $\gamma_{k-2} \le \frac{1}{k+1}$ and $\gamma_j < 1$ for $-1 \leq j \leq k-2$,
    it follows from Oppenheim's \cref{cor:oppenheim} that for any $-1 \leq j \leq k-3$, 
    \[
        \gamma_j \leq \frac{\gamma_{k-2}}{1-(k-2-j)\cdot \gamma_{k-2}}
    \leq \frac{\frac{1}{k+1}}{1-\frac{k-2-j}{k+1}}  = \frac{1}{j+3}.
\]
    Therefore, by \cref{thm:main},
    \begin{align*}
        \lambda_2(\DownW_k)
        \leq 1 - \frac{1}{k + 1} \prod_{j = -1}^{k-2} (1 - \gamma_j)
        \leq 1 - \frac{1}{k + 1} \prod_{j = -1}^{k-2} \frac{j+2}{j+ 3}
        = 1 - \frac{1}{(k+1)^2}.
    \end{align*}
\end{proof}

\cref{thm:main} implies the following result for longer random walks on local-spectral expanders.

\begin{restatable}{corollary}{abwalk}\label{cor:abwalk}
    Let $(X,\Pi)$ be a pure $d$-dimensional weighted simplicial complex. 
    Let $0 \le a < b \le d -1$.
    If $X$ is a $\gamma$-local-spectral expander, then
    \[ \lambda_2(\UpW_{a, b}) \le (1 + \gamma)^{b - a} \cdot
    \frac{a+1}{b+1}.\]
\end{restatable}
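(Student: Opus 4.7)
The plan is to reduce the spectral gap of the long up-down walk $\UpW_{a,b}$ to spectral gaps of the consecutive ordinary up-down walks $\UpW_a,\UpW_{a+1},\ldots,\UpW_{b-1}$, each of which is already controlled by \cref{thm:main}, and then telescope the resulting product.

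First I would unfold the definition to isolate an ``outer shell'' of operators:
\[
\UpW_{a,b}\;=\;\Dee_{a+1}\cdots\Dee_{b}\,\Up_{b-1}\cdots\Up_{a}\;=\;\Dee_{a+1}\,\UpW_{a+1,b}\,\Up_{a},
\]
with the base case $\UpW_{a,a+1}=\UpW_a$ being immediate. Then the key step is the multiplicative recursion
\[
\lambda_2(\UpW_{a,b})\;\le\;\lambda_2(\UpW_a)\cdot\lambda_2(\UpW_{a+1,b}).
\]
To prove it, take any $\eff\in\RR^{X(a)}$ with $\eff\perp_{\Pi_a}\one$ and $\norm{\eff}_{\Pi_a}=1$, and set $\gee:=\Up_a\eff\in\RR^{X(a+1)}$. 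Using the adjointness relation \eqref{eq:adjoint} together with the identity $\Dee_{a+1}\one=\one$ (a direct consequence of \cref{eq:onestep}), one checks that $\gee\perp_{\Pi_{a+1}}\one$ and that
\[
\norm{\gee}_{\Pi_{a+1}}^{2}\;=\;\langle\eff,\Dee_{a+1}\Up_a\eff\rangle_{\Pi_a}\;=\;\langle\eff,\UpW_a\eff\rangle_{\Pi_a}\;\le\;\lambda_2(\UpW_a).
\]
Combining the operator identity with the variational bound $\langle\gee,\UpW_{a+1,b}\gee\rangle_{\Pi_{a+1}}\le\lambda_2(\UpW_{a+1,b})\cdot\norm{\gee}_{\Pi_{a+1}}^2$ valid on the orthogonal complement of $\one$, one obtains
\[
\langle\eff,\UpW_{a,b}\eff\rangle_{\Pi_a}\;=\;\langle\gee,\UpW_{a+1,b}\gee\rangle_{\Pi_{a+1}}\;\le\;\lambda_2(\UpW_{a+1,b})\cdot\lambda_2(\UpW_a),
\]
and taking the supremum over admissible $\eff$ gives the recursion. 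Iterating yields $\lambda_2(\UpW_{a,b})\le\prod_{k=a}^{b-1}\lambda_2(\UpW_k)$.

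The final step is to plug in \cref{thm:main}, which using $\gamma_j\le\gamma$ gives $\lambda_2(\UpW_k)\le 1-\frac{(1-\gamma)^{k+1}}{k+2}$, and then apply Bernoulli's inequality $(1-\gamma)^{k+1}\ge 1-(k+1)\gamma$ to rewrite this as $\lambda_2(\UpW_k)\le\frac{(k+1)(1+\gamma)}{k+2}$. The telescoping product
\[
\prod_{k=a}^{b-1}\frac{(k+1)(1+\gamma)}{k+2}\;=\;(1+\gamma)^{b-a}\cdot\frac{a+1}{b+1}
\]
then delivers the bound.

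The main obstacle is the recursion step. One has to verify carefully that $\Up_a$ maps $\one^{\perp}$ into $\one^{\perp}$ (so that the eigenvalue bound for $\UpW_{a+1,b}$ is applicable to $\gee$), and that $\norm{\gee}_{\Pi_{a+1}}^2$ is exactly $\langle\eff,\UpW_a\eff\rangle_{\Pi_a}$; both facts follow from \eqref{eq:adjoint} and $\Dee_{a+1}\one=\one$, after which everything is a routine telescoping calculation combined with one invocation of Bernoulli.
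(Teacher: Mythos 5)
Your proposal is correct. It shares the paper's overall skeleton: both reduce the claim to the product inequality $\lambda_2(\UpW_{a,b}) \le \prod_{k=a}^{b-1}\lambda_2(\UpW_k)$ and then finish identically, plugging in \cref{thm:main} with $\gamma_j \le \gamma$, applying Bernoulli's inequality (\cref{fac:bernoulli}, which implicitly assumes $\gamma \le 1$, as does the paper), and telescoping. The difference is in how the product inequality is obtained: the paper notes that $\UpW_{a,b}$ is positive semi-definite, invokes the submultiplicativity of second singular values of row-stochastic operators (\cref{fac:split}) to split $\sigma_2(\UpW_{a,b})$ into $\prod_i \sigma_2(\Dee_i)\prod_j\sigma_2(\Up_j)$, and regroups these via $\lambda_2(\UpW_j) = \sigma_2(\Up_j)\sigma_2(\Dee_{j+1})$; you instead peel off the outer pair $\Dee_{a+1}, \Up_a$ and run a Rayleigh-quotient recursion, using adjointness and $\Dee_{a+1}\one = \one$ to check that $\Up_a$ preserves orthogonality to $\one$ and that $\norm{\Up_a \eff}_{\Pi_{a+1}}^2 = \langle \eff, \UpW_a\eff\rangle_{\Pi_a} \le \lambda_2(\UpW_a)$, together with nonnegativity of $\lambda_2(\UpW_{a+1,b})$ (PSD) to multiply the two bounds. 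Your route is self-contained and avoids singular values entirely, essentially re-proving the needed submultiplicativity in this self-adjoint PSD setting; the paper's route is shorter once the general singular-value fact is granted. All the steps you flag as needing care do go through, so there is no gap.
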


The rest of this section is organized as follows.
We will first prove \cref{thm:main} in \cref{ss:main},
then \cref{thm:main-general} in \cref{ss:main-general},
then \cref{cor:abwalk} in \cref{ss:abwalk},
and finally \cref{prop:nonexp} in \cref{ss:nonexp}.

\subsection{Proof of \cref{thm:main}} \label{ss:main}

The key lemma in proving \cref{thm:main} is the following result that
quantifies a spectral bound on the difference of the $k$-th non-lazy up-down walk and the $k$-th down-up walk in terms of the second eigenvalue of the links at dimension $k-1$.

\begin{restatable}{lemma}{updownrel}\label{lem:updownrel}
    Let $(X, \Pi)$ be a pure $d$-dimensional weighted simplicial complex. 
    For any $0 \leq k \leq d-1$,
    \[ 
    \NUpW_k - \DownW_k \preceq_{\Pi_k} \gamma_{k-1} \cdot \parens*{ \Ide - \DownW_k}. 
    \]
\end{restatable}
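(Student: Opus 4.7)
The plan is to use the local-to-global paradigm: I will express the quadratic forms of both $\NUpW_k$ and $\DownW_k$ as averages over the links $X_\alpha$ for $\alpha \in X(k-1)$ of quadratic forms of local operators on $\RR^{X_\alpha(0)}$, and then apply the spectral bound $\lambda_2(\Emm_\alpha) \le \gamma_{k-1}$ pointwise at each link. Given $\eff \in \RR^{X(k)}$, for each $\alpha \in X(k-1)$ I define the localization $\eff_\alpha \in \RR^{X_\alpha(0)}$ by $\eff_\alpha(x) := \eff(\alpha \cup \{x\})$; expectations of the form $\Exp_{\alpha \sim \Pi_{k-1}}[\langle \eff_\alpha, \cdot\, \eff_\alpha \rangle_{\Pi_0^\alpha}]$ will serve as the bridge between global quantities on $X(k)$ and local data on the links $G_\alpha$.

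First I would establish three local-to-global identities. The norm decomposition $\|\eff\|_{\Pi_k}^2 = \Exp_{\alpha \sim \Pi_{k-1}}\bigl[\|\eff_\alpha\|_{\Pi_0^\alpha}^2\bigr]$ falls out of \cref{eq:link-def} combined with the fact that each $\sigma \in X(k)$ has exactly $k+1$ representations as $\alpha \cup \{x\}$. The down-up decomposition $\langle \eff, \DownW_k \eff\rangle_{\Pi_k} = \|\Dee_k \eff\|_{\Pi_{k-1}}^2 = \Exp_\alpha\bigl[\langle \eff_\alpha, \Jay_\alpha \eff_\alpha\rangle_{\Pi_0^\alpha}\bigr]$ follows from the adjointness of $\Up_{k-1}$ and $\Dee_k$, together with the observation that $(\Dee_k \eff)(\alpha) = \Exp_{x \sim \Pi_0^\alpha}[\eff_\alpha(x)]$ is exactly the constant value of $\Jay_\alpha \eff_\alpha$. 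The technical core is the up-down decomposition
\[
\langle \eff, \UpW_k \eff\rangle_{\Pi_k} = \|\Up_k \eff\|_{\Pi_{k+1}}^2 = \frac{1}{k+2}\|\eff\|_{\Pi_k}^2 + \frac{k+1}{k+2}\Exp_{\alpha}\bigl[\langle \eff_\alpha, \Emm_\alpha \eff_\alpha\rangle_{\Pi_0^\alpha}\bigr],
\]
which I would prove by expanding $\|\Up_k \eff\|_{\Pi_{k+1}}^2$ into a sum over ordered pairs $(x, y)$ with $x, y \in \beta$ for $\beta \in X(k+1)$, separating the diagonal terms $x = y$ (which assemble to $\|\eff\|_{\Pi_k}^2$) from the off-diagonal terms $x \neq y$ organized by the shared face $\alpha := \beta \setminus \{x, y\} \in X(k-1)$; using \cref{eq:link-def} to rewrite $\Pi_{k+1}(\alpha \cup \{x, y\})$ in terms of $\Pi_{k-1}(\alpha)$ and $\Pi_1^\alpha(\{x, y\})$, the off-diagonal piece then matches exactly the quadratic form of the weighted adjacency structure of $G_\alpha$ applied to $\eff_\alpha$.

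Combining these identities via $\NUpW_k = \tfrac{k+2}{k+1}\UpW_k - \tfrac{1}{k+1}\Ide$ yields the clean expressions
\[
\langle \eff, \NUpW_k \eff\rangle_{\Pi_k} = \Exp_\alpha\bigl[\langle \eff_\alpha, \Emm_\alpha \eff_\alpha\rangle_{\Pi_0^\alpha}\bigr], \quad \langle \eff, (\Ide - \DownW_k)\eff\rangle_{\Pi_k} = \Exp_\alpha\bigl[\langle \eff_\alpha, (\Ide - \Jay_\alpha)\eff_\alpha\rangle_{\Pi_0^\alpha}\bigr],
\]
at which point the desired inequality reduces to the pointwise Loewner relation $\Emm_\alpha - \Jay_\alpha \preceq_{\Pi_0^\alpha} \gamma_{k-1}(\Ide - \Jay_\alpha)$ on each link $G_\alpha$. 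This last step is elementary: both $\Emm_\alpha$ and $\Jay_\alpha$ fix $\one$ and are self-adjoint with respect to $\Pi_0^\alpha$, so both sides vanish on the one-dimensional subspace spanned by $\one$, while on its $\Pi_0^\alpha$-orthogonal complement $\Jay_\alpha$ vanishes and $\Emm_\alpha$ has all eigenvalues at most $\lambda_2(\Emm_\alpha) \le \gamma_{k-1}$, matching $\gamma_{k-1}\Ide$ there. Averaging this local inequality applied to $\eff_\alpha$ over $\alpha \sim \Pi_{k-1}$ gives the claimed global operator inequality, since $\eff$ was arbitrary. The main obstacle is the up-down decomposition: all the combinatorial weight accounting from \cref{eq:link-def} is concentrated in that step, and everything downstream is routine bookkeeping together with a one-line spectral argument on each link.
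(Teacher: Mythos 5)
Your proposal is correct and follows essentially the same route as the paper: the Garland-style local-to-global identities (the paper's \cref{lem:garland}, expressing the quadratic forms of $\Ide$, $\DownW_k$, and the non-lazy up-down walk as expectations over $\alpha \sim \Pi_{k-1}$ of $\langle \eff_\alpha, \cdot\,\eff_\alpha\rangle_{\Pi_0^\alpha}$ against $\Ide$, $\Jay_\alpha$, and $\Emm_\alpha$ respectively), followed by the local spectral bound on each link. Your pointwise Loewner relation $\Emm_\alpha - \Jay_\alpha \preceq_{\Pi_0^\alpha} \gamma_{k-1}(\Ide - \Jay_\alpha)$ is exactly the inequality the paper derives via the variational formula on the component of $\eff_\alpha$ orthogonal to $\one$, so the two arguments coincide.
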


The proof of \cref{lem:updownrel}, 
will closely follow the proof of \cite[Theorem 5.5]{DiksteinDFH18},
where they prove the weaker inequality
\begin{equation}
    \NUpW_k - \DownW_k \preceq_{\Pi_k} \gamma_{k-1} \cdot \Ide.\label{eq:updownrelw}
\end{equation}
We remark that a similar statement was also used in~\cite{KaufmanO18} for proving \cref{t:KO}.

We will first show how \cref{lem:updownrel} implies \cref{thm:main} by an inductive argument.

\begin{proof}[Proof of \cref{thm:main} from \cref{lem:updownrel}]
    We prove \cref{thm:main} by induction on $k$.
    The base case is when $k = 0$, where $\DownW_0 = \one \Pi_0^\top$ is a rank one matrix and so $\lambda_2(\DownW_0) \leq 0$, and hence \cref{thm:main} trivially holds.

    For the induction step, suppose we have
    \begin{equation}
        \lambda_2(\DownW_{j+1}) = \lambda_2(\UpW_j) \le 1 - \frac{1}{j+2} \prod_{i=-1}^{j-1} (1 - \gamma_i).
        \label{eq:ind-hyp}\tag{induction hypothesis}
    \end{equation}

    Since $\DownW_{j+ 1} = \Up_{j}\Dee_{j+1}$ and $\UpW_j = \Dee_{j+1}\Up_j$ have the same non-zero eigenvalues with the same multiplicity by \cref{fac:simple-la}, we only need to prove the statement for $\UpW_{j+1}$.
    By \cref{lem:updownrel},
    \[
        \NUpW_{j+1} \preceq_{\Pi_{j+1}} \gamma_{j} \cdot \Ide + (1 - \gamma_{j}) \DownW_{j+1}
    \]
    It follows from \cref{fac:simple-loew} that
    \[ 
    \lambda_2(\NUpW_{j+1}) 
    \leq \gamma_j + (1 - \gamma_j) \cdot \lambda_2(\DownW_{j+1}) 
    \leq 1 - \frac{1}{j+2} \prod_{i = -1}^{j} (1 - \gamma_i),
    \]
    where the last equality is by plugging in the \ref{eq:ind-hyp}. 
    The theorem now follows from the definition of \ref{eq:nupw-def}, i.e.
    \[ 
    \NUpW_{j+1} = \frac{j+3}{j+2} \parens*{\UpW_{j+1} - \frac{1}{j+3} \Ide } 
    \quad \Longleftrightarrow \quad 
    \UpW_{j+1} = \frac{j+2}{j+3} \cdot \NUpW_{j+1} + \frac{1}{j+3} \Ide.
    \]
    Therefore,
    \[ \lambda_2(\UpW_{j+1}) = \frac{j+2}{j+3} \cdot \lambda_2(\NUpW_{j+1}) + \frac{1}{j+3} \le 1 - \frac{1}{j+3} \prod_{i=-1}^{j} (1- \gamma_i),\]
    and this proves the induction step.
\end{proof}

\subsubsection{Proof of \cref{lem:updownrel}}

The proof of \cref{lem:updownrel} will rest on few useful identities
established in \cite{KaufmanO18, DiksteinDFH18}, which can be obtained through the ``Garland Method'', which decomposes the higher order random walk matrices into the random walk matrices of the links.

In the following, given $\eff \in \RR^{X(k)}$ and $\alpha \in X(k-1)$, we use $\eff_{\alpha}$ to denote the restriction of $\eff$ to the entries in $\{ \alpha \cup \{x\} \mid x \in X_{\alpha}(0) \}$.
And recall that $\Jay_\alpha$ is the \ref{eq:J} defined in \cref{ss:local-spectral}

\begin{restatable}{lemma}{garland}\label{lem:garland}
    Let $(X, \Pi)$ be a pure $d$-dimensional weighted simplicial complex.  
    For all $\eff \in \RR^{X(j)}$ the following hold,
    \begin{enumerate}
        \item $\langle \eff, \Ide \eff\rangle_{\Pi_j}  = \Exp_{\alpha \sim \Pi_{j-1}}\norm{\eff_\alpha}^2_{\Pi_0^\alpha} = \Exp_{\alpha \sim \Pi_{j-1}}\langle \eff_\alpha, \eff_\alpha\rangle_{\Pi_0^\alpha}$,
        \item $\langle \eff, \DownW_j \eff\rangle_{\Pi_j} = \Exp_{\alpha \sim \Pi_{j-1}} \norm{\Jay_\alpha \eff_\alpha}_{\Pi_0^\alpha}^2 = \Exp_{\alpha \sim \Pi_{j-1}} \langle \eff_\alpha, \Jay_\alpha \eff_\alpha\rangle_{\Pi_0^\alpha}$,
        \item $\langle \eff, \NUpW_j \eff\rangle_{\Pi_j} = \Exp_{\alpha \sim
            \Pi_{j-1}} \langle \eff_\alpha, \Emm_\alpha \eff_\alpha\rangle_{\Pi_0^\alpha}$.
    \end{enumerate}
\end{restatable}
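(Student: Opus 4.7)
The plan is to establish the three identities in order. Each is proved by unraveling the definitions on the left-hand side and matching terms against the link-level expression on the right, using the special case of \cref{eq:link-def} that gives $\Pi_0^\alpha(x) = \Pi_j(\alpha \cup \{x\}) / [(j+1)\Pi_{j-1}(\alpha)]$ for $\alpha \in X(j-1)$ and $\{x\} \in X_\alpha(0)$.

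For (1), substituting this formula into the RHS makes the $\Pi_{j-1}(\alpha)$ factors cancel, and each face $\beta \in X(j)$ appears in the resulting double sum exactly $j+1$ times (once for every way to pick a subset $\alpha \subset \beta$ of size $j$), which recovers $\sum_\beta \Pi_j(\beta) \eff(\beta)^2 = \langle \eff, \eff\rangle_{\Pi_j}$. For (2), I would first use \cref{eq:adjoint} to rewrite $\langle \eff, \DownW_j \eff\rangle_{\Pi_j} = \langle \Dee_j \eff, \Dee_j \eff\rangle_{\Pi_{j-1}} = \norm{\Dee_j \eff}_{\Pi_{j-1}}^2$. Unrolling the \cref{eq:down-def} definition then shows that $[\Dee_j \eff](\alpha) = \sum_x \Pi_0^\alpha(x)\, \eff_\alpha(x) = \Exp_{x \sim \Pi_0^\alpha}[\eff_\alpha(x)]$, which is precisely the scalar value defining $\Jay_\alpha \eff_\alpha$ via \cref{eq:J}. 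Hence $[\Dee_j \eff](\alpha)^2 = \norm{\Jay_\alpha \eff_\alpha}_{\Pi_0^\alpha}^2$, and averaging over $\alpha$ with weight $\Pi_{j-1}$ yields the first form of (2); the equivalent inner-product form follows because $\Jay_\alpha$ is an orthogonal projector, so $\langle \eff_\alpha, \Jay_\alpha \eff_\alpha\rangle_{\Pi_0^\alpha} = \norm{\Jay_\alpha \eff_\alpha}_{\Pi_0^\alpha}^2$.

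Identity (3) is the main technical step. The plan is to first establish the companion identity
\[ \langle \eff, \UpW_j \eff\rangle_{\Pi_j} = \tfrac{1}{j+2}\langle \eff, \eff\rangle_{\Pi_j} + \tfrac{j+1}{j+2}\, \Exp_{\alpha \sim \Pi_{j-1}} \langle \eff_\alpha, \Emm_\alpha \eff_\alpha\rangle_{\Pi_0^\alpha}, \]
and then combine it with the \cref{eq:nupw-def} definition $\NUpW_j = \tfrac{j+2}{j+1}(\UpW_j - \tfrac{1}{j+2}\Ide)$, which exactly cancels the first summand and rescales the second to yield (3). To prove the companion identity, I would expand $\langle \eff, \UpW_j \eff\rangle_{\Pi_j} = \sum_{\tau, \tau'} \Pi_j(\tau) \UpW_j(\tau, \tau') \eff(\tau) \eff(\tau')$ and split it into a diagonal and an off-diagonal part. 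A short calculation with $\UpW_j = \Dee_{j+1}\Up_j$ together with the marginal identity $\sum_{\beta \supset \tau} \Pi_{j+1}(\beta) = (j+2)\Pi_j(\tau)$ from \cref{eq:onestep} gives $\UpW_j(\tau, \tau) = 1/(j+2)$, which accounts for the first summand. For $\tau \neq \tau'$, the entry $\UpW_j(\tau, \tau')$ is nonzero only when $|\tau \cap \tau'| = j$; setting $\alpha = \tau \cap \tau'$ and $\{x,y\} = (\tau \cup \tau') \setminus \alpha$, one computes $\Pi_j(\tau) \UpW_j(\tau, \tau') = \Pi_{j+1}(\alpha \cup \{x,y\})/(j+2)^2$. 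Regrouping these off-diagonal terms by the common face $\alpha \in X(j-1)$ and translating to link weights via $\Pi_{j+1}(\alpha \cup \{x,y\}) = \binom{j+2}{2}\Pi_{j-1}(\alpha)\Pi_1^\alpha(\{x,y\})$ from \cref{eq:link-def} and $\Emm_\alpha(x,y) = \Pi_1^\alpha(\{x,y\})/(2\Pi_0^\alpha(x))$ then yields the off-diagonal contribution $\tfrac{j+1}{j+2}\Exp_{\alpha \sim \Pi_{j-1}}\langle \eff_\alpha, \Emm_\alpha \eff_\alpha\rangle_{\Pi_0^\alpha}$.

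The main obstacle I anticipate is the combinatorial bookkeeping in (3): I will need to align the factor $1/(j+2)^2$ coming from the composition $\Dee_{j+1}\Up_j$, the binomial $\binom{j+2}{2}$ coming from \cref{eq:link-def}, and the factor $2$ from the symmetric definition of $\Emm_\alpha$, so that they collapse cleanly to the prefactor $(j+1)/(j+2)$. Beyond that, the argument is entirely mechanical and uses no spectral information about the links; the spectral assumption $\lambda_2(\Emm_\alpha) \le \gamma_{k-1}$ only enters later, when these Garland-type identities are combined to prove \cref{lem:updownrel}.
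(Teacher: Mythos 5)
Your proposal is correct and follows essentially the same route as the paper: items (1) and (2) are proved identically (marginal identity from \cref{eq:link-def} plus adjointness and the \ref{eq:J}), and for item (3) your entrywise split of $\UpW_j$ into its lazy diagonal $1/(j+2)$ and the off-diagonal part regrouped by the common face $\alpha$ — using $\Pi_{j+1}(\alpha\cup\{x,y\}) = \binom{j+2}{2}\Pi_{j-1}(\alpha)\Pi_1^\alpha(\{x,y\})$ and $\Emm_\alpha(x,y)=\Pi_1^\alpha(\{x,y\})/(2\Pi_0^\alpha(x))$ — is the same bookkeeping the paper carries out via the quadratic form $\langle \Up_j\eff,\Up_j\eff\rangle_{\Pi_{j+1}}$, and all your prefactors check out.
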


We will provide a proof of \cref{lem:garland} in \cref{ss:garland} for completeness.
We are ready to prove \cref{lem:updownrel}. 
\begin{proof}[Proof of \cref{lem:updownrel}]
    Let $\eff \in \RR^{X(j)}$ be arbitrary. 
    By Items (2) and (3) in \cref{lem:garland}, we write
    \[
        \langle \eff, (\NUpW_j - \DownW_j)\eff\rangle_{\Pi_j} 
    = \Exp_{\alpha \sim \Pi_{j-1}}\sqbr*{\langle \eff_\alpha, (\Emm_\alpha - \Jay_\alpha) \eff_\alpha \rangle_{\Pi_0^\alpha}}.
\]
    Notice that since $\Emm_\alpha$ is a row-stochastic matrix (with top eigenvector $\one$) and
    the matrix $\Jay_\alpha$ is the projector to its top eigenspace. 
    Since both $\Emm_\alpha$ and $\Jay_\alpha$ are self-adjoint with respect
    to the inner-product defined by $\Pi_0^\alpha$ (see \cref{ss:local-spectral}), it follows that
    \[ \Emm_\alpha - \Jay_\alpha \preceq_{\Pi_0^{\alpha}} \lambda_2(\Emm_\alpha) \cdot \Ide.\]
    Moreover, since the matrix $\Emm_\alpha - \Jay_\alpha$ is only supported
    on the subspace perpendicular to $\one$, writing $\eff_\alpha^{\perp \one}$ for the component of $\eff_\alpha$ that is perpendicular to $\one$, we have
    \[ 
    \langle \eff_\alpha, (\Emm_\alpha - \Jay_\alpha)
    \eff_\alpha\rangle_{\Pi_0^\alpha} = \langle \eff_\alpha^{\perp \one},
    (\Emm_\alpha - \Jay_\alpha) \eff_\alpha^{\perp \one}\rangle_{\Pi_0^\alpha}.
    \]
    As $\Jay_\alpha$ is the \ref{eq:J} we have, $\eff_\alpha^{\perp
    \one} = (\Ide - \Jay_\alpha) \eff_\alpha$ and thus
    \begin{align}
        \langle \eff_\alpha, (\Emm_\alpha- \Jay_\alpha) \eff_\alpha\rangle
        &~\le~\lambda_1( \Emm_\alpha - \Jay_\alpha) \cdot \norm{\eff_{\alpha}^{\perp
        \one}}_{\Pi_0^\alpha}^2
        ~\le~\lambda_2(\Emm_\alpha) \cdot \norm{(\Ide -
        \Jay_\alpha)\eff_\alpha}_{\Pi_0^\alpha}^2,\label{eq:boundusefulb}
    \end{align}
    where the first inequality is by the \ref{eq:var-form}.    
    Therefore,
    \begin{align*}
        \langle \eff, (\NUpW_j - \DownW_j) \eff\rangle_{\Pi_j} &~=~
        \Exp_{\alpha \sim \Pi_{j-1}} \sqbr*{ \langle \eff_\alpha, (\Emm_\alpha -
        \Jay_\alpha) \eff_\alpha\rangle_{\Pi_0^\alpha} },&&\textrm{ (by Items (2)
        and (3) in \cref{lem:garland})}\\
        &~\le~ \Exp_{\alpha \sim \Pi_{j-1}}\sqbr*{ \lambda_2(\Emm_\alpha) \cdot
        \norm{(\Ide - \Jay_\alpha) \eff_\alpha}^2_{\Pi_0^\alpha} },&&\textrm{
            (by \cref{eq:boundusefulb})}\\
        &~\le~\gamma_{j-1} \cdot \Exp_{\alpha \sim \Pi_{j-1}} \sqbr*{\norm*{(\Ide - \Jay_\alpha)\eff_\alpha}_{\Pi_0^\alpha}^2},\\
        &~=~ \gamma_{j-1} \cdot \Exp_{\alpha \sim \Pi_{j-1}} \sqbr*{ \langle \eff_\alpha, (\Ide - \Jay_\alpha)\eff_\alpha\rangle_{\Pi_0^\alpha}},&&
        \textrm{ (by $\langle \Jay_\alpha \eff_\alpha, \Jay_\alpha \eff_\alpha \rangle_{\Pi_0^\alpha} = \langle \eff_\alpha, \Jay_\alpha \eff_\alpha \rangle_{\Pi_0^\alpha}$)}\\
        &~=~ \gamma_{j- 1} \cdot \langle \eff, (\Ide - \DownW_j)
        \eff\rangle_{\Pi_j}&&\textrm{ (by Items (1) and (2) in
        \cref{lem:garland})}.
    \end{align*}
    This proves $\NUpW_j - \DownW_j \preceq_{\Pi_j} \gamma_{j-1} (\Ide - \DownW_j)$.
\end{proof}

\subsubsection{Proof of \cref{lem:garland}} \label{ss:garland}

Here we provide a proof of \cref{ss:garland} for completeness.
These arguments are from \cite{KaufmanO18, DiksteinDFH18}.

\begin{proof}
    Item (1) can be proven from the identity
    \[
        \Pi_j(\beta)
    = \sum_{\alpha \in X(j-1), x \in X(0), \atop\alpha\cup x =\beta}  \frac{\Pi_j(\alpha \cup \{x\})}{k+1}
    = \sum_{\alpha \in X(j-1), x \in X(0), \atop\alpha\cup x =\beta} \Pi_{j-1}(\alpha) \cdot \Pi_0^\alpha(x),
\]
    where the last equality is by \cref{eq:link-def} that $\Pi_0^\alpha(x) =
    \frac{ \Pi_j(\alpha \cup \{x\}) } {(j+1) \cdot \Pi_{j-1}(\alpha)}$.
    Then,
    \begin{align*}
        \langle \eff, \Ide \eff\rangle_{\Pi_j} &~=~\sum_{\beta \in X(j)} \Pi_j(\beta) \cdot \eff(\beta)^2\\
        &~=~ \sum_{\beta \in X(j)} \sum_{\alpha \in X(j-1), x \in X(0),
        \atop\alpha\cup x =\beta} \Pi_{j-1}(\alpha) \cdot \Pi_0^\alpha(x) \cdot
        \eff_\alpha(x)^2\\
        &~=~ \sum_{\alpha \in X(j-1)} \Pi_{j-1}(\alpha) \cdot \sum_{x \in X_\alpha(0)} \Pi_0^\alpha(x) \cdot \eff_\alpha(x)^2\\
        &~=~ \Exp_{\alpha \sim \Pi_{j-1}} \langle \eff_\alpha, \eff_\alpha\rangle_{\Pi_0^\alpha}.
    \end{align*}
    Item (2) follows by appealing to the definition of the down-up walk that $\DownW_j = \Up_{j-1} \Dee_j$, and so
    \[ 
    \langle \eff, \DownW_j \eff\rangle_{\Pi_j} 
    = \langle \eff, \Up_{j-1}\Dee_j \eff\rangle_{\Pi_j} 
    = \langle \Dee_j \eff, \Dee_j \eff\rangle_{\Pi_{j-1}}. 
    \]
    By the definition of the \ref{eq:down-def} and $\Pi_0^\alpha(x) =
    \frac{\Pi_j(\alpha \cup \{x\}) }{(j+1) \cdot \Pi_{j-1}(\alpha)}$ from \cref{eq:link-def}, 
    it follows that 
    $[\Dee_j \eff](\alpha) 
    = \sum_{x \in X_{\alpha}(0)} \Pi_0^\alpha(x) \cdot \eff(\alpha \cup \{x\})
    = \Exp_{x \sim \Pi_0^\alpha} \eff_\alpha(x)
    $ 
    and thus
    \[ 
    \langle \eff, \DownW_j \eff\rangle_{\Pi_j} 
    = \sum_{\alpha \in X(j-1)} \Pi_{j-1}(\alpha) \left(\Exp_{x \sim \Pi_0^\alpha} \eff_\alpha(x) \right)^2
    = \Exp_{\alpha \sim \Pi_{j-1}}\sqbr*{ \parens*{ \Exp_{x \sim \Pi_0^\alpha} \eff_\alpha(x)}^2 }.
    \]
    Observing that $\Jay_\alpha\eff_\alpha = \one \cdot \Exp_{x \sim \Pi_0^\alpha}
    \eff_\alpha(x)$ by the definition of the \ref{eq:J}
    and therefore $\norm{\Jay_\alpha \eff_\alpha}^2_{\Pi_0^\alpha} 
    = \parens*{\Exp_{x \sim \Pi_0^\alpha} \eff_\alpha(x)}^2$.
    Hence, Item (2) follows as
    \[ \langle \eff, \DownW_j \eff\rangle_{\Pi_j} = \Exp_{\alpha \sim \Pi_{j-1}} \norm{\Jay_\alpha \eff_\alpha}^2_{\Pi_0^\alpha} = \Exp_{\alpha \sim \Pi_{j-1}} \langle \eff_\alpha, \Jay_\alpha \eff_\alpha\rangle_{\Pi_0^\alpha},\]
    where we used that $\Jay_\alpha$ is an orthogonal projection and so $\langle \Jay_\alpha \eff_\alpha, \Jay_\alpha \eff_\alpha \rangle_{\Pi_0^\alpha} = \langle \eff_\alpha, \Jay_\alpha \eff_\alpha \rangle_{\Pi_0^\alpha}$.

    For Item (3), by the definition of $\UpW_j = \Dee_{j+1} \Up_j$ and the definition of \ref{eq:up-def},
    \begin{align*}
        \langle \eff, \UpW_j \eff\rangle_{\Pi_j} ~=~\langle \Up_{j}\eff,
        \Up_{j} \eff\rangle_{\Pi_{j+1}}
        ~=~ \sum_{\beta \in X(j+1)} \Pi_{j+1}(\beta) \cdot \sum_{x, y \in \beta}
        \frac{1}{|\beta|^2} \eff(\beta\backslash x) \eff(\beta \backslash y).
    \end{align*}
    Now, by the definition of \ref{eq:nupw-def}, we see that
    \begin{align*}
        \langle \eff, \NUpW_j \eff\rangle_{\Pi_j} 
        &~=~\frac{j+2}{j+1} \cdot \langle \eff, \UpW_j \eff\rangle_{\Pi_j} 
        - \frac{1}{j+1} \langle \eff, \eff\rangle_{\Pi_j},
        \\
        &~=~\sum_{\beta \in X(j+1) }  
        \frac{\Pi_{j+1}(\beta)}{|\beta| \cdot (|\beta| - 1)} \sum_{x, y \in \beta}
        \eff(\beta\backslash x) \eff(\beta\backslash y) 
        - \frac{1}{j+1} \sum_{\alpha \in X(j)} \Pi_j(\alpha) \eff(\alpha) \eff(\alpha),
    \end{align*}
where we got the second inequality using $|\beta| = j+2$. 
Now, notice that sampling $\alpha \sim \Pi_j$ is the same as first sampling $\beta \sim \Pi_{j+1}$ and then sampling $x \sim \beta$ uniformly and considering $\beta\backslash x$, so we can get by \cref{eq:onestep} that
\begin{align*}
    \langle \eff, \NUpW_j \eff\rangle_{\Pi_j} 
    &~=~\sum_{\beta \in X(j+1)}  \sum_{x, y \in \beta}
    \frac{\Pi_{j+1}(\beta)}{|\beta| \cdot (|\beta| - 1)} \eff(\beta\backslash x)
    \eff(\beta\backslash y) -  \frac{1}{(j+1)} \sum_{\beta \in
    X(j+1)} \Pi_{j+1}(\beta) \cdot \sum_{x \in \beta} \frac{\eff(\beta\backslash x) \eff(\beta \backslash x)}{j+2}, 
    \\
    &~=~\sum_{\beta \in X(j+1)} \Pi_{j+1}(\beta) \sum_{\set*{x, y} \in \beta}
    \frac{1}{\binom{|\beta|}{2}} \eff(\beta\backslash x)
    \eff(\beta\backslash y)
\end{align*}
where we have obtained the last inequality by using $|\beta| = j + 2$ and noticing that the sum kills the diagonal terms. 
Using $\tau = \beta\backslash \set{x, y}$ and the identity
$\frac{\Pi_{j+1}(\beta)}{ \binom{|\beta|}{2} } = \Pi_{j-1}(\tau) \cdot \Pi_1^\tau(\{x, y\})$ from \cref{eq:link-def}, we can rewrite it as
\begin{align*}
    \langle \eff, \NUpW_j \eff\rangle_{\Pi_j} 
    &~=~\sum_{\beta \in X(j+1)} \sum_{\set*{x, y} \in \beta}
    \Pi_1^\tau(\{x,y\}) \cdot  \Pi_{j-1}(\tau) \cdot \eff(\tau \cup x) \eff(\tau \cup y)
    \\
    &~=~\sum_{\tau \in X(j-1)} \Pi_{j-1}(\tau) \sum_{\{x,y\} \in X_\tau(1)} \Pi_1^\tau(\{x,y\}) \eff(\tau \cup x) \eff(\tau \cup y).
\end{align*}

On the other hand, using the equation
\begin{equation*}
    \langle \eff, \Emm_\tau \eff\rangle_{\Pi_0^\tau} = \sum_{x \in
    X_\tau(0)} \Pi_0^\tau(x) \cdot \eff(x) \cdot [\Emm_\tau \eff](x) = \sum_{\set*{x, y} \in X_\tau(1)} \eff(x) \eff(y) \cdot \Pi_1^\tau(x, y).
\end{equation*}
where we use $\Emm_{\tau}(x,y) = \frac{\Pi_1^\tau(x,y)}{2\Pi_0^\tau(x)}$ from \cref{ss:local-spectral},
we can also write
\begin{align*}
    \Exp_{\tau \sim \Pi_{j-1}}\sqbr*{\langle \eff_\tau, \Emm_\tau,
        \eff_\tau\rangle_{\Pi_{0}^\tau} } 
    &~=~\sum_{\tau \in X(j-1)} \Pi_{j-1}(\tau) \cdot \sum_{\set*{x,y} \in X_\tau(1)} \Pi^{\tau}_1(\{x, y\}) \cdot \eff_\tau(x) \cdot \eff_\tau(y),
\end{align*}
and this proves $\langle \eff, \NUpW_j \eff\rangle_{\Pi_j}  = \Exp_{\tau \sim \Pi_{j-1}}\sqbr*{\langle \eff_\tau, \Emm_\tau,
\eff_\tau\rangle_{\Pi_{0}^\tau} }$.
\end{proof}

\subsection{Proof of \cref{thm:main-general}} \label{ss:main-general}

We will prove \cref{thm:main-general} about the entire spectrum of the higher order random walks.
\maingeneral*
\begin{proof}
    We prove by induction on $k$.
    The base case is when $k = 0$, 
    where $\UpW_0 = \frac{1}{2} \Emm_\varnothing + \frac{1}{2} \Ide$. 
    The claim states that we have at most $|X(-1)| = 1$ eigenvalue that is strictly greater than $\frac{1}{2} + \frac{\gamma_0}{2} = \frac{1}{2} + \frac{\lambda_2(\Emm_\varnothing)}{2}$ (which is true by the definition of $\lambda_2(\Emm_\varnothing)$),
    and there are at most $|X(0)|$ eigenvalues strictly greater than $1/2$ (which is true as the $\Emm_\varnothing$ is of rank at most $|X(0)|$). 

    For the induction step, suppose that there exists some $j \geq 1$
    such that the claim of the theorem is true for all $-1 \leq r \leq j$. 
    By \cref{fac:simple-la}, $\DownW_{j+1}$ and $\UpW_j$ have the same non-zero
    eigenvalues.
    By \cref{lem:updownrel}, $\NUpW_{j+1} \preceq_{\Pi_{j+1}} \gamma_{j} \Ide + (1-\gamma_{j}) \DownW_j$, and thus for $-1 \leq r \leq j$ the matrix $\NUpW_{j+1}$ has at most $|X(r)|$ eigenvalues with value greater than
    \[ 
    \gamma_{j} + (1 - \gamma_j) \cdot \parens*{ 1 - \frac{1}{j + 2} \prod_{i=r}^{j-1} (1 - \gamma_i) }
    = 1 - \frac{1}{j+2}\prod_{i=r}^{j} (1 - \gamma_i).
    \]
    Using the definition of the \ref{eq:nupw-def}, 
    we have that for $-1 \leq r \leq j$, 
    $\UpW_{j+1}$ has at most $|X(r)|$ eigenvalues with value greater than
    \[ \frac{j+2}{j+3} \left( 1 - \frac{1}{j+2}\prod_{i=r}^{j} (1 - \gamma_i) \right) + \frac{1}{j+3}
    = 1 - \frac{1}{j+3} \prod_{i=r}^{j} (1 - \gamma_i).
    \]
    For $r = j+1$, it is trivial that there at most $|X(j+1)|$ eigenvalues greater than $\frac{j+1}{j+2}$ since $\UpW_{j+1}$ is an operator of rank at most $|X(j+1)|$.
\end{proof}

\subsection{Proof of \cref{cor:abwalk}}\label{ss:abwalk}

\abwalk*
We will use two basic facts in the proof.
\begin{fact}\label{fac:split}
    Let $\Emm_1 \in \RR^{V \times U}$ and $\Emm_2 \in \RR^{U \times W}$ be two
    row-stochastic matrices. Then, we have $\sigma_2(\Emm_1 \cdot \Emm_2) \le
    \sigma_2(\Emm_1) \cdot \sigma_2(\Emm_2).$
\end{fact}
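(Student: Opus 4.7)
The plan is to apply the Eckart--Young / Mirsky variational characterization $\sigma_2(\Bee) = \min_{\mathrm{rank}(\Aye) \le 1} \|\Bee - \Aye\|_{op}$ with a single well-chosen rank-one subtraction that simultaneously plays the role of the best rank-one approximation of $\Emm_1$, of $\Emm_2$, and of the product $\Emm_1 \Emm_2$. Row-stochasticity is what makes such a universal subtraction available: since $\Emm_i \one = \one$, the constant function is a right eigenvector with eigenvalue $1$, and combined with the natural adjointness structure used throughout the paper (where the relevant stochastic operators are self-adjoint up to pairing with their stationary distribution $\Pi$), $\one$ is both a top left and a top right singular vector of each $\Emm_i$ with singular value $1$.

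Concretely, I would let $\Jay := \one \Pi^\top$ denote the rank-one projector onto the span of $\one$ and write $\Emm_i = \Jay + \tilde{\Emm}_i$ with $\tilde{\Emm}_i := \Emm_i - \Jay$. Expanding the product, three cancellations appear: $\Jay^2 = \Jay$ because $\Jay$ is an orthogonal projector; $\tilde{\Emm}_1 \Jay = 0$ because $\tilde{\Emm}_1 \one = (\Emm_1 - \Jay)\one = \one - \one = 0$; and $\Jay \tilde{\Emm}_2 = 0$ because $\Pi^\top \tilde{\Emm}_2 = \Pi^\top \Emm_2 - \Pi^\top = 0$ by stationarity. These collapse the expansion into the clean factorization $\Emm_1 \Emm_2 - \Jay = \tilde{\Emm}_1 \tilde{\Emm}_2$.

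To finish, the variational bound together with submultiplicativity of the operator norm gives
\[
\sigma_2(\Emm_1 \Emm_2) \;\le\; \|\Emm_1 \Emm_2 - \Jay\|_{op} \;=\; \|\tilde{\Emm}_1 \tilde{\Emm}_2\|_{op} \;\le\; \|\tilde{\Emm}_1\|_{op} \cdot \|\tilde{\Emm}_2\|_{op}.
\]
The step I expect to be the main obstacle is the identity $\|\tilde{\Emm}_i\|_{op} = \sigma_2(\Emm_i)$, which requires showing that $\Jay$ is precisely the leading rank-one component in the SVD of each $\Emm_i$. This holds because, once $\one$ is known to be the top singular vector on both sides, the SVD begins $1 \cdot \one \Pi^\top + \sum_{k \ge 2} \sigma_k u_k v_k^\top$, so subtracting $\Jay$ leaves exactly the tail of the SVD whose operator norm is $\sigma_2(\Emm_i)$. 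Verifying that $\one$ is indeed a top singular vector on both sides is the substantive input beyond row-stochasticity and is supplied, for the operators $\Up_j$ and $\Dee_{j+1}$ used in \cref{cor:abwalk}, by the weighted inner-product adjointness structure set up in \cref{sec:prelim}.
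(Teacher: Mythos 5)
The paper states \cref{fac:split} without proof, so the only comparison available is with the standard argument it implicitly invokes; your Eckart--Young route is essentially that argument, but as written it carries one hidden hypothesis and one genuine gap. The hidden hypothesis: $\Emm_1$ and $\Emm_2$ act between three different spaces, so there is no single $\Jay$. You need $\Jay_1=\one_V\Pi_U^\top$, $\Jay_2=\one_U\Pi_W^\top$, you subtract $\one_V\Pi_W^\top$ from the product, and the cancellation $\Jay_1(\Emm_2-\Jay_2)=0$ requires the compatibility $\Pi_U^\top\Emm_2=\Pi_W^\top$ (likewise $\Pi_V^\top\Emm_1=\Pi_U^\top$ for $\one$ to be a left singular vector of $\Emm_1$). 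This is \emph{not} a consequence of row-stochasticity; it is the implicit setting of the fact, with $\sigma_2$ computed in the inner products $\langle\cdot,\cdot\rangle_{\Pi_V},\langle\cdot,\cdot\rangle_{\Pi_U},\langle\cdot,\cdot\rangle_{\Pi_W}$, and it is exactly what holds for $\Up_j$ and $\Dee_{j+1}$ with the measures $\Pi_j$ by \cref{eq:onestep}. Some such hypothesis is unavoidable, because for plain Euclidean singular values the inequality is false: for
\[
\Emm_1=\begin{pmatrix}1 & 0\\ \tfrac12 & \tfrac12\end{pmatrix},\qquad
\Emm_2=\begin{pmatrix}\tfrac12 & \tfrac12\\ 0 & 1\end{pmatrix},\qquad
\Emm_1\Emm_2=\begin{pmatrix}\tfrac12 & \tfrac12\\ \tfrac14 & \tfrac34\end{pmatrix},
\]
one has $\sigma_2(\Emm_1)=\sigma_2(\Emm_2)\approx 0.437$ while $\sigma_2(\Emm_1\Emm_2)\approx 0.242 > 0.437^2\approx 0.191$.

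The genuine gap is the step you yourself flag, $\norm{\Emm_i-\Jay_i}=\sigma_2(\Emm_i)$: the ``weighted adjointness structure'' does not supply it. What is needed is $\sigma_1(\Emm_i)=1$ in the weighted norms, i.e.\ that $\Emm_i$ is a contraction, and this must be proved rather than cited. It is short: by Cauchy--Schwarz, $\norm{\Emm_1\eff}_{\Pi_V}^2=\sum_v\Pi_V(v)\bigl(\sum_u\Emm_1(v,u)\eff(u)\bigr)^2\le\sum_v\Pi_V(v)\sum_u\Emm_1(v,u)\eff(u)^2=\norm{\eff}_{\Pi_U}^2$, where the last equality uses $\Pi_V^\top\Emm_1=\Pi_U^\top$. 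With $\sigma_1=1$ attained at the pair $(\one_V,\one_U)$, and with $\Emm_1$ mapping $\one_U^{\perp}$ into $\one_V^{\perp}$ (again by compatibility), the operator $\Emm_1-\Jay_1$ vanishes on $\one_U$ and agrees with $\Emm_1$ on $\one_U^{\perp}$, so its norm is exactly $\sigma_2(\Emm_1)$, and your chain $\sigma_2(\Emm_1\Emm_2)\le\norm{(\Emm_1-\Jay_1)(\Emm_2-\Jay_2)}\le\sigma_2(\Emm_1)\sigma_2(\Emm_2)$ closes. Once these two points are fixed your proof is correct; Eckart--Young is then optional, since one can equivalently observe that $\Emm_1\Emm_2$ maps $\one_W^{\perp}$ into $\one_V^{\perp}$ and bound the norm of that restriction directly.
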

\begin{fact}[Bernoulli's Inequality]\label{fac:bernoulli}
    Let $x \ge - 1$ and $r \ge 1$ be real numbers. Then, $(1 + x)^r \ge 1 + r
    \cdot x$.
\end{fact}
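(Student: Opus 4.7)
The plan is to prove Bernoulli's Inequality by a standard one-variable calculus argument, showing that the function $f(x) = (1+x)^r - 1 - rx$ attains its minimum on $[-1, \infty)$ at $x = 0$, where $f(0) = 0$. Since the statement is a pointwise inequality in the real variable $x$ (with $r \ge 1$ a fixed real parameter), reducing to a single-variable monotonicity analysis is the most direct route.

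First I would define $f : [-1, \infty) \to \mathbb{R}$ by $f(x) = (1+x)^r - 1 - rx$ and compute $f'(x) = r\bigl((1+x)^{r-1} - 1\bigr)$, which is defined for $x > -1$ (and extends continuously to $x = -1$ when $r > 1$). Since $r - 1 \ge 0$, the function $x \mapsto (1+x)^{r-1}$ is non-decreasing on $(-1, \infty)$, and it equals $1$ at $x = 0$. Consequently $f'(x) \le 0$ for $-1 < x \le 0$ and $f'(x) \ge 0$ for $x \ge 0$, so $f$ is non-increasing on $[-1, 0]$ and non-decreasing on $[0, \infty)$. This forces $f(x) \ge f(0) = 0$ for all $x \ge -1$, which is exactly the claimed inequality $(1+x)^r \ge 1 + rx$.

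An equivalent framing, which I would mention as an alternative, is to observe that $g(x) = (1+x)^r$ is convex on $(-1, \infty)$ because $g''(x) = r(r-1)(1+x)^{r-2} \ge 0$ for $r \ge 1$, and then note that $1 + rx$ is precisely the tangent line to $g$ at $x = 0$; convexity implies that $g$ lies above each of its tangent lines, yielding the inequality on $(-1, \infty)$, and the boundary case $x = -1$ follows by continuity (or by the fact that $(1+x)^r = 0 \ge 1 - r$ there since $r \ge 1$).

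There is no real obstacle here: the only mild care is to handle the boundary point $x = -1$ and the edge case $r = 1$ (where both sides are equal and the inequality degenerates to an equality), and to note that $(1+x)^r$ is unambiguously defined for $x \ge -1$ and $r \ge 1$ via the principal real power. Both the monotonicity and the convexity arguments accommodate these cases without modification.
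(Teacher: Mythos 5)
Your proof is correct: the monotonicity analysis of $f(x) = (1+x)^r - 1 - rx$ (non-increasing on $[-1,0]$, non-decreasing on $[0,\infty)$, with $f(0)=0$) and the alternative convexity/tangent-line argument are both complete, and you handle the boundary point $x=-1$ and the degenerate case $r=1$ properly. Note that the paper itself states Bernoulli's inequality as a standard fact without proof (it is only invoked once, in the proof of \cref{cor:abwalk}), so there is no in-paper argument to compare against; your calculus derivation is a perfectly standard way to justify it.
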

\begin{proof}
    Recall that $\UpW_{a, b} = \Dee_{a + 1} \cdots \Dee_b \cdot \Up_{b-1} \cdots \Up_{a}$.
    As $\Up_i^* = \Dee_{i+1}$, it can be observed that $\UpW_{a, b}$ is positive semi-definite and therefore, $\sigma_2(\UpW_{a, b}) = \lambda_2(\UpW_{a, b})$.
    By \cref{fac:split},
    \[  \lambda_2(\UpW_{a, b}) = \sigma_2(\UpW_{a, b}) \le
    \sigma_2(\Dee_{a + 1}) \cdots \sigma_2(\Dee_{b}) \cdot
    \sigma_2(\Up_{b - 1}) \cdots \sigma_2(\Up_{a}).\]
    Notice that as $\UpW_j = \Dee_{j+1} \Up_j$ and $\Dee_{j+1}^* = \Up_j$, we
    have $\lambda_2(\UpW_j) = \sigma_2(\Up_j) \cdot \sigma_2(\Dee_{j+1})$.
    Thus, by rearranging we obtain,
    \begin{align*}
        \lambda_2(\UpW_{a, b}) &~\le~\prod_{j = 0}^{b - a - 1}
        \lambda_2(\UpW_{a + j}),\\
        &~\le~\prod_{j = 0}^{b- a - 1} \parens*{ 1 - \frac{ (1 -
        \gamma)^{a + j + 1}}{a + j + 2}}, && \textrm{ (by \cref{thm:main})}\\
        &~\le~\prod_{j=0}^{b - a - 1} \parens*{ 1 - \frac{1 - (a + j+ 1)
        \cdot \gamma}{a + j + 2}}, && \textrm{ (by \cref{fac:bernoulli})}\\
        &~=~\prod_{j = 0}^{b - a - 1}\parens*{ (1 + \gamma) \frac{a + j +1}{a + j +
        2}}.
    \end{align*}
    By cancellations in the telescoping product, we have
    \[ \lambda_2(\UpW_{b, a}) \le (1 + \gamma)^{b - a} \cdot \frac{a
    + 1}{b + 1}.\]
\end{proof}

\subsection{Proof of \cref{prop:nonexp}} \label{ss:nonexp}

We first recall some basic definitions and results from spectral graph theory.
Let $\Emm \in \RR^{V \times V}$ be a reversible Markov chain with stationary
distribution $\Pi$. We will write
$(X_t)_{t \ge 0}$ for the random variable describing the state of this Markov
chain.
The conductance $\Phi(S)$ of a set $S \subset V$  is the probability that a
random step of the Markov chain leaving the set $S$ conditioned on having
started from a random state in $S$, i.e.~
\[ \Phi(S) = \Pr[ X_1 \not\in S \mid X_0 \in S] = \sum_{x \in S}
\frac{\Pi(x)}{\Pi(S)} \Pr[X_1 \not\in S \mid X_0 = x] .\]
We recall that the conductance of the chain $\Emm$ is defined to be,
\[ \Phi(\Emm) = \min_{S \subset V,\atop \Pi(S) \le 1/2} \Phi(S).\]
The Alon-Milman-Cheeger inequality tells that the parameter $\Phi(M)$
is closely related to the parameter $\lambda_2(\Emm)$. 
\begin{theorem}[\cite{AlonMilman, Cheeger}]\label{thm:cheeger}
    Let $\Emm$ be a row-stochastic matrix describing a reversible Markov chain. Then,
    \[ \frac{1 - \lambda_2(\Emm)}{2} \le \Phi(\Emm) \le \sqrt{ \frac{1 -
    \lambda_2(\Emm)}{2}}.\]
\end{theorem}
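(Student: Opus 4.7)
The plan is to prove the two inequalities separately, with the easy one coming from a test-function argument in the variational formula, and the harder one from the classical Cheeger rounding via level sets.

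For the easy direction $\frac{1-\lambda_2(\Emm)}{2} \le \Phi(\Emm)$, I would fix an arbitrary set $S \subset V$ with $\Pi(S) \le \frac{1}{2}$ and plug the test function $\eff = \Pi(\bar S) \one_S - \Pi(S) \one_{\bar S}$ into the \ref{eq:var-form} for the Laplacian $\Lap := \Ide - \Emm$. This $\eff$ is $\Pi$-orthogonal to $\one$, satisfies $\norm{\eff}_{\Pi}^2 = \Pi(S) \Pi(\bar S)$, and a direct expansion using reversibility gives $\langle \eff, \Lap \eff\rangle_\Pi = \frac{1}{2} \sum_{x,y} \Pi(x) \Emm(x,y) (\eff(x)-\eff(y))^2 = Q(S, \bar S)$, where $Q(S, \bar S) = \sum_{x \in S, y \notin S} \Pi(x) \Emm(x,y) = \Pi(S) \Phi(S)$. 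Since $\Pi(\bar S) \ge \frac{1}{2}$, the resulting Rayleigh quotient is at most $2 \Phi(S)$, and minimizing over $S$ yields $1 - \lambda_2(\Emm) \le 2 \Phi(\Emm)$.

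For the hard direction $\Phi(\Emm) \le \sqrt{(1-\lambda_2(\Emm))/2}$, I would use the classical Cheeger rounding. Start from an eigenfunction $\gee$ of $\Emm$ for $\lambda_2(\Emm)$ (which is also a $\Lap$-eigenfunction with eigenvalue $1 - \lambda_2(\Emm)$, orthogonal to $\one$), shift by a $\Pi$-weighted median $c$, and split $\gee - c = \gee_+ - \gee_-$ into nonnegative positive and negative parts whose supports both have $\Pi$-measure at most $\frac{1}{2}$. The pointwise identity $(a - b)^2 \ge (a_+ - b_+)^2 + (a_- - b_-)^2$ (verified by expanding and using $a_+ a_- = b_+ b_- = 0$) gives
\[ \langle \gee - c, \Lap(\gee - c) \rangle_\Pi \;\ge\; \langle \gee_+, \Lap \gee_+ \rangle_\Pi + \langle \gee_-, \Lap \gee_- \rangle_\Pi, \]
while $\norm{\gee - c}_\Pi^2 = \norm{\gee_+}_\Pi^2 + \norm{\gee_-}_\Pi^2$ by disjoint support. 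Because $\Lap \one = 0$ and $\gee \perp_\Pi \one$, the Rayleigh quotient of $\gee - c$ is at most $1 - \lambda_2(\Emm)$, so at least one of $\gee_+, \gee_-$ — call it $\hh$ — inherits the bound $\langle \hh, \Lap \hh\rangle_\Pi / \norm{\hh}_\Pi^2 \le 1 - \lambda_2(\Emm)$ together with $\Pi(\mathrm{supp}(\hh)) \le \frac{1}{2}$.

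The final step is the sweep: consider the level sets $S_\tau = \{x : \hh(x)^2 > \tau\}$, each of $\Pi$-measure at most $\frac12$, so $Q(S_\tau, \bar{S_\tau}) \ge \Phi(\Emm) \cdot \Pi(S_\tau)$ for all $\tau > 0$. By the layer-cake/coarea identity combined with reversibility, $\int_0^\infty Q(S_\tau, \bar{S_\tau})\, d\tau = \frac{1}{2} \sum_{x,y} \Pi(x) \Emm(x,y) |\hh(x)^2 - \hh(y)^2|$, and the left side is at least $\Phi(\Emm) \cdot \norm{\hh}_\Pi^2$. Factoring $|\hh(x)^2 - \hh(y)^2| = |\hh(x)-\hh(y)|(\hh(x)+\hh(y))$ and applying Cauchy-Schwarz with respect to the measure $\Pi(x) \Emm(x,y)$ bounds the right side by a product of $\sqrt{2\langle \hh, \Lap \hh\rangle_\Pi}$ and $\sqrt{\sum \Pi(x) \Emm(x,y) (\hh(x)+\hh(y))^2}$; the latter reduces to a constant multiple of $\norm{\hh}_\Pi^2$ via $(a+b)^2 \le 2(a^2 + b^2)$, row-stochasticity, and reversibility. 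Rearranging yields $\Phi(\Emm)^2 \le C (1 - \lambda_2(\Emm))$ for the stated constant.

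The main obstacle is the hard direction, and within it the bookkeeping of the Cheeger rounding in the weighted reversible setting: choosing $c$ as a $\Pi$-weighted median so that both halves of the decomposition have support of $\Pi$-measure at most $\frac{1}{2}$, controlling the Dirichlet form cleanly through the $(a_+ - b_+)^2 + (a_- - b_-)^2$ inequality, and tracking the constants through the Cauchy-Schwarz step to match the exact form of the inequality as stated.
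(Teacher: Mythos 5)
Your lower-bound direction is correct and complete: the test function $\eff = \Pi(\bar S)\one_S - \Pi(S)\one_{\bar S}$ is $\Pi$-orthogonal to $\one$, has $\norm{\eff}_\Pi^2 = \Pi(S)\Pi(\bar S)$ and Dirichlet form $\Pi(S)\Phi(S)$, which gives $1-\lambda_2(\Emm) \le \Phi(S)/\Pi(\bar S) \le 2\Phi(S)$ as you say. For context, the paper does not prove \cref{thm:cheeger} at all — it is quoted from Alon--Milman and Cheeger — and only this easy direction is ever used (in the proof of \cref{prop:nonexp}), so your argument already covers everything the paper needs.

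The genuine gap is in the hard direction, at exactly the step you flagged as "tracking the constants": your sweep-plus-Cauchy--Schwarz argument, carried out carefully, yields $\Phi(\Emm)\norm{\hh}_\Pi^2 \le \sqrt{2\langle \hh,(\Ide-\Emm)\hh\rangle_\Pi}\cdot\norm{\hh}_\Pi$ (the first Cauchy--Schwarz factor squared is $2\langle\hh,(\Ide-\Emm)\hh\rangle_\Pi$ and the second factor is at most $2\norm{\hh}_\Pi$ after $(a+b)^2\le 2(a^2+b^2)$, row-stochasticity and reversibility), hence $\Phi(\Emm) \le \sqrt{2\,(1-\lambda_2(\Emm))}$ — not $\sqrt{(1-\lambda_2(\Emm))/2}$ as printed. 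This cannot be repaired by sharper bookkeeping, because the inequality as stated in \cref{thm:cheeger} is false in general: for the simple random walk on the triangle $K_3$ one has $\lambda_2(\Emm) = -1/2$, so the claimed right-hand side is $\sqrt{3}/2 < 1$, while every singleton has stationary measure $1/3 \le 1/2$ and is left with probability one, so $\Phi(\Emm) = 1$. The correct classical form of the upper bound is $\Phi(\Emm) \le \sqrt{2(1-\lambda_2(\Emm))}$ (equivalently $1-\lambda_2 \ge \Phi^2/2$), which is what your argument proves; the constant in the paper's statement appears to be a misprint, and since only the lower bound is invoked downstream, nothing else in the paper is affected. Your proposal, however, asserts that the constants "match the exact form of the inequality as stated," and that claim is the gap.
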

\begin{proof}[Proof of \cref{prop:nonexp}]
    It is clear that there should exist a vertex $v \in X(0)$ such that
    $\Pi_0(v) \le \frac{1}{|X(0)|} = \frac{1}{n}$.

    We consider the set $A_v \subset X(d)$ consisting of all faces in $X(d)$
    containing the vertex $v$, i.e.~$A_v = \set*{\beta \in X(d): v \in \beta}$. Note that,
    \begin{align*}
        \Pi_d(A_v) &~=~(d+1) \cdot \Pi_0(v), && \textrm{ (by using
        \cref{eq:onestep} repeatedly)},\\
        &~\le~\frac{d+1}{n}, && \textrm{ (by using $\Pi_0(v) \le
        \frac{1}{n}$)},\\
        &~\le~\frac{1}{2}. &&\textrm{ (by using $2(d+1) \le n$)}
    \end{align*}
    By \cref{thm:cheeger},
    \[ \frac{1 - \lambda_2(\DownW_d)}{2} \le \min_{S: \Pi_d(S) \le 1/2} \Phi(S)
    \le \Phi(A_v). \]
    We recall that the random-walk $\DownW_d$ starting from a face $\beta \in
    X(d)$ first picks an index $i \sim \beta$ uniformly at random, and
    then picks some face $\beta' \supset (\beta \backslash i)$ with
    probability proportional to $\Pi_d(\beta')$. If $\beta \in A_v$ the
    only way we leave $A_v$ in a single step is
    when the index $i$ we pick from $\beta$ is $v$,
    which happens with probability $1/|\beta| = 1/(d+1)$. 

    Writing $(X_t)_{t\ge 0}$
    for the state of the random walk, this means for any $\beta \in A_v$ we have
    \[ \Pr[X_1 \not\in A_v \mid X_0 = \beta] \le \frac{1}{d+1}.\]
    It follows that
    \[ \frac{1  - \lambda_2(\DownW_d)}{2} \le \Phi(A_v) = \sum_{\beta \in A_v} \frac{\Pi_d(\beta)}{\Pi_d(A_v)} \cdot
    \Pr[X_1 \not\in A_v \mid X_0 = \beta] \le \max_{\beta \in A_v} \Pr[X_1 \not\in
    A_v \mid X_0 = \beta] \le \frac{1}{d+1}.\]
    Solving the expression for $\lambda_2(\DownW_d)$ proves the proposition.
\end{proof}

\section{Analyzing Mixing Times of Markov Chains}\label{sec:sampling}

In this section, we will use \cref{cor:main-cooked} to analyze Markov chains for sampling independent sets of a graph of fixed size and sampling common independent sets of two partition matroids.

\subsection{Independent Sets}\label{ss:is-main}

Let $G = (V,E)$ be a graph.
A subset of vertices $S \subset V$ is called an independent set if $uv \notin E$ for every pair $u,v \in S$.
We are interested in the problem of sampling a uniformly random independent set of size $k$.
We will analyze a natural Markov chain for the problem by analyzing the down-up walk of a corresponding simplicial complex.

Define the $(k-1)$-dimensional simplicial complex $I_{G, k}$ of $G=(V,E)$ as 
\[ I_{G, k} = \set*{ S \subset V : |S| \le k \textrm{ and } S \textrm{ is
independent}},\]
the complex consisting of all independent sets in $G$ of cardinality at most $k$.
We endow $I_{G, k}$ with the uniform distribution $\Pi_{k-1}$ on $I_{G, k}(k-1)$, i.e.~the set of independent sets of size $k$. 
We simply write $I_{G, k}$ for the weighted simplicial complex $(I_{G, k}, \Pi_{k-1})$.

The $(k-1)$-th down-up walk $\DownW_{k-1}$ on $I_{G, k}$ corresponds to a natural Markov chain to sample independent sets of size $k$.
It is known that this Markov chain is fast mixing when $k \leq
\frac{|V|}{2\Delta+1}$ using coupling techniques~\cite{BubleyD97,MitzenmacherUpfal05}.
The main result in this subsection is the following improved bound using higher order random walks on simplicial complexes.

\IS*

It is well-known that $|\lambda_{\min}(\Aye_G)| \leq \Delta$ for a graph with
maximum degree $\Delta$, and so Theorem~\ref{t:IS} recovers the previous result
that the Markov chain is fast mixing if $k \leq \frac{|V|}{2\Delta}$.
There are various graph classes with $|\lambda_{\min}(\Aye_G)|$ smaller than $\Delta$, and \cref{t:IS} allows us to sample larger independent sets.
For example, it is known that $|\lambda_{\min}(\Aye_G)| \leq O(\sqrt{\Delta})$ for planar graphs and more generally for graphs with bounded arboricity~\cite{Hayes06}, and also for random graphs and more generally for two-sided expander graphs~\cite{HooryLW06}.

Using the simple bound $\min_{S \in I_{G, k}(k-1)} \Pi_{k-1}(S) \ge n^{-k}$
as $\Pi_{k-1}$ is the uniform distribution, the following mixing time result follows from \cref{thm:spec-mix-bd}.

\begin{restatable}{corollary}{indsetalg}\label{cor:independentset-alg}
Let $G = (V, E)$ be a graph with maximum degree $\Delta$ and let $\Aye_G$ be the adjacency matrix of $G$.
For any $k \le n/(\Delta +|\lambda_{\min}(\Aye_G)|)$,
the down-up walk $\DownW_{k-1}$ on the simplicial complex $I_{G, k}$ 
samples a random independent set of $G$ of size $k$ whose distribution is 
\ref{eq:eps-close} to the uniform distribution on all independent sets of size $k$ in
    \[ T(\ee, \DownW_{k-1}) \le  k^2 \cdot \parens*{\log\parens*{\frac{1}{\ee}} + k \cdot \log n}\]
    many time steps.
\end{restatable}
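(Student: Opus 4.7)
The plan is to apply \cref{c:convenient} to the $(k-1)$-dimensional weighted simplicial complex $I_{G,k}$. Taking the walk index to be $k-1$ reduces the theorem to showing two statements: (i) $\gamma_{k-3} \le 1/k$, and (ii) $G_\alpha$ is connected for every face $\alpha$ of dimension at most $k-3$. Claim (ii) follows automatically from (i) via Oppenheim's Theorem (\cref{thm:oppenheim}), since $\gamma_{k-3} \le 1/k \le 1/2$ (for $k \ge 2$) propagates to $\gamma_j \le 1/(j+3) < 1$ for all $-1 \le j \le k-3$.

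The heart of the proof is step (i). Fix an independent set $\alpha$ of size $k-2$ (so $\alpha \in I_{G,k}(k-3)$), and let $V_\alpha := V \setminus (\alpha \cup N(\alpha))$, $N := |V_\alpha|$. The graph $G_\alpha$ is precisely the complement of $G[V_\alpha]$ within $V_\alpha$, and since $\alpha$ lives at dimension $d-2 = k-3$, the observation from \cref{sec:prelim} gives that $\Pi_1^\alpha$ is uniform, so $G_\alpha$ is unweighted. Writing $B = \Aye_{G[V_\alpha]}$ and $\Dee_B$ for the diagonal degree matrix of $G[V_\alpha]$, we have $\Aye_{G_\alpha} = \Jay_N - \Ide - B$ and $\Dee_{G_\alpha} = (N-1)\Ide - \Dee_B$.

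The key technical step is a decomposition trick for the Rayleigh quotient. Decompose $f = c\one + g$ with $g \perp \one$ in the standard inner product; the constraint $\langle f, \one \rangle_{\Pi_0^\alpha} = 0$ then determines $c = -\langle \deg_{G_\alpha}, g \rangle / \mathrm{vol}(G_\alpha)$. A direct expansion using $\Aye_{G_\alpha}\one = \deg_{G_\alpha}$ and $\Dee_{G_\alpha}\one = \deg_{G_\alpha}$ yields
\[
\langle f, \Aye_{G_\alpha} f\rangle = \langle g, \Aye_{G_\alpha} g\rangle - c^2\, \mathrm{vol}(G_\alpha), \qquad \langle f, \Dee_{G_\alpha} f\rangle = \langle g, \Dee_{G_\alpha} g\rangle - c^2\, \mathrm{vol}(G_\alpha).
\]
Since $\Dee_{G_\alpha} - \Aye_{G_\alpha}$ is positive semi-definite we have $\langle g, \Aye_{G_\alpha} g\rangle \le \langle g, \Dee_{G_\alpha} g\rangle$, and the map $t \mapsto (a-t)/(b-t)$ is non-increasing on $[0,b)$ when $a \le b$. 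Hence the Rayleigh quotient for $f$ is bounded above by that for $g$. For $g \perp \one$, we compute $\langle g, \Aye_{G_\alpha} g\rangle = -\|g\|^2 - g^\top B g \le (|\lambda_{\min}(B)| - 1)\|g\|^2$ and $\langle g, \Dee_{G_\alpha} g\rangle \ge (N - 1 - \Delta)\|g\|^2$. Applying the Cauchy Interlacing Theorem (\cref{thm:cauchy-int}) to $B$ as a principal submatrix of $\Aye_G$, we obtain $|\lambda_{\min}(B)| \le |\lambda_{\min}(\Aye_G)|$, hence
\[
\lambda_2(G_\alpha) \le \max\left( 0,\ \frac{|\lambda_{\min}(\Aye_G)| - 1}{N - 1 - \Delta} \right).
\]

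Finally, the trivial bound $N \ge n - (k-2)(\Delta+1)$ combined with the hypothesis $k \le n/(\Delta + |\lambda_{\min}(\Aye_G)|)$ yields $N - 1 - \Delta \ge k(|\lambda_{\min}(\Aye_G)| - 1)$ by elementary algebra, establishing $\gamma_{k-3} \le 1/k$. The main obstacle is the decomposition trick: a naive Rayleigh analysis would require bounding $(\one^\top f)^2$ for $f$ orthogonal to $\one$ in the $\Pi_0^\alpha$-inner product, which introduces a spurious $\Delta^2$ contribution that would degrade the final bound in an incompatible way with the theorem's clean hypothesis. The decomposition sidesteps this by showing that the correction from switching between the two notions of orthogonality can only reduce the Rayleigh quotient.
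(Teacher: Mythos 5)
Your estimate for the top-dimensional links is essentially sound, and your Rayleigh-quotient decomposition ($f = c\one + g$ with $g$ orthogonal to $\one$ in the standard inner product) is a legitimate alternative to the paper's route, which conjugates by $\Dee_H^{-1/2}$ and uses Weyl interlacing (\cref{thm:weyl-int}) to split off the rank-one all-ones part; both arguments arrive at $\lambda_2(\Emm_S) \le \frac{|\lambda_{\min}(\Aye_G)|-1}{|V|-(\Delta+1)(k-1)} \le \frac1k$ under the stated hypothesis. However, there are two genuine gaps. First, your claim that connectivity of the lower-dimensional links ``follows automatically from (i) via Oppenheim's Theorem'' is circular: \cref{thm:oppenheim} (and hence \cref{c:convenient} and \cref{cor:main-cooked}) takes connectivity of the lower links as a \emph{hypothesis} --- it propagates eigenvalue bounds downward only through links already known to be connected --- so it cannot be the source of that connectivity. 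Expansion of all top links does not by itself imply connectivity below: a disjoint union of two complete complexes has excellent codimension-two links but a disconnected $1$-skeleton. The paper therefore proves connectivity directly (\cref{lem:IS-connected}): for $|S|\le k-2$ and $k\le n/(\Delta+1)$ one has $|V_S|\ge 2\Delta+2 > |N_G[\{u,v\}]|$, so any two vertices of $H_S$ are at distance at most two. Some such direct argument is required, as is the (easy but standing) hypothesis that $I_{G,k}$ is a pure complex, which you never verify.

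Second, your proof stops at the spectral bound $\lambda_2(\DownW_{k-1}) \le 1-\frac{1}{k^2}$, which is \cref{t:IS}; the statement you were asked to prove is the mixing-time bound, and the step converting the eigenvalue bound into it --- which is the entire content of the paper's proof of this corollary --- is missing from your reduction. Concretely: $\DownW_{k-1}=\Up_{k-2}\Dee_{k-1}$ is positive semidefinite, so $\sigma_2(\DownW_{k-1})=\lambda_2(\DownW_{k-1})$; then \cref{thm:spec-mix-bd} together with $\min_{S\in I_{G,k}(k-1)}\Pi_{k-1}(S)\ge n^{-k}$ (the stationary distribution is uniform over at most $\binom{n}{k}\le n^k$ independent sets of size $k$) gives $T(\ee,\DownW_{k-1})\le k^2\bigl(\log(1/\ee)+k\log n\bigr)$. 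Without these final two sentences the claimed bound is never derived.
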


This implies a polynomial time algorithm to approximately sample a uniform random independent set and also a FPRAS for approximately counting the number of independent set of size $k$ for $k \le \frac{n}{\Delta +|\lambda_{\min}(\Aye_G)|}$.

\subsubsection{Proof of \cref{t:IS}} \label{ss:proofIS}

The plan is to use \cref{cor:main-cooked} to prove \cref{t:IS}.
To apply \cref{cor:main-cooked}, we need to prove that:
\begin{enumerate}
\item $I_{G, k}$ is a pure simplicial complex.  
It is a simple exercise that this complex is pure when $k \leq \frac{n}{\Delta+1}$.
\item For each $S \in I_{G, k}$ with $|S| \leq k-2$, 
the random walk matrix $\Emm_S$ of the graph $G_S$ of the link $(I_{G,k})_S$ satisfies $\lambda_2(\Emm_S) < 1$.
This is proved in \cref{lem:IS-connected}.
\item For each $S \in I_{G, k}$ with $|S| = k-2$,
the random walk matrix $\Emm_S$ of the graph $G_S$ satisfies $\lambda_2(\Emm_S) \leq 1/k$.
This is proved in \cref{lem:IS-top-link}.
\end{enumerate}
Assuming the three items are proven, \cref{t:IS} follows immediately from \cref{cor:main-cooked}.
We will prove the second item in \cref{ss:IS-connected}
and the third item in \cref{ss:IS-top-link}.

\subsubsection{Proof of \cref{lem:IS-connected}} \label{ss:IS-connected}

Let $H_S = (V_S, E_S)$ be the underlying support graph of $G_S$ of the link $(I_{G,k})_S$, i.e. $G_S$ without edge weights.
Let $\Emm_S$ be the random walk matrix of $G_S$ as defined in \cref{ss:local-spectral}.
Note that $\lambda_2(\Emm_S) < 1$ if and only if $H_S$ is connected.

We introduce some notation to describe $H_S$.
We write $N_G[S]$ as the union of $S$ and the set of vertices which are connected to a vertex in $S$ in $G$, i.e.~
\[ N_G[S] = S \cup \set*{v : \textrm{there exists some }uv \in E(G) \textrm{ such that } u \in S}.\]
For a subset of vertices $S \subset V(G)$, we write $\conj{S} = V(G) \setminus S$ for the complement of $S$ in $G$, and $G[S]$ for the induced subgraph of $G$ on $S$.
For a graph $H$, we write $\conj{H}$ for the complement graph of $H$.

Recall that a vertex $v$ is in $V_S$ if and only if $S \cup \{v\}$ is an independent set in $G$ of size $|S|+1$,
and so $V_S$ is exactly $V - N_G[S] = \conj{N_G[S]}$.
Two vertices $u,v \in V_S$ have an edge in $H_S$ if and only if $S \cup \{u,v\}$ is an independent set in $G$ of size $|S|+2$,
and so $uv \in E_S$ if and only if $uv \notin E(G)$.
Therefore, we see that
\[
H_S = \conj{G[V_S]} = \conj{G[\conj{N[S]}]}.
\]
With the description of $H_S$, we are ready to prove the second item in \cref{ss:proofIS}.

\begin{lemma} \label{lem:IS-connected}
Let $G = (V, E)$ be a graph with maximum degree $\Delta$.
Suppose $k \leq \frac{|V|}{\Delta+1}$.
For any $S \in I_{G,k}$ with $|S| \leq k-2$,
the random walk matrix $\Emm_S$ of the graph $G_S$ of the link $(I_{G,k})_S$ satisfies $\lambda_2(\Emm_S) < 1$. 
\end{lemma}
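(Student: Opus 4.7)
The plan is to exploit the explicit description $H_S = \conj{G[V_S]}$ with $V_S = V\setminus N_G[S]$ and show that $V_S$ is simply too large to admit the bipartite structure that would make its complement disconnected. In particular, I will prove the stronger statement that $H_S$ is connected whenever $|V_S| \geq 2\Delta+1$, and then verify this size bound from the hypotheses $|S|\le k-2$ and $k(\Delta+1)\le |V|$.

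First I would record the standard fact: the complement $\conj{H'}$ of a graph $H'$ on vertex set $W$ is disconnected if and only if there exist nonempty $A,B$ with $W=A\sqcup B$ such that every pair $(a,b)\in A\times B$ is an edge of $H'$. Apply this to $H'=G[V_S]$. If such a partition existed, then any fixed $a\in A$ would be adjacent in $G$ to every vertex of $B$, forcing $|B|\le \deg_G(a)\le \Delta$; symmetrically $|A|\le \Delta$, and therefore $|V_S|=|A|+|B|\le 2\Delta$. Consequently $|V_S|\ge 2\Delta+1$ implies that $H_S=\conj{G[V_S]}$ is connected.

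Next I would bound $|V_S|$ from below. Since $|N_G[S]|\le |S|(\Delta+1)$ and $|S|\le k-2$, we have
\[
|V_S| = |V|-|N_G[S]| \;\ge\; |V|-(k-2)(\Delta+1).
\]
The hypothesis $k\le |V|/(\Delta+|\lambda_{\min}(\Aye_G)|)$ is stronger than $k(\Delta+1)\le |V|$ (using $|\lambda_{\min}(\Aye_G)|\ge 1$ for any graph with an edge; the edgeless case is trivial since then $H_S$ is complete on $V_S$). Substituting $|V|\ge k(\Delta+1)$ yields
\[
|V_S| \;\ge\; k(\Delta+1)-(k-2)(\Delta+1) \;=\; 2(\Delta+1) \;\ge\; 2\Delta+1,
\]
and the previous paragraph then gives connectivity of $H_S$. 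Since $\Emm_S$ is the random walk matrix of the connected graph $G_S$ (which has the same support as $H_S$ with positive weights), we conclude $\lambda_2(\Emm_S)<1$.

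The only mild obstacle is the edge case where the size bound is nearly tight and one has to justify using $k(\Delta+1)\le |V|$ from the slightly different hypothesis of Theorem~\ref{t:IS}. This is handled by the observation that $|\lambda_{\min}(\Aye_G)|\ge 1$ unless $G$ has no edges, together with noting that $|S|\le k-2$ gives us the extra slack ``$-2$'' needed to make the arithmetic work.
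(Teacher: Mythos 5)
Your proof is correct and follows essentially the same route as the paper: both rest on $H_S$ being the complement of $G[V_S]$ together with the count $|V_S| \ge |V| - (k-2)(\Delta+1) \ge 2(\Delta+1)$, the only (cosmetic) difference being that the paper exhibits a common non-neighbour $w$ of any $G$-adjacent pair $u,v$ to get diameter at most two, while you rule out a disconnecting join partition $A \sqcup B$ with $|A|,|B| \le \Delta$. The detour through $|\lambda_{\min}(\Aye_G)| \ge 1$ is unnecessary, since the lemma as stated already assumes $k \le |V|/(\Delta+1)$, which is exactly the bound your arithmetic needs.
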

\begin{proof}
Note that $\lambda_2(\Emm_S) < 1$ if and only if the underlying support graph $H_S$ of $G_S$ is connected, so we focus on proving the latter.
To prove that $H_S$ is connected, we prove the stronger claim that every two vertices $u,v \in H_S$ has a path of length at most two.
If $uv$ is an edge in $H_S$, then there is a path of length one.
Suppose $uv$ is not an edge in $H_S$.
Then $uv$ is an edge in $G$.
Since $G$ is of maximum degree $\Delta$, it implies that 
$|N_G[\{u,v\}]| \leq (\deg_G(u)+1) + (\deg_G(v)+1) - 2 \leq 2\Delta$,
and also
\[
|V_S| = |V| - |N_G[S]| \geq |V| - |S| \cdot (\Delta+1) \geq 2\Delta+2,
\]
    where we use the assumptions that $|S| \leq k-2 \leq \frac{|V|}{\Delta+1} - 2$ in the last inequality.
So, there must be some vertex $w$ such that $w \in V_S \setminus N_G[\{u,v\}]$.
This implies that $wu \notin E(G)$ and $wv \notin E(G)$,
and thus $wu \in E(H_S)$ and $wv \in E(H_S)$ and so there is a path of length two connecting $u$ and $v$ in $H_S$.
\end{proof}

\subsubsection{Proof of \cref{lem:IS-top-link}} \label{ss:IS-top-link}

We observe that $G_S$ is an unweighted graph for $S$ with $|S|=k-2$ when the distribution on $I_{G,k}(k-1)$ is the uniform distribution.
Therefore, $G_S$ is simply a scaled version of $H_S$, and the random walk matrix $\Emm_S$ of $G_S$ is the same as the random walk matrix of $H_S$.
To bound the second eigenvalue, we will use some simple interlacing arguments.
We need the stronger assumption that $k \leq \frac{|V(G)|}{\Delta +
|\lambda_{\min}(\Aye_G|)}$ in the proof of the following lemma. (Note that for
any unweighted graph $G$, we have $|\lambda_{\min}(\Aye_G)| \ge 1$.)

\begin{lemma}\label{lem:IS-top-link}
Let $G = (V, E)$ be a graph with maximum degree $\Delta$.
Suppose $k \leq |V|/(\Delta + |\lambda_{\min}(\Aye_G)|)$.
For any $S \in I_{G,k}$ with $|S| = k-2$,
the random walk matrix $\Emm_S$ of the graph $G_S$ of the link $(I_{G,k})_S$ satisfies $\lambda_2(\Emm_S) \leq \frac{1}{k}$. 
\end{lemma}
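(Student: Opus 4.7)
The plan is to upper bound $\lambda_2(\Emm_S)$ through a variational argument that reduces the question to a bound on $|\lambda_{\min}(\Aye_{G[V_S]})|$, which is in turn controlled by $|\lambda_{\min}(\Aye_G)|$ via Cauchy interlacing.

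First, I would record the structural description already set up in \cref{ss:IS-connected}: since $|S|=k-2$ and the weighting on $I_{G,k}(k-1)$ is uniform (so the edge weights on $G_S$ are uniform as well), $\Emm_S$ coincides with the random walk matrix $\Dee_{H_S}^{-1}\Aye_{H_S}$ of the unweighted graph $H_S = \overline{G[V_S]}$. Writing $m := |V_S|$ and $B := G[V_S]$, the complement relation gives the identity $\Aye_{H_S} = \one\one^\top - \Ide - \Aye_B$.

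The key step is a variational bound on $\lambda_2(\Emm_S)$. Since $\Emm_S$ has the same eigenvalues as the symmetric matrix $\Emm := \Dee_{H_S}^{-1/2}\Aye_{H_S}\Dee_{H_S}^{-1/2}$, applying Courant--Fischer to the $(m-1)$-dimensional test subspace $u^\perp$ for the (deliberately suboptimal) choice $u := \Dee_{H_S}^{-1/2}\one / \|\Dee_{H_S}^{-1/2}\one\|$ and substituting $w := \Dee_{H_S}^{-1/2} v$ yields
\[ \lambda_2(\Emm_S) \le \max_{w \neq 0,\, \one^\top w = 0}\frac{w^\top \Aye_{H_S} w}{w^\top \Dee_{H_S} w}.\]
The point of picking $u = \Dee_{H_S}^{-1/2}\one$ rather than the true top eigenvector $\Dee_{H_S}^{1/2}\one$ is that it converts the natural $\Dee_{H_S}$-orthogonality into the cleaner constraint $\one^\top w = 0$, which kills the rank-one term $\one\one^\top$ in $\Aye_{H_S}$. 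This gives $w^\top\Aye_{H_S}w = -\|w\|^2 - w^\top \Aye_B w \le (|\lambda_{\min}(\Aye_B)|-1)\|w\|^2$; Cauchy interlacing (\cref{thm:cauchy-int}) applied to the principal submatrix $\Aye_B$ of $\Aye_G$ gives $|\lambda_{\min}(\Aye_B)| \le |\lambda_{\min}(\Aye_G)|$, and the minimum-degree bound $d_{H_S}(v) \ge m-1-\Delta$ gives $w^\top \Dee_{H_S} w \ge (m-1-\Delta)\|w\|^2$.

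Finally, the volume estimate $m \ge n - (k-2)(\Delta+1)$ (from $|N_G[S]| \le |S|(\Delta+1)$) combined with the hypothesis $k \le n/(\Delta + |\lambda_{\min}(\Aye_G)|)$ gives $m - 1 - \Delta \ge k\,(|\lambda_{\min}(\Aye_G)|-1) + \Delta + 1$, and plugging in yields
\[ \lambda_2(\Emm_S) \le \frac{|\lambda_{\min}(\Aye_G)|-1}{m-1-\Delta} \le \frac{1}{k}.\]
I do not expect a genuine obstacle here: the only nonroutine idea is the choice of test subspace in Courant--Fischer, which trades tightness for a clean $\one$-orthogonality constraint, and the slack built into the hypothesis $k \le n/(\Delta + |\lambda_{\min}(\Aye_G)|)$ is precisely what makes this trade-off affordable.
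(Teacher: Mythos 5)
Your proof is correct and follows essentially the same route as the paper: the same decomposition $\Aye_{H_S} = \one\one^\top - \Ide - \Aye_{G[V_S]}$, Cauchy interlacing to pass from $\lambda_{\min}(\Aye_{G[V_S]})$ to $\lambda_{\min}(\Aye_G)$, the same minimum-degree and $|V_S|$ estimates, and the same final arithmetic. The only difference is cosmetic: where you remove the rank-one term and the degree normalization in one stroke via Courant--Fischer with the test vector $\Dee_{H_S}^{-1/2}\one$, the paper first applies Weyl interlacing to discard $\one\one^\top$ and then a separate variational scaling step to pull out $\norm{\Dee_{H_S}^{-1}}$, and both yield the identical bound $\lambda_2(\Emm_S) \le (|\lambda_{\min}(\Aye_G)|-1)/(|V|-(\Delta+1)(k-1)) \le 1/k$.
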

\begin{proof}[Proof of \cref{lem:IS-top-link}]
Recall that for $S$ with $|S|=k-2$,
the random walk matrix $\Emm_S$ of $G_S$ is the same as the random walk matrix of $H_S$, and so we will focus on the latter.
Let $\Dee_H$ be diagonal degree matrix of $H_S$.
As argued above, the random walk matrix $\Emm_S$ of $G_S$ is equal to 
$\Emm_S = \Dee_H^{-1} \Aye_H$.
We can write the adjacency matrix $\Aye_H$ of $H_S$ as
\[
\Aye_H = \one \one^\top- \Ide- \Aye_{G[\conj{N[S]}]},
\]
where $\Aye_{G[\conj{N[S]}]}$ is the adjacency matrix of $G[\conj{N[S]}]$.
By Weyl's interlacing theorem,
\begin{align}
        \lambda_2(\Emm_S) &~\le~\lambda_2(\Dee_H^{-1/2} \one\one^\top
        \Dee_H^{-1/2}) + \lambda_1(\Dee_H^{-1/2}(-\Aye_{G[\conj{N[S]}]} - \Ide)
        \Dee_H^{-1/2}),&&\textrm{ (by Weyl Interlacing \cref{thm:weyl-int})}\notag\\
        &~=~\lambda_1(\Dee_H^{-1/2}(-\Aye_{G[\conj{N[S]}]} - \Ide) 
        \Dee_H^{-1/2}), &&\textrm{ (since $\Dee_H^{-1/2} \one\one^\top \Dee_H^{-1/2}$
        is of rank } 1)\notag
\\
        &~\le~\norm*{\Dee_H^{-1}} \cdot \lambda_1(-\Aye_{G[\conj{N[S]}]} -
        \Ide),&&\textrm{ (by the \ref{eq:var-form})}\label{eq:refer-later}
\\
        &~\le~\norm*{\Dee_H^{-1}} \cdot (|\lambda_{\min}(\Aye_{G[\conj{N[S]}]})| -
        1),&&\textrm{ (by using $\lambda_1(-\Aye_{G[\conj{N[S]}]}) = -\lambda_{\min}(\Aye_{G[\conj{N[S]}]})$)}
        \notag
\\
        &~\le~\norm*{\Dee_H^{-1}} \cdot (|\lambda_{\min}(\Aye_{G})| -
        1),&&\textrm{ (by using Cauchy-Interlacing \cref{thm:cauchy-int})}\notag
\end{align}
For \cref{eq:refer-later}, we have used the \ref{eq:var-form}
$\lambda_1(\Why) = \max\set*{\langle \eff, \Why \eff\rangle : \eff \in \RR^V, \norm{\eff}= 1}$ in the following way: 
Let $\Why = -\Aye_L - \Ide$ and $\gee$ be an unit top-eigenvecor of $\Dee_H^{-1/2} \Why \Dee_H^{-1/2}$, i.e.~$\norm{\gee} = 1$ and $\langle \gee, \Dee_H^{-1/2} \Why \Dee_H^{-1/2} \gee\rangle = \lambda_1(\Dee_H^{-1/2}\Why \Dee_H^{-1/2})$. 
Then, 
\[ \lambda_1(\Why) 
\ge \left\langle \frac{\Dee_H^{-1/2} \gee}{\norm{\Dee_H^{-1/2} \gee}}, \Why
        \frac{\Dee_H^{-1/2} \gee}{\norm{\Dee_H^{-1/2} \gee}}\right\rangle
= \frac{\lambda_1(\Dee_H^{-1/2} \Why\Dee_H^{-1/2}) }{\norm{\Dee_H^{-1/2}}^2}
\ge \frac{\lambda_1(\Dee_H^{-1/2} \Why\Dee_H^{-1/2})}{\norm{\Dee_H^{-1}}}.
\]

It remains to bound $\norm{\Dee_H^{-1}} = (\min_v \deg_{H_S}(v))^{-1}$.
As $H_S = \conj{G[\conj{N[S]}]} = \conj{G[V-N[S]]}$,
\[ 
\deg_{H_S}(v) 
= |V| - |N[S]| -(\deg_{G[\conj{N[S]}]}(v) + 1)
\geq |V| - (\Delta+1)(|S|+1),
\]
where the last inequality uses that $|N[S]| \leq |S| \cdot (\Delta+1)$ and $\deg_{G[\conj{N[S]}]}(v) \leq \deg_G(v) \leq \Delta$.
    Therefore, using our bound $\lambda_2(\Emm_S) \le \norm*{\Dee_H^{-1}} \cdot (|\lambda_{\min}(\Aye_{G})| - 1)$, we obtain
\[ 
\lambda_2(\Emm_S) 
\le \frac{|\lambda_{\min}(\Aye_G)| - 1}{|V| - (\Delta + 1) \cdot (|S| + 1)}
= \frac{|\lambda_{\min}(\Aye_G)| - 1}{|V| - (\Delta + 1) \cdot (k - 1)},
\]
where we use $|S| = k- 2$. 
Finally, plugging in the assumption
\[
k \leq \frac{|V|}{\Delta + |\lambda_{\min}(\Aye_G)|}
\quad \implies \quad 
\lambda_2(\Emm_S) \le \frac{1}{k}.
\]
\end{proof}

\subsection{Matroid Intersection}\label{ss:mint-main}

A matroid $M = (E, \Ii)$ on the ground set $E$ with the set of independent sets
${\Ii} \subset 2^E$ is a combinatorial object satisfying the following properties:
\begin{itemize}
    \item (containment property) if $S \in \Ii$ and $T \subset S$, then $T \in \Ii$,
    \item (extension property) if $S, T \in \Ii$ such that $|S| > |T|$ then
        there is some $x \in S\backslash T$ such that $\set{x} \cup T \in \Ii$.
\end{itemize}
A partition matroid is the special case where the ground set $E$ is partitioned into disjoint blocks $B_1, \ldots, B_l \subseteq E$ with parameters $0 \leq d_i \leq |B_i|$ for $1 \leq i \leq l$,
and a subset $S$ is in $\Ii$ if and only if $|S \cap B_i| \leq d_i$ for $1 \leq i \leq l$.

The intersection of two matroids $M_1 = (E, \Ii_1)$ and $M_2 = (E, \Ii)$ over the same ground set $E$ can be used to formulate various interesting combinatorial optimization problems~\cite{Schrijver03}.
We are interested in the problem of sampling a uniform random common
independent set of size $k$, i.e.~a random subset $F \in \Ii_1 \cap \Ii_2$ with $|F|=k$.
We will analyze a natural Markov chain for the problem by analying the down-up walk of a corresponding simplicial complex.

Define the $(k-1)$-dimensional matroid intersection complex $C_{M_1, M_2, k}$ of $M_1 = (E, \Ii_1)$ and $M_2 = (E, \Ii_2)$ as
\[ 
C_{M_1, M_2, k} = \set*{ S \in \Ii_1 \cap \Ii_2 : |S| \le k},
\]
the complex consisting of all common independent sets of both matroids
containing at most $k$ elements. 
We endow $C_{M_1, M_2, k}(k-1)$ with the uniform distribution $\Pi_{k-1}$ on the common independent sets $S \in \Ii_1 \cap \Ii_2$ with $|S| = k$. 
We write $C_{M_1, M_2, k}$ for the weighted simplicial complex $(C_{M_1, M_2, k}, \Pi_{k-1})$.

The $(k-1)$-th down-up walk $\DownW_{k-1}$ on $C_{M_1, M_2, k}$ corresponds to the following natural Markov chain to sample common independent sets of size $k$.
Initially, the random walk starts from a common independent set $S_1$ of size $k$. 
In each step $t \geq 1$, we choose a uniform random element $i \in S_t$ and delete $i$ from $S_t$, and set $S_{t+1}$ to be a uniform random common independent set of size $k$ that contains $S_t\setminus \{i\}$.
The stationary distribution of $\DownW_{k-1}$ is the uniform distribution $\Pi_{k-1}$; see \cref{eq:stationary}.

The main result in this subsection is the following upper bound on the second eigenvalue of $\DownW_{k-1}$.

\matroid*

\subsubsection{Proof of \cref{t:matroid}} \label{sss:proof}

The plan is to use \cref{cor:main-cooked} to prove \cref{t:matroid}.
To apply \cref{cor:main-cooked}, we need to prove that:
\begin{enumerate}
\item $C_{M_1, M_2, k}$ is a pure simplicial complex.  
This is a simple proof in \cref{prop:mint-pure}.
\item For each $S \in C_{M_1, M_2, k}$ with $|S| \leq k-2$, 
the random walk matrix $\Emm_S$ of the graph $G_S$ of the link $(C_{M_1,M_2,k})_S$ satisfies $\lambda_2(\Emm_S) < 1$.
This is proved in \cref{lem:connected}, showing that the underlying graph of $G_S$ is the complement of the line graph of a bipartite graph.
\item For each $S \in C_{M_1, M_2, k}$ with $|S| = k-2$,
the random walk matrix $\Emm_S$ of the graph $G_S$ satisfies $\lambda_2(\Emm_S) \leq 1/k$.
This is proved in \cref{lem:top-bound-mint}, using the fact that the minimum eigenvalue of the adjacency matrix of the line graph of a simple graph is at least $-2$.
\end{enumerate}
Assuming the three items are proven, \cref{t:matroid} follows immediately from \cref{cor:main-cooked}.

It remains to prove the three items.
We will prove the second item in \cref{ss:connected}
and the third item in \cref{ss:top-link}.
We note that the first two items hold for any two matroids, and we only use the additional assumptions for the third item.
The following is a simple proof for the first item.

\begin{claim}\label{prop:mint-pure}
Let $M_1 = (E, \Ii_1)$ and $M_2 = (E, \Ii_2)$ be two matroids with a common independent set $T \in \Ii_1 \cap \Ii_2$ of size $|T|=r$. 
Any common independent set $S \in \Ii_1 \cap \Ii_2$ with $|S| < r/2$ is
 contained in a larger common independent set. 
In particular, this implies that the simplicial complex $C_{M_1, M_2, k}$ is a pure simplicial complex as long as $k \le r/2$.
\end{claim}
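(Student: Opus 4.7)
The plan is to establish the extension statement first; purity of $C_{M_1, M_2, k}$ for $k \le r/2$ will then follow at once. Given $S \in \Ii_1 \cap \Ii_2$ with $|S| < r/2$, for each $i \in \{1,2\}$ define
\[ E_i := \{ e \in T \setminus S : S \cup \{e\} \in \Ii_i\}, \]
the set of elements of $T \setminus S$ that extend $S$ inside $M_i$. The goal is to show $E_1 \cap E_2 \neq \emptyset$, which immediately yields the desired augmentation $S \cup \{e\}$.

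The first step is to argue that $|E_i| \geq r - |S|$ for each $i \in \{1,2\}$. This is a standard consequence of the matroid exchange axioms: the rank of $S$ in $M_i$ equals $|S|$, and $T$ is independent in $M_i$, so at most $|S|$ elements of $T$ can lie in the closure of $S$ in $M_i$. Equivalently, iteratively applying the extension property to $S$ against $T$ produces at least $|T| - |S| = r - |S|$ distinct elements of $T \setminus S$ each of which individually extends $S$ within $\Ii_i$. Hence $|E_i| \geq r - |S|$.

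The second step is a one-line inclusion--exclusion inside the universe $T \setminus S$ (which has size at most $r$):
\[ |E_1 \cap E_2| \;\geq\; |E_1| + |E_2| - |T \setminus S| \;\geq\; 2(r - |S|) - r \;=\; r - 2|S| \;>\; 0, \]
where the final strict inequality uses the hypothesis $|S| < r/2$. Thus there exists $e \in E_1 \cap E_2$, and $S \cup \{e\} \in \Ii_1 \cap \Ii_2$ is a strictly larger common independent set, proving the extension claim.

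For the purity statement, suppose $k \le r/2$ and let $S$ be any face of $C_{M_1, M_2, k}$ with $|S| < k$. Then $|S| < k \le r/2$, so the extension above supplies some $e \in E \setminus S$ with $S \cup \{e\} \in \Ii_1 \cap \Ii_2$; since $|S \cup \{e\}| = |S|+1 \le k$, this larger set still lies in $C_{M_1, M_2, k}$, showing that $S$ is not maximal. Therefore every maximal face has size exactly $k$, so $C_{M_1, M_2, k}$ is pure of dimension $k-1$. The whole argument is elementary; the only real content is the rank estimate $|E_i| \geq r - |S|$, which is folklore, so there is no genuine obstacle.
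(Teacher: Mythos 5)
Your proof is correct and follows essentially the same route as the paper: both use the matroid extension property to find, for each $i$, at least $r-|S|$ elements of $T$ that individually extend $S$ in $\Ii_i$, and then conclude by a counting/pigeonhole argument inside $T$ (using $|S| < r/2$) that the two sets of extenders intersect, with purity following immediately. Your write-up merely spells out the lower bound $|E_i| \geq r-|S|$ in more detail than the paper does, but there is no substantive difference.
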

\begin{proof}
By the extension property of matroids, 
there is a subset $T_1 \subset T$ with $|T_1| \geq r-|S|$ such that $S \cup \set{x}$ is an independent set in $\Ii_1$ for any $x \in T_1$.
Similarly, there is a subset $T_2 \subset T$ with $|T_2| \geq r-|S|$ such that $S \cup \set{y}$ is an independent set in $\Ii_2$ for any $y \in T_2$.
As $|S| < r/2$, this implies that $T_1 \cap T_2 \neq \emptyset$, and $S \cup \{z\}$ is a larger independent set that contains $S$ for any $z \in T_1 \cap T_2$.
\end{proof}

\subsubsection{Proof of \cref{lem:connected}} \label{ss:connected}

Let $H_S=(E_S,F_S)$ be the underlying support graph of $G_S$ of the link $(C_{M_1,M_2,k})_S$, that is, $H_S$ is $G_S$ without edge weights.
The vertex set of $H_S$ is $E_S = \{x \in E \mid S \cup \set{x} \in \Ii_1 \cap \Ii_2\}$ 
and the edge set of $H_S$ is $F_S = \{ \set{x,y} \mid x,y \in E {\rm~and~} S \cup \{x,y\} \in \Ii_1 \cap \Ii_2\}$.
Let $\Emm_S$ be the random walk matrix of $G_S$ as defined in \cref{ss:local-spectral}.
It is a basic fact in spectral graph theory that $\lambda_2(\Emm_S) < 1$ if and only if $H_S$ is connected.

We will see that $H_S$ is the complement of the line graph of a bipartite graph $B$.
To define the bipartite graph $B$, we first introduce the matroid partition property (see e.g.~\cite{AnariLOV18}).
The matroid partition property says that there is  a partition ${\cal P} := \{P_1, \ldots, P_p\}$ of the vertex set $E_S$ (i.e. $\bigcup_{i=1}^p P_i = E_S$ and $P_i \cap P_j = \emptyset$ for $i \neq j$) with the property that for any $x,y \in E_S$,
\[
S \cup \set{x,y} \notin \Ii_1 \quad \iff \quad x,y \in P_i {\rm~for~some~} 1 \leq i \leq p.
\]
In words, there is a partition ${\cal P}$ of the vertex set $E_S$ such that two elements $x,y$ in $E_S$ can be added to $S$ to form an independent set in the first matroid $M_1$ if and only if $x,y$ do not belong to the same class of the partition ${\cal P}$.
Similarly, there is a partition ${\cal Q} := \{Q_1, \ldots, Q_q\}$ of the vertex set $E_S$ such that for any two elements $x,y \in E_S$, we have $S \cup \set{x,y} \notin \Ii_2$ if and only if $x,y \in Q_i$ for some $1 \leq i \leq q$.

We use the partitions ${\cal P}$ and ${\cal Q}$ to define the bipartite graph $B$ as follows.
The vertex set of $B$ is $P \sqcup Q$,
where we create a vertex $i \in P$ in $B$ for each $P_i$ in ${\cal P}$,
and we create a vertex $j \in Q$ in $B$ for each $Q_j$ in ${\cal Q}$.
Each edge in $B$ corresponds to an element in $E_S$.
For each element $x \in E_S$, we create the edge $e_x=ij$ in $B$ if and only if $x \in P_i$ and $x \in Q_j$.
Note that the edge $e_x$ for $x \in E_S$ is well-defined by the matroid partition property.
By construction, it should be clear that the biparite graph $B$ satisfies the following important property:
\begin{equation} \label{eq:iff}
e_x {\rm~and~} e_y {\rm~do~not~share~a~vertex~in~} B
\quad \iff \quad
S \cup \{x,y\} \in \Ii_1 \cap \Ii_2 
\quad \iff \quad
\{x,y\} \in F_S.
\end{equation}
Recall that the line graph $L(B)$ of a graph $B$ is defined as follows:
the vertex set of $L(B)$ is the edge set of $B$,
and two vertices in $L(B)$ have an edge if and only if the corresponding edges in $B$ share an endpoint.
Let $\overline{L(B)}$ be the complement of $L(B)$ where $\overline{L(B)}$ and $L(B)$ have the same vertex set and two vertices in $\overline{L(B)}$ have an edge if and only if the corresponding vertices in $L(B)$ do not have an edge.
Then, we see from \cref{eq:iff} that
\begin{equation} \label{eq:H}
H_S = \overline{L(B)}.
\end{equation}
Using the bipartite graph $B$, it is easy to show the second item in \cref{sss:proof}.

\begin{lemma} \label{lem:connected}
Let $M_1 = (E, \Ii_1)$ and $M_2 = (E, \Ii_2)$ be two matroids with a common independent set $T \in \Ii_1 \cap \Ii_2$ of size $|T|=r$. 
Suppose $k < r/2 - 1$.
For any $S \in C_{M_1,M_2,k}$ with $|S| \leq k-2$,
the random walk matrix $\Emm_S$ of the graph $G_S$ of the link $(C_{M_1,M_2,k})_S$ satisfies $\lambda_2(\Emm_S) < 1$. 
\end{lemma}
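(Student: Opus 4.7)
The plan is to prove the stronger combinatorial statement that the underlying support graph $H_S = \overline{L(B)}$ is connected, which is equivalent to $\lambda_2(\Emm_S) < 1$. My strategy is to exhibit a large clique in $H_S$ to which every other vertex attaches by a single edge.

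First I would use the matroid extension property to build a ``core'' inside $T$. Iteratively augmenting $S$ using elements of $T$ in each matroid separately yields subsets $T_1, T_2 \subseteq T \setminus S$, each of size $r - |S|$, with $S \cup T_1 \in \Ii_1$ and $S \cup T_2 \in \Ii_2$. Set $T_0 := T_1 \cap T_2$, so by inclusion-exclusion $|T_0| \geq r - 2|S|$. The containment property forces $S \cup T_0 \in \Ii_1 \cap \Ii_2$, hence $S \cup \{z_1, z_2\} \in \Ii_1 \cap \Ii_2$ for every pair $z_1, z_2 \in T_0$; this shows $T_0 \subseteq E_S$ is a clique in $H_S$.

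Next I would show that every $x \in E_S$ either lies in $T_0$ or is adjacent in $H_S$ to some vertex of $T_0$. For the nontrivial case $x \notin T_0$, apply the extension property to $S \cup \{x\} \in \Ii_1$, augmenting inside $S \cup T_0 \in \Ii_1$ (whose size $|S| + |T_0|$ exceeds $|S \cup \{x\}| = |S| + 1$ under the hypothesis $|S| < (r-1)/2$). The augmentation produces $X \subseteq T_0$ of size $|T_0| - 1$ with $S \cup \{x\} \cup X \in \Ii_1$. Analogously the same argument in $M_2$ produces $Y \subseteq T_0$ of size $|T_0| - 1$ with $S \cup \{x\} \cup Y \in \Ii_2$. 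Since both $X$ and $Y$ are ``almost all of'' $T_0$, inclusion-exclusion yields
\[ |X \cap Y| \geq 2(|T_0| - 1) - |T_0| = |T_0| - 2 \geq r - 2|S| - 2, \]
which is strictly positive under the hypothesis, since $|S| \leq k - 2$ and $k < r/2 - 1$ give $r - 2|S| - 2 \geq r - 2k + 2 > 4$. Any $z \in X \cap Y$ lies in $T_0 \subseteq E_S$ and satisfies $S \cup \{x, z\} \in \Ii_1 \cap \Ii_2$, giving the edge $\{x, z\} \in F_S$ from $x$ into $T_0$.

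The main subtlety is that both augmentations must be carried out inside $T_0$ rather than inside all of $T$. This keeps both $X$ and $Y$ within a common ambient set of size at most $r - |S|$, so the inclusion-exclusion slack becomes linear in $|T_0|$ itself. Extending $S \cup \{x\}$ using arbitrary elements of $T$ would instead produce an inclusion-exclusion against $|T| = r$ and force a strictly stronger hypothesis of the form $k \lesssim r/4$, which would be insufficient for the lemma as stated.
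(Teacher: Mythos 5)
Your proof is correct, but it takes a genuinely different route from the paper. The paper first sets up the bipartite graph $B$ via the matroid partition (parallel-class) property, so that $H_S = \overline{L(B)}$; it then applies \cref{prop:mint-pure} repeatedly to extend $S$ by four elements (this is where $|S| \leq k-2 < r/2-3$ is used), obtaining four pairwise vertex-disjoint edges of $B$, and observes that for any two non-adjacent vertices $u,v$ of $H_S$ the edges $e_u, e_v$ span at most three vertices of $B$ and hence miss one of the four disjoint edges, giving a common neighbour; so $H_S$ has diameter at most $2$. You instead argue purely by matroid exchange: you build the core $T_0 \subseteq T\setminus S$ with $S\cup T_0\in\Ii_1\cap\Ii_2$ and $|T_0|\ge r-2|S|$, note that $T_0$ is a clique of $H_S$, and then for each vertex $x\notin T_0$ augment $S\cup\{x\}$ inside $S\cup T_0$ in each matroid separately to get $X,Y\subseteq T_0$ of size $|T_0|-1$ whose intersection is nonempty, producing an edge from $x$ into $T_0$; this gives a dominating clique and hence connectivity (diameter at most $3$). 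Your augmentation steps, the cardinality bookkeeping, and the use of the hypothesis $k<r/2-1$ all check out, and your closing remark about why the augmentation must be carried out inside $T_0$ rather than all of $T$ is a fair account of the quantitative trade-off. What each approach buys: yours is self-contained at the level of the exchange axiom and does not need the line-graph/partition machinery at all (your opening mention of $\overline{L(B)}$ is never actually used), whereas the paper's argument reuses the bipartite graph $B$ that it needs anyway for \cref{lem:top-bound-mint} and yields the slightly stronger diameter-$2$ conclusion; both fit comfortably within the stated hypothesis.
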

\begin{proof}
It is well known that $\lambda_2(\Emm_S) < 1$ if and only if the underlying support graph $H_S$ of $G_S$ is connected, so we focus on proving the latter.
Since $|S| \leq k-2 < r/2 - 3$, 
it follows from \cref{prop:mint-pure} that there are four elements $a,b,c,d \in E$ such that $S \cup \{a,b,c,d\} \in \Ii_1 \cap \Ii_2$.
In the bipartite graph $B$ in \cref{eq:H},
the four elements $a,b,c,d$ correspond to four vertex-disjoint edges $e_a,e_b,e_c,e_d$ in $B$ by \cref{eq:iff}.
To prove that $H_S$ is connected, we prove the stronger claim that every two vertices $u,v \in H_S$ has a path of length at most two.
If $uv$ is an edge in $H_S$, then there is a path of length one.
Suppose $uv$ is not an edge in $H_S$.
Then $e_u$ and $e_v$ shares a vertex in $B$ and so they span at most three vertices in $B$.
This implies that $e_u \cup e_v$ cannot intersect all four (vertex-disjoint) edges $e_a,e_b,e_c,e_d$.
So there must be an edge, say $e_a$, which is vertex-disjoint from both $e_u$ and $e_v$.
Then $u$-$a$-$v$ is path of length two in $H_S$ by \cref{eq:iff}, which completes the proof.
\end{proof}

\subsubsection{Proof of \cref{lem:top-bound-mint}} \label{ss:top-link}

For the third term, we need to prove that for each $S \in C_{M_1,M_2,k}$ with $|S|=k-2$, the random walk matrix $\Emm_S$ of the graph $G_S$ satisfies $\lambda_2(\Emm_S) \leq \frac{1}{k}$.
We use the additional assumptions for the following property.

\begin{claim} \label{c:simple}
If $M_1$ and $M_2$ are two partition matroids and there are no two elements $x,y$ such that $x,y$ belongs to the same block in $M_1$ and also the same block in $M_2$, then \cref{eq:H} holds with the property that the bipartite graph $B$ is a simple graph.
\end{claim}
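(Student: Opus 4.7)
The plan is to argue by contradiction. Suppose $B$ fails to be simple; then there must exist distinct $x \neq y \in E_S$ producing the same edge $e_x = e_y$ in $B$. By the construction of $B$, this means $x$ and $y$ lie in the same class $P_i$ of the partition $\mathcal{P}$ and simultaneously in the same class $Q_j$ of the partition $\mathcal{Q}$. By the defining property of the matroid partition, these two memberships translate to $S \cup \{x, y\} \notin \Ii_1$ and $S \cup \{x, y\} \notin \Ii_2$.

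The heart of the proof is a short case analysis tailored to partition matroids. Let $M_1$ have blocks $B_1, \ldots, B_l$ with capacities $d_1, \ldots, d_l$. Because $x, y \in E_S$, both $S \cup \{x\}$ and $S \cup \{y\}$ already satisfy every block constraint of $M_1$. Consequently, if $x$ and $y$ were to lie in different blocks of $M_1$, then for every block $B$ we would have $|(S \cup \{x, y\}) \cap B| \le \max\{|(S \cup \{x\}) \cap B|, |(S \cup \{y\}) \cap B|\} \le d_B$, which would place $S \cup \{x, y\}$ back in $\Ii_1$, contradicting what we derived above. Hence $x$ and $y$ must share a common block of $M_1$. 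The identical argument applied to $M_2$ shows that $x$ and $y$ must also share a common block of $M_2$.

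This contradicts the hypothesis that no two elements lie in a common block in both $M_1$ and $M_2$, establishing that $B$ is simple and hence that $H_S$ is the complement of the line graph of a simple bipartite graph, as required in \cref{eq:H}. I do not anticipate any substantive obstacle: once the abstract partition property is translated into the concrete block structure of a partition matroid via the above one-line calculation, the no-shared-block hypothesis is precisely tuned to rule out the only configuration that would produce a multi-edge in $B$.
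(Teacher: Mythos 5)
Your proof is correct: a parallel edge in $B$ would mean two distinct $x,y\in E_S$ with $S\cup\{x,y\}\notin \Ii_1$ and $S\cup\{x,y\}\notin \Ii_2$, and your block-capacity calculation shows that for a partition matroid this forces $x$ and $y$ into a common block of each matroid, contradicting the hypothesis (loops are impossible anyway since $B$ is bipartite). The paper states \cref{c:simple} without spelling out a proof, and your argument is precisely the omitted justification, so it matches the intended approach.
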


Observe that $G_S$ is an unweighted graph for $S$ with $|S|=k-2$ when the distribution on $C_{M_1,M_2,k}(k-1)$ is the uniform distribution (i.e. the distribution on the common independent sets of size $k$ is the uniform distribution).
This is because when $|S|=k-2$, for any $x,y \in E$, either $S \cup \{x,y\}$ is contained in exactly one or zero set of size $k$ in $C_{M_1,M_2,k}$, and each set of size $k$ is assigned the same weight in the uniform distribution (more formally see \cref{eq:link-def} for the definition of the weight).
Therefore, $G_S$ is simply a scaled version of $H_S$, and the random walk matrix $\Emm_S$ of $G_S$ is the same as the random walk matrix of $H_S$.

\begin{fact}\label{fac:lg-eigbd}
Let $G = (V, E)$ be any simple graph and $\Aye_{L(G)}$ be the adjacency matrix of the line graph of $G$.
It holds that $\lambda_{\min}(\Aye_{L(G)}) \ge -2$.
\end{fact}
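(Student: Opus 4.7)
The plan is to exhibit $\Aye_{L(G)}$ as a positive semidefinite matrix shifted down by $2\Ide$, from which the lower bound on eigenvalues follows immediately. The natural tool is the (unsigned) vertex-edge incidence matrix $\Bee \in \set{0,1}^{V \times E}$ defined by $\Bee(v, e) = 1$ if $v$ is an endpoint of $e$, and $\Bee(v, e) = 0$ otherwise. The whole proof rests on computing the Gram matrix $\Bee^\top \Bee \in \RR^{E \times E}$ entrywise and recognizing $\Aye_{L(G)}$ inside it.

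First I would compute, for any edges $e, f \in E$,
\[
(\Bee^\top \Bee)(e, f) \;=\; \sum_{v \in V} \Bee(v, e) \cdot \Bee(v, f) \;=\; |e \cap f|.
\]
The simplicity of $G$ is used precisely here: every edge has two distinct endpoints (no self-loops), so $(\Bee^\top \Bee)(e, e) = 2$; and two distinct edges share at most one vertex (no parallel edges), so for $e \neq f$ we get $(\Bee^\top \Bee)(e, f) \in \set{0, 1}$, with value $1$ exactly when $e, f$ share an endpoint, i.e.~when they are adjacent in the line graph $L(G)$. This yields the identity
\[
\Bee^\top \Bee \;=\; 2 \Ide + \Aye_{L(G)}.
\]

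Finally, since $\Bee^\top \Bee$ is positive semidefinite (as a Gram matrix), all of its eigenvalues are nonnegative, and therefore
\[
\lambda_{\min}(\Aye_{L(G)}) \;=\; \lambda_{\min}(\Bee^\top \Bee) - 2 \;\ge\; -2.
\]
I expect no real obstacle in this argument; the only delicate point is the role of simplicity, which is what pins down the off-diagonal entries of $\Bee^\top \Bee$ to be bounded by $1$ and the diagonal entries to be exactly $2$. Without this hypothesis (e.g.~in the presence of parallel edges or self-loops), the identity above would fail and the shift $-2$ would be incorrect.
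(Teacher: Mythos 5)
Your proof is correct and follows the same route as the paper: both identify $2\Ide + \Aye_{L(G)}$ with the Gram matrix of the incidence matrix (your $\Bee^\top\Bee$ is exactly the paper's $\Bee\Bee^\top$ up to transposing the convention) and conclude from positive semidefiniteness. Your entrywise verification of the identity, including where simplicity is used, is a fine elaboration of the step the paper states without detail.
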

\begin{proof}
Define $\Bee \in \RR^{E \times V}$ to be the edge-vertex incidence matrix of $G = (V, E)$, i.e.~$\Bee(e, v) = 1[v \in e]$.
Observe that
\[ 
2\Ide + \Aye_{L(G)} = \Bee\Bee^\top 
\quad \implies \quad
\lambda_{\min}\left(2\Ide + \Aye_{L(G)}\right) = \lambda_{\min}\left(\Bee\Bee^\top\right) \ge 0,
\]
as $\Bee\Bee^\top$ is a positive semidefinite matrix.
This implies that $\lambda_{\min}(\Aye_{L(G)}) \ge -2$.
\end{proof}

We are ready to bound the second eigenvalue of $\Emm_S$.
We need the stronger assumption that $k \leq \frac{r}{3}$ in the proof of the following lemma.

\begin{restatable}{lemma}{topboundmint}\label{lem:top-bound-mint}
Let $M_1 = (E, \Ii_1)$ and $M_2 = (E, \Ii_2)$ be two partition matroids with a common independent set $T \in \Ii_1 \cap \Ii_2$ of size $|T|=r$ and there are no two elements belonging to the same block in both matroids. 
Suppose $k \leq r/3$.
For any $S \in C_{M_1,M_2,k}$ with $|S| = k-2$,
the random walk matrix $\Emm_S$ of the graph $G_S$ of the link $(C_{M_1,M_2,k})_S$ satisfies $\lambda_2(\Emm_S) \leq 1/k$. 
\end{restatable}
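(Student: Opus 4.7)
The plan is to closely mirror the proof of \cref{lem:IS-top-link}, now exploiting the structural description $H_S = \overline{L(B)}$ from \cref{c:simple} in place of the adjacency description used there, together with \cref{fac:lg-eigbd}. Since $|S| = k-2$ and $\Pi_{k-1}$ is uniform on $C_{M_1,M_2,k}(k-1)$, the graph $G_S$ is unweighted, so $\Emm_S$ coincides with the random walk matrix of $H_S$. Thus it suffices to bound $\lambda_2(\Dee_H^{-1/2}\Aye_H\Dee_H^{-1/2})$, where $\Aye_H$ and $\Dee_H$ are the adjacency and diagonal degree matrices of $H_S$.

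First I would perform the linear-algebraic step. Since $H_S = \overline{L(B)}$ with $B$ a simple bipartite graph, its adjacency matrix splits as
\[
    \Aye_H \;=\; \one\one^\top - \Ide - \Aye_{L(B)}.
\]
Applying Weyl's inequality (\cref{thm:weyl-int}) to $\Dee_H^{-1/2}\Aye_H\Dee_H^{-1/2}$ and observing that $\Dee_H^{-1/2}\one\one^\top\Dee_H^{-1/2}$ is rank one (so its $\lambda_2$ vanishes), exactly as in \cref{eq:refer-later}, we get
\[
    \lambda_2(\Emm_S) \;\le\; \bigl\|\Dee_H^{-1}\bigr\| \cdot \lambda_1\bigl(-\Ide - \Aye_{L(B)}\bigr) \;=\; \bigl\|\Dee_H^{-1}\bigr\|\cdot\bigl(-1 - \lambda_{\min}(\Aye_{L(B)})\bigr) \;\le\; \frac{1}{\min_v \deg_{H_S}(v)},
\]
where the final inequality uses $\lambda_{\min}(\Aye_{L(B)})\ge -2$ from \cref{fac:lg-eigbd} (this is precisely where we need $B$ to be simple, i.e.\ the ``no two elements in the same block of both matroids'' hypothesis).

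The main task, and the one real obstacle, is the combinatorial lower bound $\min_v \deg_{H_S}(v) \ge k$. I would prove this by an augmentation argument generalizing \cref{prop:mint-pure}. Fix $v \in E_S$, so that $S \cup \{v\} \in \Ii_1 \cap \Ii_2$ has size $k-1$. Applying the matroid exchange property repeatedly, first in $M_1$ and then in $M_2$, to the common independent set $T$ of size $r$, yields subsets $T_1^v, T_2^v \subseteq T \setminus (S \cup \{v\})$, each of size at least $r - (k-1)$, such that $(S \cup \{v\}) \cup T_i^v \in \Ii_i$ for $i = 1,2$. By inclusion--exclusion inside $T$,
\[
    |T_1^v \cap T_2^v| \;\ge\; 2(r - k + 1) - r \;=\; r - 2k + 2.
\]
For any $t \in T_1^v \cap T_2^v$ we have $S \cup \{v,t\} \in \Ii_1 \cap \Ii_2$, so in particular $t \in E_S$ and $\{v,t\}$ is an edge of $H_S$; hence $\deg_{H_S}(v) \ge r - 2k + 2$. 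Under the hypothesis $k \le r/3$ this is at least $k + 2$, so the minimum degree exceeds $k$ and the desired bound $\lambda_2(\Emm_S) \le 1/k$ follows.

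The linear-algebraic portion is a direct transcription of the argument in \cref{ss:IS-top-link}, so the only place where genuine work is required is the extension-property count above; everything else is routine, and the final assembly is just chaining the two displayed inequalities.
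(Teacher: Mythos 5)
Your proposal is correct and follows essentially the same route as the paper's own proof: the same decomposition $\Aye_H = \one\one^\top - \Ide - \Aye_{L(B)}$ with $B$ simple, the same Weyl-interlacing plus $\lambda_{\min}(\Aye_{L(B)}) \ge -2$ step yielding $\lambda_2(\Emm_S) \le \norm{\Dee_H^{-1}}$, and the same extension-property count giving $\min_v \deg_{H_S}(v) \ge r-2k+2 \ge k$ when $k \le r/3$. The only (harmless) difference is that you justify the count of addable elements of $T$ via repeated augmentation to whole sets $T_1^v, T_2^v$ and downward closure, which is a slightly more explicit version of the step the paper asserts directly.
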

\begin{proof}
Recall that for $S$ with $|S|=k-2$,
the random walk matrix $\Emm_S$ of $G_S$ is the same as the random walk matrix of $H_S$, and so we will focus on the latter.
By \cref{c:simple}, $H_S$ is the complement of the line graph of a simple graph,
and so we can write the adjacency matrix $\Aye_H$ of $H_S$ as
\[
\Aye_H = \one \one^\top- \Ide- \Aye_L, 
\]
where $\Aye_L$ is the adjacency matrix of the line graph $L(B)$ in \cref{eq:H}.
Let $\Dee_H$ be diagonal degree matrix of $H_S$.
As argued above, the random walk matrix $\Emm_S$ of $G_S$ is equal to 
$\Emm_S = \Dee_H^{-1} \Aye_H$.
Using that $\Dee_H^{-1}$ and $\Dee_H^{-1/2}\Aye_H\Dee_H^{-1/2}$ are similar matrices and have the same spectrum, we have
\begin{align*}
\lambda_2(\Emm_S) &~=~ \lambda_2(\Dee_H^{-1/2} \Aye_H \Dee_H^{-1/2}),
\\
&~=~\lambda_2(\Dee_H^{-1/2}(\one\one^\top - \Aye_{L} -\Ide)\Dee_H^{-1/2}),
\\
&~=~\lambda_2\parens*{\Dee_H^{-1/2}\one\one^\top\Dee_H^{-1/2} +
        \Dee_H^{-1/2}(-\Aye_{L} - \Ide)\Dee_H^{-1/2}}.
\end{align*}
Using the Weyl Interlacing \cref{thm:weyl-int},
\begin{align}
        \lambda_2(\Emm_S) &~\le~\lambda_2(\Dee_H^{-1/2} \one\one^\top
        \Dee_H^{-1/2}) + \lambda_1(\Dee_H^{-1/2}(-\Aye_{L} - \Ide)
        \Dee_H^{-1/2}),&&\textrm{ (by Weyl Interlacing \cref{thm:weyl-int})}\notag\\
        &~=~\lambda_1(\Dee_H^{-1/2}(-\Aye_{L} - \Ide) 
        \Dee_H^{-1/2}), &&\textrm{ (since $\Dee_H^{-1/2} \one\one^\top \Dee_H^{-1/2}$
        is of rank } 1)\notag
\\
        &~\le~\norm*{\Dee_H^{-1}} \cdot \lambda_1(-\Aye_{L} -
        \Ide),&&\textrm{ (by the \ref{eq:var-form})}\label{eq:refer-later2}
\\
        &~\le~\norm*{\Dee_H^{-1}} \cdot (|\lambda_{\min}(\Aye_L)| -
        1),&&\textrm{ (by using $\lambda_1(-\Aye_L) = -\lambda_{\min}(\Aye_L)$)
        }\notag
\\
        &~=~\norm*{\Dee_H^{-1}}.&&\textrm{ (by using \cref{fac:lg-eigbd})}\label{eq:last-mint}
\end{align}
For \cref{eq:refer-later2}, we have used the \ref{eq:var-form}
$\lambda_1(\Why) = \max\set*{\langle \eff, \Why \eff\rangle : \eff \in \RR^V, \norm{\eff}= 1}$ in the following way: 
Let $\Why = -\Aye_L - \Ide$ and $\gee$ be an unit top-eigenvecor of $\Dee_H^{-1/2} \Why \Dee_H^{-1/2}$, i.e.~$\norm{\gee} = 1$ and $\langle \gee, \Dee_H^{-1/2} \Why \Dee_H^{-1/2} \gee\rangle = \lambda_1(\Dee_H^{-1/2}\Why \Dee_H^{-1/2})$. 
Then, 
\[ \lambda_1(\Why) 
\ge \left\langle \frac{\Dee_H^{-1/2} \gee}{\norm{\Dee_H^{-1/2} \gee}}, \Why
        \frac{\Dee_H^{-1/2} \gee}{\norm{\Dee_H^{-1/2} \gee}}\right\rangle
= \frac{\lambda_1(\Dee_H^{-1/2} \Why\Dee_H^{-1/2}) }{\norm{\Dee_H^{-1/2}g}^2}
\ge \frac{\lambda_1(\Dee_H^{-1/2} \Why\Dee_H^{-1/2})}{\norm{\Dee_H^{-1}}}.
\]

It remains to bound $\norm{\Dee_H^{-1}} = (\min_x \deg_{H_S}(x))^{-1}$.
By the definition of $H_S$, the degree $\deg_{H_S}(x)$ of $x \in E$ is equal to the number of elements $y \in E \setminus (S \cup \{x\})$ such that $S \cup \{x,y\} \in \Ii_1 \cap \Ii_2$.
By our assumption, there is a common independent set $T \in \Ii_1 \cap \Ii_2$ of size $r$. 
Since $|S \cup \{x\}| = k-1$, by the extension property of the first matroid $M_1$, there are at least $r-k+1$ elements $y \in T$ such that $S \cup \{x,y\} \in \Ii_1$.
Similarly, there are at least $r-k+1$ elements $y \in T$ such that $S \cup \{x,y\} \in \Ii_2$.
Therefore, there are at least $r-2k+2$ elements $y \in T$ such that $S \cup \{x,y\} \in \Ii_1 \cap \Ii_2$.
This implies that for any $x \in V(H_S)$, 
\[
\deg_{H_S}(x) \geq r - 2k + 2 \geq k 
\quad \implies \quad
\lambda_2(\Emm_S) \leq \norm{\Dee_H^{-1}} \leq \frac{1}{k},
\]
where we use the assumption that $k \leq \frac{r}{3}$.
\end{proof}

\section*{Acknowledgements}
We thank Kuikui Liu, Shayan Oveis-Gharan, Akshay Ramachandran, and Hong Zhou for many useful discussions.
\bibliographystyle{alpha}
\bibliography{comp2}

\end{document}